\theoremstyle{plain}
\newtheorem{theorem}{Theorem}[section]
\newtheorem{lem}[theorem]{Lemma}
\theoremstyle{remark}
\newcommand{\indep}{\perp\!\!\!\perp}
\newtheorem{assumption}{Assumption}
\newcommand{\blind}{1} % set to 0 for blinded
\begin{document}

\def\spacingset#1{\renewcommand{\baselinestretch}{#1}\small\normalsize}
\spacingset{1}

%%%%%%%%%%%%%%%%%%%%%%%%%%%%%%%%%%%%%%%%%%%%%%%%%%%%%%%%%%%%%%%%%%%%%%%%%%%%%%

\if1\blind
{
  \title{\bf Where Do the Returns to Schooling Come From? Educational Transitions and Labor Market Payoffs}
  \author{Aleksei Opacic\thanks{aopacic@g.harvard.edu. Many thanks to Clem Aeppli, Kosuke Imai, Ian Lundberg, 
  Michael Zanger-Tishler, Yi Zhang, and Xiang Zhou, 
  as well as members of the Harvard C.A.R.E.S. lab and a reviewer 
  from the Alexander and Diviya Magaro Peer Pre-Review program, 
  for helpful feedback and conversations. Many thanks also to the three anonymous referees and the associate editor for their invaluable suggestions.}\\
  Harvard University}
  \maketitle
} \fi

\if0\blind
{
  \begin{center}
    {\LARGE\bf Where Do the Returns to Schooling Come From? \\ Educational Transitions and Labor Market Payoffs}
  \end{center}
} \fi

%%%%%%%%%%%%%%%%%%%%%%%%%%%%
% Abstract
%%%%%%%%%%%%%%%%%%%%%%%%%%%%

\begin{abstract}
Conventional research on educational effects typically either employs a ``years of schooling'' measure of education, or dichotomizes attainment as a point-in-time treatment. Yet, such a conceptualization of education is misaligned with the sequential process by which individuals make educational transitions. In this paper, I propose a causal mediation framework for the study of educational effects on outcomes such as earnings. The framework considers the effect of a given educational transition as operating indirectly, via progression through subsequent transitions, as well as directly, net of these transitions. I demonstrate that the average treatment effect (ATE) of education can be additively decomposed into mutually exclusive components that capture these direct and indirect effects. The decomposition has several special properties which distinguish it from conventional mediation decompositions of the ATE, properties which facilitate less restrictive identification assumptions as well as identification of all causal paths in the decomposition. An analysis of the returns to high school completion in the NLSY97 cohort suggests that the payoff to a high school degree stems overwhelmingly from its direct labor market returns. Mediation via college attendance, completion and graduate school attendance is small because of individuals' low counterfactual progression rates through these subsequent transitions.
\end{abstract}

\noindent
\textit{Keywords:} causal inference, mediation, sequential ignorability, education

\vfill
\newpage
\spacingset{1.7}

%%%%%%%%%%%%%%%%%%%%%%%%%%%%
% Main Sections
%%%%%%%%%%%%%%%%%%%%%%%%%%%%

\section{Introduction}

One of the most resilient social scientific findings across a range
of national contexts is the strong association between educational
attainment and a variety of life outcomes, including earnings, health,
social capital, and family stability \citep{Hout2012,Chetty2023}.
Conventionally, researchers have taken one of two approaches to evaluating
the social and economic returns to education: the first employs a
``years of schooling'' measure of educational attainment \citep{Angrist1991,Angrist1992,Kane1993,Card1994,Ashenfelter1997,Card1999,Card2001,Angrist2011},
while the second dichotomizes attainment as a point-in-time treatment.
This latter approach has been especially influential in the study
of the impact of postsecondary attainment on earnings, where the treatment
considered is often an indicator for whether an individual has attended,
or graduated from, college \citep{Brand2010,Carneiro2011,Zimmerman2014,Goodman2017,Smith2020,Bleemer2022,Mountjoy2022}.

Despite the important insights this literature has made into establishing
the causal effect of educational attainment on important social and
economic outcomes, extant work has been inattentive to the sequential
process by which people make educational transitions \citep{Mare1980}.\footnote{I use the term ``educational transition'' to refer both to vertical
transitions (e.g. enrollment at a secondary or tertiary institution),
as well as to the attainment of a qualification at a given level (e.g.
high school graduation or BA completion).} At the end of high school, individuals decide whether or not to enroll
in college. Among college enrollees, only 60\% receive a BA within
six years of initial college entry \citep{Snyder2016}, with an even
lower proportion for low-income students and students of color \citep{eller-diprete,Zhou2023}.
Moreover, amidst higher educational expansion in the US, college graduates
must increasingly choose whether to enter the labor market or to enroll
in postgraduate education. Increasingly, therefore, educational attainment
in the US has become a field of multiple levels with sequential transitions,
all of which are independently consequential for individuals\textquoteright{}
labor market outcomes, and therefore of independent scientific interest.

The sequential nature of educational transitions implies that a causal
mediation framework can be employed to study the causal paths by which
education's ``value-added'' occurs. Specifically, we can consider
the first transition in a sequence of educational levels of interest
as a treatment variable, $A$, and subsequent transitions as mediators
that ``transmit'' the effects of the treatment and of prior transitions,
$M_{k}$ ($1\leq k\leq K$). For example, if we are interested in
the total effect of high school completion on earnings, we may ask
to what extent this total effect operates indirectly, through the
effects of college attendance and college completion (putative mediators)
on earnings, or directly, through alternative causal pathways. The
insight that the total causal effect of education can be decomposed
into its direct and indirect effects opens up a range of important
research and policy-oriented questions. For example, tracing to what
extent an early-stage educational intervention boosts outcomes such
as earnings via its promotion of subsequent educational attainment
(its indirect effects), or via earnings directly, would enable policy-makers
to discern what drives the intervention's value and to hone subsequent
policy \citep[e.g.][]{Hurwitz2014,Sullivan2019,Castleman2020,Bird2021,Dynarski2021,Black2023,Turner2024}.
Relatedly, if the early intervention's effects are heterogeneous across
demographic groups, assessing the intervention's direct and indirect
effects could guide researchers to aspects of the educational experience
that either promote or inhibit upward mobility. Nevertheless, prior
empirical approaches are not well-suited to answering these questions:
a ``years-of-schooling'' approach captures the direct effect of each
additional year of schooling, while the dichotomous approach conflates
the direct and indirect effects.\footnote{A further strand of literature, especially prominent in labor economics,
explores labor market returns to horizontal aspects of differentiation
within a given educational level (e.g. college selectivity, as well
as specific colleges) or college types \citep[e.g.][]{Cohodes2014,Goodman2017,Mountjoy2021,Chetty2023,CioccaEller2023}.
While my proposed framework prioritizes the effects of different levels
of education, I discuss in the conclusion how the framework
could be extended to accommodate multivariate mediators.}

In this article, I introduce a causal mediation framework for analyzing
the effects of educational transitions. For the setting of $K$ ($\geq1$)
monotonic  binary or categorical mediators, I develop a general formula that decomposes the
total effect of any level of education into $K+1$ monotonic path-specific
effects (MPSEs): a direct effect net of $K$ subsequent educational
transitions, reflecting the path $A\to Y$, and $K$ mutually exclusive
``continuation'' or gross effects, reflecting the paths $A\to M_{1}\to Y$,
$A\to M_{1}\to M_{2}\to Y$, and $A\rightarrow M_{1}\dots\rightarrow M_{K}\rightarrow Y$.
Most importantly, this decomposition exploits a unique characteristic
of this empirical setting, in which mediators are characterized by ``monotonicity'':
that is, where an individual's potential $k+1$ mediator value is
deterministically zero if that individual's $k$th mediator value
is $0$. The resultant decomposition of the ATE into $K+1$ monotonic
path-specific effects (MPSEs) can be non-parametrically identified
under the assumption of sequential ignorability, which allows for
the effect of each educational level to be confounded by a distinct
set of (observed) intermediate covariates. I introduce several estimation
strategies for my proposed decomposition, including a simple linear
model-based regression-with-residuals (RWR) procedure, and a non-parametric
estimation strategy based on the efficient influence functions (EIFs)
of the target parameters \citep[see][]{Chernozhukov2017,Kennedy2022}.

This study makes three main contributions. Within the realm of education
research, the proposed decomposition draws on important work by \citet{Heckman2018},
who present a similar decomposition of the effect of schooling over
the early life course, but differs in two important respects. First,
I provide nonparametric definitions, identification results, and estimation
strategies for decomposing the total effect of schooling through its
direct and indirect components. Second, my decomposition accommodates
the presence of a distinct set of observed intermediate confounders
for each transition. While one limitation of my approach is that I
assume away the presence of \textit{unobserved} confounders for each
transition, I propose a sensitivity analysis that assesses the robustness
of the results to unobserved confounding, under a set of simplifying
assumptions.

More broadly, my framework speaks to the burgeoning field of causal
mediation analysis in the social, economic, and health sciences, targeted
at assessing the causal pathways by which a treatment affects an outcome.
While prior literature overwhelmingly focuses on single-mediator decompositions
of the ATE, a growing body of work examines mediation estimands in
settings with multiple mediators \citep{Avin2005,Albert2011,VanderWeele2014,Lin2017,Miles2017,Steen2017,Vansteelandt2017,Miles2020}.
In particular, in the case of two causally ordered mediators, \citet{Daniel2015}
show that the ATE can be decomposed into multiple path-specific effects
(PSEs), and outline the assumptions under which some of these effects
are identified. Most recently, \citet{Zhou2022b} generalized this
framework to the case of $K$ mediators, establishing a set of identifiable
PSEs and introducing several regression-based, weighting, and semiparametric
efficient estimators. I extend this literature by examining a special
empirical setting where the mediators are monotonic. Compared with
traditional mediation-based decompositions, monotonicity facilitates
PSE identification under weaker identification assumptions, enables
identification of all of the causal paths in question, as opposed
to just a strict subset of them, and further permits a finer-grained
decomposition. The general decomposition also extends previous literature
on mediation under monotonicity which has focused exclusively on the
case of a single mediator \citep[e.g. ][]{Zhou2022a}. 

Finally, I also contribute to a growing parallel literature that proposes
a range of nonparametric, and semi-parametric efficient estimators
for alternative mediation estimands, based on the efficient influence
functions (EIFs) of the causal quantities of interest \citep[e.g.][]{Miles2020,Farbmacher2022,Zhou2022b},
as well as to closely-related work that proposes semi-parametric
efficient estimators for dynamic treatment effects \citep{Lewis2020,Viviano2021,Bodory2022}.

In the following sections, I first introduce the decomposition for
the case of a single intermediate educational transition, before discussing
the general case of $K$ intermediate transitions and its identification
under the assumption of sequential ignorability (Section \ref{sec:Path-Specific-Effects}).
In Section \ref{sec:Estimation}, I introduce a semiparametric estimation
strategy for estimating the proposed decomposition, and in Section
\ref{sec:Empirical-Analysis}, I illustrate the proposed framework
and methods using data from the National Longitudinal Survey of Youth
(NLSY97) cohort. Section \ref{sec:Conclusion} concludes.

%%%%%%%%%%%%%%%%%%%%%%%%%%%%%%%%%%%%%%%%%%%%%%
%%  MAIN TEXT   %%
%%%%%%%%%%%%%%%%%%%%%%%%%%%%%%%%%%%%%%%%%%%%%%
\section{Monotonic Path-Specific Effects\label{sec:Path-Specific-Effects}}
\subsection{A Single Intermediate Transition\label{subsec:A-Motivating-Example}}

I first consider the case of a single intermediate educational transition (monotonic
mediator). Suppressing subscripts $i$, let $A$ denote an indicator
for high school graduation (the initial educational transition), $M_{1}$, an indicator for college attendance (a monotonic mediator or
transition), and $Y$, a binary or continuous outcome of interest
such as earnings. A single-transition decomposition thus assesses
the educational sequence $A\rightarrow M\rightarrow Y$: high school
graduation$\to$college attendance$\to$earnings. In this way, I treat
college attendance as a mediator of the total effect of high school
graduation on earnings, in relation to which the total effect of high
school graduation can be decomposed into an indirect effect (that
``flows through'' college attendance), and a direct effect (net of
college attendance). Following \citet{Heckman2018}, I refer to this
latter term as the ``continuation'' value of educational transition
$A$.

Using potential outcomes notation, let $M(a)$ denote an individual's
potential value of the mediator if their treatment status were set
to $a$, and let $Y(a,m)$ denote that individual's potential outcome
if their treatment and mediator statuses were set to $a$ and $m$,
respectively. With our set of potential outcomes ($\{Y(1),Y(0),Y(1,0),Y(1,1),Y(0,1)\}$), we can then define potential outcomes that involve ``natural'' values of the mediator. For instance, $Y(1, M(0))$ represents the earnings an individual would have if they completed high school ($A=1$) but their college attendance were fixed to the level it would have been had they not completed high school ($A=0$).

With this notation, we can define the ``natural'' decomposition of the average treatment effect as follows \citep{Pearl01, ImaKeeYam10}:

\begin{align}
        \text{ATE} &= \mathbb{E}[Y(1)-Y(0)]\nonumber \\
    &= \text{NDE}(a) + \text{NIE}(1-a),
    \label{eq:nde-nie}
\end{align}

for $a=0,1$, where
$\text{NDE}(a) = \mathbb{E}[Y(1, M(a)) - Y(0, M(a))]$, and $\text{NIE}(a) = \mathbb{E}[Y(a, M(1)) - Y(a, M(0))]$. This decomposition allows a researcher to determine how much of the overall effect of high school completion is due to facilitating access to college, versus directly, net of college attendance.

Potential outcomes such as $Y(1, M(0))$ are central to mediation, but they create challenges in estimation and interpretation due to their ``cross-world'' nature. In particular, a unit cannot simultaneously complete high school and not complete high school at the same time, so $Y(1, M(0))$ is not observable, even in principle. This creates challenges for identification. In particular, it requires a ``cross-world independence'' assumption in order to identify the natural effect decomposition of the ATE: $Y(a', m) \indep M(a) \mid A = a, X$.

Many authors have expressed skepticism about this assumption since it requires that there are no (measured \emph{or} unmeasured) post-treatment confounders (confounders that are affected by the treatment and which affect the mediator and outcome). An alternative approach to mediation analysis involves estimating a quantity known as the controlled direct effect (CDE), which captures the causal effect of a treatment when the mediator is fixed to a given value:

$$\text{CDE}(m)=\mathbb{E}[Y(1,m)-Y(0,m)],$$

for $m=0,1$. This quantity is attractive because it is identified under a weaker assumption than that required for quantities like $Y(1, M(0))$, one that allows for the existence of observed post-treatment confounders. A drawback of this approach is that it does not quantify the extent of mediation through $M$ and only enables a researcher to rule out the existence of alternative mediators other than $M$  (see \cite{acharya2016explaining}).

In the context of educational effects, it is possible to relax the usual cross-world assumption by leveraging a key structural feature of educational transitions.  As I formalize in the following section, I assume that educational transitions are characterized by monotonicity. Here, this means that individuals who do not complete high school cannot attend college,
or $M(0)=0$. This sequential nature of educational transitions therefore
implies the following \textit{restricted} set of potential outcomes: $\{Y(1),Y(0),Y(1,0),Y(1,1)\}$ (i.e., ruling out $Y(0,1)$). Further, since by the composition assumption $Y(a) = Y(a,M(a))$ \citep{VanderWeele2009a}, under monotonicity $Y(0)=Y(0,M(0))=Y(0,0)$.

We can then apply these restrictions to Equation \ref{eq:nde-nie} as follows \citep[see][]{Zhou2022a}:

\begin{align}
\textup{ATE} & =\mathbb{E}[Y(1)-Y(0)]\nonumber \\
&= \underbrace{\mathbb{E}[Y(1, M(0)) - Y(0, M(0))]}_{= \text{NDE}(0)} + \underbrace{\mathbb{E}[Y(1, M(1)) - Y(1, M(0))]}_{=\text{NIE}(1) } \nonumber\\
&= \mathbb{E}[Y(1, 0) - Y(0, 0)] + \mathbb{E}[Y(1, M(1)) - Y(1, 0)] \nonumber\\
&=  \mathbb{E}[Y(1, 0) - Y(0, 0)] + \mathbb{E}\big[ M(1) [Y(1, 1) - Y(1, 0) ]  \big] \nonumber \\
 & =\mathbb{E}[Y(1,0)-Y(0,0)]+\mathbb{E}[M(1)]\mathbb{E}[Y(1,1)-Y(1,0)]+\textup{cov}[M(1),Y(1,1)-Y(1,0)]\label{eq:K=00003D1}\\
 & =\underbrace{\Delta_{0}}_{A\to Y}+\underbrace{\pi_{\textup{1}}\Delta_{\textup{1}}+\eta_{\textup{1}}}_{A\to M\to Y},\label{eq:K1-decomp}
\end{align}

where the third equality follows by monotonicity, the fourth, because $Y(1,M(1))-Y(1,0) = M(1) [Y(1,1)-Y(1,0)] + [1-M(1)] \cdot [Y(1,0)-Y(1,0)] = M(1) [Y(1,1)-Y(1,0)]$, and the fifth, by rules of covariance.  Here, $\Delta_{\textup{0}}$ and $\Delta_{1}$ denote the
direct effects of the first and intermediate transitions on the outcome,
$A\rightarrow Y$ and $M\rightarrow Y$, respectively, $\pi_{\textup{1}}$
denotes the total effect of the first transition on the intermediate
transition $A\to M$, and $\eta_{\textup{1}}$ denotes the covariance
between the effect of the initial transition on completion of the
second and the effect of the second transition on $Y$. Specifically, $\eta_{\textup{1}}$
is positive if those who would attend college given high school completion
(i.e., $M(1)=1$) benefit more from college attendance in terms of
their later earnings (i.e., have a larger $Y(1,1)-Y(1,0)$) than those
who do not (i.e., $M(1)=0$), and negative if the opposite is true.
Meanwhile, the composite term $(\pi_{\textup{1}}\Delta_{1}+\eta_{\textup{1}})$
captures the average indirect effect of the treatment via the
intermediate transition ($A\rightarrow M\rightarrow Y$), comprising
the sum of (i) the probability of college enrollment if an individual
graduated high school, multiplied by the direct effect of college enrollment, and (ii) the covariance between college enrollment and its direct effect
on earnings.

While motivated by the natural effect decomposition of the ATE, the monotonicity constraint on $M$ leads to several differences from Equation \ref{eq:nde-nie}. First, whereas Equation \ref{eq:nde-nie} can be written for $a=0$ and $a=1$, Equation \ref{eq:K1-decomp} is the algebraically unique natural effect decomposition of the ATE under monotonicity. This is because the natural effect decomposition for $a=0$ hinges on the counterfactual term $Y(1,M(0)) = Y(1,0)$, whereas the natural effect decomposition for $a=1$ hinges on the counterfactual term $Y(0,M(1))$. Under monotonicity, only the first of these quantities is well-defined. Second, under mediator monotonicity, $\text{NDE}(0) = \text{CDE}(0)$. Therefore, while in general the CDE does not quantify the extent of mediation through $M$ (since the difference between the ATE and CDE indicates the portions of the total effect due to interaction without mediation, and due to mediation, both with and without interaction \citep{vanderweele2014unification}) in this special case the CDE completely characterizes the extent of mediation.\footnote{An alternative way to see this is that, as \cite{vanderweele2014unification} shows, the individual-level natural direct effect can be written as $(Y(1,0)-Y(0,0)+ M(0) (Y(1,1)-Y(0,1)  - Y(1,0)+ Y(0,0))$, but under monotonicity, $M(0)=0$ for all individuals, and so the individual-level natural direct effect reduces to the first term. Interestingly, under monotonicity there is no ``interaction effect'' between the treatment and mediator (since the mediator can only take a value of 1 when the treatment is activated), and so the NIE reduces to pure indirect effect (PIE) discussed in \cite{vanderweele2014unification}. Hence, under monotonicity, ATE = CDE + PIE, and there are no interaction terms in the decomposition.} Third, and relatedly, because $Y(1,M(0)) = Y(1,0)$, Equation \ref{eq:K1-decomp} does not depend on any cross-world counterfactuals, and therefore can be identified under weaker assumptions than Equation \ref{eq:nde-nie}.

Finally, the decomposition in Equation \ref{eq:K1-decomp} can be compared with a ``randomized effect'' decomposition of treatment effects. An alternative approach to avoiding cross-world counterfactuals and treatment-induced confounding is to use \emph{randomized intervention analogues} to the natural direct and indirect effects (rNDE and rNIE) \citep{VanderWeele2014}. Rather than setting the mediator to the level it would ``naturally'' take under an alternative treatment level, these estimands conceptualize interventions that randomly draw the mediator from its population distribution under each treatment regime.

Analogous to the natural effect decomposition in Equation~\ref{eq:nde-nie}, we can define a \emph{randomized effect decomposition} of a treatment effect as follows:
\begin{align}
    \text{rATE} 
    &= \text{rNDE}(a) + \text{rNIE}(1-a),
    \label{eq:rnde-rnie}
\end{align}
for $a=0,1$, where $\text{rNDE}(a) = \mathbb{E}[Y(1, G_{a\mid X}) - Y(0, G_{a\mid X})]$ and $\text{rNIE}(a) = \mathbb{E}[Y(a, G_{1\mid X}) - Y(a, G_{0\mid X})]$, and rATE is defined to be the sum of these two components. Here, $G_{a\mid X} \equiv \text{Pr}(M=m|A=a,x)$ denotes a value of the mediator randomly drawn from its conditional distribution under treatment $A=a$ given baseline covariates $X$. Importantly, Equation \ref{eq:rnde-rnie} decomposes not the ATE but a randomized average treatment effect (rATE), which may differ from the ATE. By doing so, it avoids the cross-world counterfactual $Y(a, M(a'))$ by substituting a randomized intervention on the mediator distribution. This substitution identifies the indirect and direct pathways, even when post-treatment confounders of the mediator--outcome relationship are present.

When $a=0$, the rNDE captures exactly the NDE under a monotonicity constraint (i.e., $\text{rNDE}(0)=\Delta_0$). This is because $\text{Pr}(M=1|A=0,x) = 0$ and $\text{Pr}(M=0|A=0,x) = 1$ and so the randomized effect reduces to the CDE. Using results from \cite{yu2024natural}, the difference between the NIE under monotonicity and the rNIE is equal to

$$
\begin{aligned}
\pi_1\Delta_1 + \eta_1 - \text{rNIE} &= \mathbb{E}[\text{cov}[M(1)-M(0),Y(1,1)-Y(1,0)|X]] \\
&= \mathbb{E}[\text{cov}[M(1),Y(1,1)-Y(1,0)|X]]  \text{     (since } M(0)=0).\\
\end{aligned}
$$

This expression highlights that the NIE–rNIE gap measures residual within-$X$-stratum covariance between the mediator and its causal effect on $Y$, i.e., dependence induced by intermediate confounding. Thus, deviations between the ATE and rATE will stem from this within-stratum covariance. The randomized intervention framework parallels the monotonic decomposition in Equation~\ref{eq:K1-decomp}, which eliminates logically impossible counterfactuals (e.g., $Y(0,M(1))$) through structural constraints on the mediator rather than redefining the estimand itself.

\subsection{Generalization to $K$ Intermediate Transitions}

I now generalize the approach introduced in the preceding section
to the case of $K$ intermediate transitions. As previously, I denote
the treatment (``initial transition'') of high school graduation by
$A$, and use $M_{1},\dots M_{K}$ to refer to the $K$ subsequent
transitions of interest (``intermediate transitions''), where I
assume that all of $M_{1},\dots,M_{K}$ are binary and that for any
$i<j$, $M_{i}$ temporally precedes $M_{j}$. For instance, we may
wish to decompose the total effect of high school completion on earnings
via college attendance ($M_{1}$), college completion ($M_{2}$) and
graduate school attendance ($M_{3}$). Let an overbar denote a vector
of variables, such that $\overline{M}_{k}=(M_{1},M_{2},\dots M_{k})$
and $(1,\overline{1}_{k-1})=(A=1,M_{1}=1,\dots,M_{k-1}=1)$. Further, let
$[K]$ denote the set $\{0,1,\dots,K\}$. In addition, I denote by
$X$ a vector of pretreatment confounders of the effect of $(A,\overline{M}_{k})$
on ($M_{k+1},Y)$, and by $\overline{Z}_{k}=(Z_{1},\dots Z_{k})$
a vector of intermediate confounders that may confound the causal
effect of $M_{k}$ on ($M_{k+1},Y)$. Using potential outcomes notation,
$Y(1,\overline{1}_{k-1},m_{k})$ thus denotes an individual's potential
earnings if they completed, possibly contrary to fact, the treatment
and the first $k-1$ intermediate transitions, and then either completed
($m_{k}=1)$ or did not complete ($m_{k}=0$) the $k$th \textit{intermediate}
transition. Similarly, $M_{k+1}(1,\overline{1}_{k})$ denotes an individual's
potential value of the $k+1$th intermediate transition were that individual
to complete the treatment and the first $k$ intermediate transitions. As is standard in the mediation literature, I make the following composition assumption \citep{VanderWeele2009a}:

\begin{assumption}
Composition:
$Y(1,\overline{1}_{k-1}, m_k)
=
Y(1,\overline{1}_{k-1}, m_k, M_{k+1}(1,\overline{1}_{k-1}, m_k)),
\forall k \in [K-1], M_0 \equiv A$.
\label{assu:Composition}
\end{assumption}

In words, Assumption \ref{assu:Composition} states that 
a person's potential outcome under $(1,\overline{1}_{k-1}, m_k)$ is equal to their potential outcome under $A=1,\dots,M_{k-1}=1,m_k$ and under the value $M_{k+1}$ would naturally take under $A=1,\dots,M_{k-1}=1,m_k$. I also invoke the following constraint on units' potential transition values:

\begin{assumption}
Monotonicity:
$M_{k+1}(M_{k}=0)=0\forall k\in[K-1],M_{0}\equiv A.$
\label{assu:Monotonicity}
\end{assumption}

Informally, Assumption \ref{assu:Monotonicity} (\textit{monotonicity})
states that an individual's potential $k+1$th transition value is
deterministically $0$ if that individual fails to complete the prior
($k$th) transition. It is analogous to a one-sided non-compliance
assumption within an instrumental variables (IV) framework, which precludes the presence of both ``defiers'' as well as ``always-takers'' principal strata. We can then use
this assumption to decompose the ATE of $A$ on $Y$, which I denote
by $\tau_{0}$. Specifically, let $\tau_{k}$ denote the gross effect
of the $k$th mediator on $Y$, i.e., 

\begin{eqnarray}
\tau_{k}
=
\mathbb{E}[Y(1,\overline{1}_{k})-Y(1,\overline{1}_{k-1},0)], \nonumber
\end{eqnarray}

let $\Delta_{0}$ denote the direct effect of $A$ on $Y$, and let
$\Delta_{k}$ denote the direct effect of the $k$th mediator on $Y$,
i.e., 

\begin{eqnarray}
\Delta_{k}
=
\mathbb{E}[Y(1,\overline{1}_{k-1},1,0)-Y(1,\overline{1}_{k-1},0)]. \nonumber
\end{eqnarray}

To explicate my approach, note that the gross effect of the $k$th
mediator, $\tau_{k}$, includes not only the direct effect $M_{k}\rightarrow Y$,
net of subsequent educational transitions $\Delta_{k}$, but also
the indirect effects of $M_{k}$ via subsequent transitions ($M\rightsquigarrow Y$,
where a squiggly arrow denotes a combination of multiple paths). This
insight motivates us to further decompose $\tau_k$ into its direct
and indirect components. Under the composition assumption, $\tau_{k}$
can be decomposed as 

\begin{equation}
\tau_{k}
=
\Delta_{k}
+
\pi_{k+1}\tau_{k+1}
+
\eta_{k+1},
\label{eq:tau_k}
\end{equation}

where 

\begin{align}
\pi_{k+1}
&=
\mathbb{E}[M_{k+1}(1,\overline{1}_{k})], \nonumber \\
\eta_{k+1}
&=
\textup{cov}[M_{k+1}(1,\overline{1}_{k}),Y(1,\overline{1}_{k+1})-Y(1,\overline{1}_{k},0)]. \nonumber
\end{align}

For $k=1,\dots,K-1$, iteratively substituting equation \ref{eq:tau_k}
into the corresponding expression for $\tau_{k-1}$ yields 

\begin{equation}
\tau_{0}
=
\underbrace{\Delta_{0}}_{A\to Y}
+
\sum_{k=1}^{K}
\underbrace{
(\Pi_{j=1}^{k}\pi_{j})\Delta_{k}
+
(\Pi_{j=1}^{k-1}\pi_{j})\eta_{k}
}_{\theta_{k}\triangleq A\to M_{1}\ldots\to M_{k}\to Y},
\label{eq:K-decomposition}
\end{equation}

where $\pi_0=1$. Further, $\Delta_{K}=\tau_{K}$ is a \textit{gross}
or continuation effect, since this latter path is a composite one
that contains all residual paths omitted in the decomposition (i.e.,
through educational transitions subsequent to $K$, if they exist). Thus, the $\theta_{k}$ terms capture how much of the total effect
of high school completion flows through each intermediate transition
considered (i.e., via college attendance, via college completion,
and via graduate school attendance), while $\Delta_{0}$ captures
that portion of the total effect that operates directly, net of the
$K$ intermediate transitions considered.

\subsection{A Comparison with Conventional Mediation Analysis with Multiple Causally
Ordered Mediators}

The decomposition introduced in the previous section has an analog in the context of a mediation--based
decomposition of the ATE with multiple ordered mediators, but differs
from these conventional mediation analyses in important ways. To illustrate
the differences, consider a binary treatment, $A$, an outcome of
interest, $Y$, and a vector of pretreatment covariates, $X$, and
let $M_{1},M_{2},\ldots M_{K}$ denote $K$ causally ordered mediators,
assuming that for any $i<j,M_{i}$ precedes $M_{j}$, as above. Moreover,
define $M_0\equiv\emptyset$, and let
\[
\overline{M}_{k} \equiv (M_{1},M_{2},\ldots,M_{k}),
\qquad
\overline{M}_{k}(a)
\equiv
\big(M_{1}(a),M_{2}(a,M_{1}(a)),\ldots,M_{k}(a,\overline{M}_{k-1}(a))\big),
\]
with $\overline{M}_{0}(a)\equiv\emptyset$. Using the potential
outcomes notation as above, define the following expectation of a nested
counterfactual,
\[
\psi(a,\overline{M}_{k}(a^{\ast}))
\triangleq
\mathbb{E}\!\left[Y\!\big(a,\overline{M}_{k}(a^{\ast})\big)\right].
\]

Under Pearl’s \citeyearpar{Pearl2009} nonparametric structural equation
model (NPSEM), \citet{Zhou2022b} demonstrates that the ATE of $A$
on $Y$ can be decomposed into $K+1$ identifiable path-specific effects (PSEs)
corresponding to each of the causal paths $A\rightarrow Y$ and
$A\rightarrow M_{k} \rightsquigarrow Y$ $(k\in[K])$
(see also \citealp{Daniel2015}):

\begin{equation}
\mathrm{ATE}
=
\mathbb{E}[Y(1)-Y(0)]
=
\underbrace{
\psi(1,\overline{M}_{K}(0))-\psi(0,\overline{M}_{K}(0))
}_{A\rightarrow Y}
+
\sum_{k=1}^{K}
\underbrace{
\big(
\psi(1,\overline{M}_{k}(0))-\psi(1,\overline{M}_{k-1}(0))
\big)
}_{A\rightarrow M_{k} \rightsquigarrow Y}.
\label{eq:pse-zhou}
\end{equation}

This decomposition holds algebraically when Assumption \ref{assu:Monotonicity}
does not hold (i.e., when the mediators are not monotonic), and differs from the proposed decomposition (Equation \ref{eq:K-decomposition}) in several ways.\footnote{
\cite{Vansteelandt2017} propose a decomposition of the total effect in settings with multiple, potentially causally ordered mediators using interventional (in)direct effects. A key advantage of this approach is that, like the MPSE decomposition, it permits exposure-induced intermediate confounding by defining effects in terms of stochastic interventions on mediator distributions. When mediators are causally dependent, however, mediation through a given mediator is not, in general, fully summarized by the corresponding interventional indirect effect alone; an additional component capturing exposure-induced changes in the dependence between mediators (their Equation~(8)) is required to complete the decomposition. Accordingly, in settings with causally dependent mediators, mediation along a particular sequential path may be zero even when the associated interventional indirect effect is nonzero. By contrast, the framework proposed here exploits monotonicity to restrict the support of downstream mediators to that of upstream transitions, so that all mediation operates along well-defined sequential paths and no separate dependence term is required. Interestingly, under monotonicity, the interventional direct effect in \cite{Vansteelandt2017} coincides exactly with $\Delta_0$ in the MPSE decomposition, echoing the result that the randomized natural direct effect at zero equals $\Delta_0$, as discussed in Section~\ref{subsec:A-Motivating-Example}.
}

First and most importantly, the monotonic decomposition departs from conventional mediation decompositions in distinguishing a different set of causal pathways. As shown in the previous section, the monotonic and standard mediation decompositions both begin with a natural effect decomposition in the case of a single mediator. Conventional mediation decompositions then obtain a 2-mediator decomposition by decomposing the direct path $A \to Y$ net of $M_1$ into its direct component net of $M_2$ and indirect path $A \to M_2 \to Y$. This is shown in Figure \ref{fig:decomp-tree}, Panel A.
Notably, to assess the mediating role of $M_{1}$, only the composite path 
$A\rightarrow M_{1}\rightsquigarrow Y=(A\rightarrow M_{1}\rightarrow Y)+(A\rightarrow M_{1}\rightarrow M_{2}\rightarrow Y)$
is identified. The reason for this is that identification of the pure path-specific effects
$A\rightarrow M_{1}\rightarrow Y$ or $A\rightarrow M_{1}\rightarrow M_{2}\rightarrow Y$ fails when the 
\emph{recanting witness criterion} is violated \citep{Avin2005}. 
Specifically, $M_{1}$ acts as a ``recanting witness'' because it lies on the path 
$A\rightarrow M_{1}\rightarrow Y$ but also has an additional path to $Y$ through 
$M_{2}$ ($M_{1}\rightarrow M_{2}\rightarrow Y$) that is not contained in the path of interest. 
The nested counterfactual $\mathbb{E}[Y(a,M_{1}(a_{1}),M_{2}(a_{2},M_{1}(a_{12})))]$
is identified if and only if $a_{1}=a_{12}$. Consequently, the individual PSEs 
for $A\rightarrow M_{1}\rightarrow Y$ and $A\rightarrow M_{1}\rightarrow M_{2}\rightarrow Y$ 
are not identified, whereas the composite path 
$A\rightarrow M_{1}\rightsquigarrow Y$, which includes all possible paths from $M_{1}$ to $Y$, 
remains identifiable.
More generally, traditional decompositions do not disentangle the mediating
effects of $M_{k}$ that are direct (net of subsequent mediators)
and indirect (through different combinations of subsequent mediators). 

By contrast, as shown in Figure \ref{fig:decomp-tree}, Panel B, my proposed decomposition under mediator monotonicity begins with the natural effect decomposition with a single mediator as in Equation \ref{eq:nde-nie}, and then decomposes the path via $M_1$ into pathways that operate further via $M_2$ ($A\rightarrow M_{1}\rightarrow M_{2}\rightarrow Y$) and that operate directly to $Y$ ($A\rightarrow M_{1}\rightarrow Y$), leaving the direct effect ($\Delta_0$) untouched. In fact, $\Delta_0$ is not further decomposable under monotonicity because this assumption yields a more  restricted set of causal 
paths. To see why, consider the pathway  $A\to M_2 \to Y$ under
monotonicity in the two-mediator setting:
\[
\mathbb{E}\!\left[
Y\!\big(1,\,M_1(0),\,M_2(1,M_1(0))\big)
-
Y\!\big(1,\,M_1(0),\,M_2(0,M_1(0))\big)
\right].
\]
Under mediator monotonicity, $M_1(0)=0$ (e.g., without completing high school one
cannot attend college), and $M_2(\cdot,0)=0$ (e.g., without college attendance one
cannot complete a BA). Therefore
\[
M_2(0,M_1(0)) = M_2(0,0)=0.
\]
and the path $A\to M_2 \to Y$ equals
\[
Y(1,0,0)-Y(1,0,0)=0.
\]

More generally, consider the path-specific effect attributed to the pathway
$A \to M_k \to Y$ with $k \ge 2$, defined in the usual way by fixing all earlier
mediators at their natural levels under $A=1$ and varying only $M_k$ with respect
to $A$:
\[
\tau_{A\to M_k \to Y}(a)
=
\mathbb{E}\!\left[
Y\!\big(1,\overline M_{k-1}(0), M_k(1,\overline M_{k-1}(0))\big)
-
Y\!\big(1,\overline M_{k-1}(0), M_k(0,\overline M_{k-1}(0))\big)
\right],
\]
where $\overline M_k(a)$ denotes the vector $(M_1(a),\ldots,M_k(a,\overline M_{k-1}(a)))$. Under monotonicity, fixing earlier mediators at their baseline levels
implies that $\overline M_{k}(0)=0$, and that
$M_k(1,\overline M_{k-1}(0))=M_k(0,\overline M_{k-1}(0))=0$. In words, there is no direct effect of the treatment on later mediators once earlier transitions are held fixed at their natural value under $A=0$. As a result, the two potential
outcomes inside the expectation coincide, and
\[
\tau_{A\to M_k \to Y}(1)=0 \qquad \text{for all } k \ge 2.
\]
Any indirect effect involving $M_k$ must therefore
operate through the full causal chain
$A \to M_1 \to \cdots \to M_k \to Y$. Figure \ref{fig:MPSEs} illustrates the causal pathways defined and
identified under the proposed decomposition in the case of two monotonic
mediators.

Second, the PSE decomposition of the
ATE in general mediation settings is not algebraically unique, and
thus the PSEs defined under alternative decompositions will differ
if the effects of the treatment and each mediator vary across levels
of the other mediators. In fact, depending on the order in which the
paths $A\rightarrow Y$ and $A\rightarrow M_{k}\rightsquigarrow Y$
are considered, there are $(K+1)!$ identifiable different ways of
decomposing the ATE; the decomposition shown in Equation \ref{eq:pse-zhou}
is just one such decomposition. Consider the case of two causally
dependent mediators. In this setting, the causal pathway $A\to M_{2}\rightsquigarrow Y$
can be defined with respect to four different combinations of levels
of the treatment and first mediator: under (i) $a=1$ and $M_{1}(1)$,
(ii) $a=1$ and $M_{1}(0)$, (iii) $a=0$ and $M_{1}(1)$, or (iv)
$a=0$ and $M_{1}(0)$. By contrast, as a direct consequence of monotonicity,
the MPSE decomposition is the unique PSE decomposition of the ATE.

Finally, the sequential ignorability assumption required to identify
the MPSE decomposition is weaker than those required to identify a
generic PSE decomposition of the ATE. Specifically, the latter requires
Pearl's \citeyearpar{Pearl2009} non-parametric structural equation
model (NPSEM), which stipulates that $\ensuremath{\left(M_{k+1}\left(a_{k+1},\bar{m}_{k}\right),\ldots M_{K}\left(a_{K},\bar{m}_{K-1}\right),Y\left(a_{K+1},\bar{m}_{K}\right)\right)\perp\!\!\!\perp M_{k}\left(a_{k},\bar{m}_{k-1}^{*}\right)\mid X,A,\bar{M}_{k-1},\forall k\in[K]}.$
This assumption, sometimes referred to as the ``cross-world'' independence
assumption, is stronger than the sequential ignorability assumption
(\ref{assu:SI}) required to identify the MPSE decomposition since
it rules out the existence of confounders of the mediators, be they
observed or unobserved. By contrast, the MPSE decomposition identification
results accommodate \textit{observed} intermediate confounding without
altering the substance of the decomposition. I discuss this point in detail in the following section.

\subsection{Identification\label{sec:Identification}}

To identify the causal effects of interest, I rely on a series of
sequential ignorability assumptions. While most closely associated
with the dynamic treatment effects literature, which relies on observing
a complete set of time-varying confounders in order to identify longitudinal
effects \citep[see e.g.][]{Lewis2020,Viviano2021,Bodory2022}, these
assumptions can be transferred to a mediation context, given the fact
that the mediators of interest are all causally ordered. As discussed in the previous section, sequential ignorability identification assumptions
are distinct from -- and in fact weaker than -- the assumptions typically
employed in studies of causal mediation. 

Let $M_{k}=\emptyset\forall k<1$. In order to estimate the decomposition
shown in Equation \ref{eq:K-decomposition}, it suffices to identify
the expectation of two types of composite counterfactuals
($Y(1,\overline{1}_{k-1},m_{k})$
and $M_{k+1}(1,\overline{1}_{k})$), as well as covariance terms of
the form
$\mathrm{cov}[M_{k+1}(1,\overline{1}_{k}),Y(1,\overline{1}_{k+1})-Y(1,\overline{1}_{k},0)]$
$\forall k\in[K-1]$. I invoke the following three assumptions:

\begin{assumption}
Consistency: for any unit, if $A=a$, $Y=Y(a)$; if $(A,\overline{M}_{k-1},M_{k})=(1,\overline{1}_{k-1},m_{k})$,
then $Y=Y(1,\overline{1}_{k-1},m_{k})$ $\forall k\in[K]$, and if $(A,\overline{M}_{k})=(1,\overline{1}_{k})$,
then $M_{k+1}=M_{k+1}(1,\overline{1}_{k})\forall m_{k+1}\in\{0,1\},\forall k\in[K-1]$.
\label{assu:Consistency}
\end{assumption}

\begin{assumption}
Sequential ignorability: $(M_1(1),Y(a))\perp\!\!\!\perp A|X$; \\
$Y(1,\overline{1}_{k-1},m_{k})\perp\!\!\!\perp{M}_{k}|X,\overline{Z}_{k},\overline{M}_{k-1}=\overline{1}_{k-1}$
and
$M_{k+1}(1,\overline{1}_{k})\perp\!\!\!\perp{M}_{k}|X,\overline{Z}_{k},\overline{M}_{k-1}=\overline{1}_{k-1}$,
$\forall m_{k}\in\{0,1\},\forall k\in\{1,\dots,K\},M_{0}\equiv A$.
\label{assu:SI}
\end{assumption}

\begin{assumption}
Positivity: $p_{A|X}(a|x)>\epsilon>0$,
$p_{M_{k}|X,A,\overline{Z}_{k},\overline{M}_{k-1}}(m_{k}|x,a,\overline{z}_{k}, \overline{m}_{k-1} = \overline{1}_{k-1})>\epsilon>0$
$\forall k\in[K]$.
\label{assu:Positivity}
\end{assumption}

Assumption \ref{assu:Consistency} (\textit{consistency}) states that
a unit's observed outcome equals its potential outcome under a given
treatment sequence. Note that under Assumption \ref{assu:Composition}
(Composition), if $Y=Y(1,\overline{1}_{k-1}, m_k)$, then
$Y = Y(1,\overline{1}_{k-1}, m_k, M_{k+1}(1,\overline{1}_{k-1}, m_k))
= \dots = Y(1,\overline{1}_{k-1}, m_k, M_{k+1}(1,\overline{1}_{k-1}, m_k), \dots, M_{K}(\cdot))$.
In plain words, the $K-k$ mediators after mediator $k$ all take their
natural levels. Assumption \ref{assu:SI} (\textit{sequential ignorability})
is the no unmeasured confounding assumption for the treatment and all
mediators. It is considered plausible when sufficient pre-treatment and
intermediate covariates $(X,\overline{Z}_{K})$ are collected. Finally,
Assumption \ref{assu:Positivity} (\textit{positivity}) requires that
treatment assignment is not deterministic, and that mediator assignment
is not deterministic when the treatment and prior mediators $\overline{m}_{k-1}$ take on a
value of $1$.

Under Assumptions \ref{assu:Consistency}--\ref{assu:Positivity},
$\mathbb{E}[Y(1,\overline{1}_{k-1},m_{k})]$ and
$\mathbb{E}[M_{k+1}(1,\overline{1}_{k})]$
are identified, respectively, as 

\begin{eqnarray}
 & \mathbb{E}[Y(1,\overline{1}_{k-1},m_{k})]
=\int_{x}\int_{\overline{z}_{k}}
\mathbb{E}[Y|x,\overline{z}_{k},1,\overline{1}_{k-1},m_{k}]
\big[\prod_{j=1}^{k}dP(z_{j}|x,\overline{z}_{j-1},\overline{1}_{j-1})\big]dP(x)\\
 & \mathbb{E}[M_{k+1}(1,\overline{1}_{k})]
=\int_{x}\int_{\overline{z}_{k}}
\mathbb{E}[M_{k+1}|x,\overline{z}_{K},1,\overline{1}_{k}]
\big[\prod_{j=1}^{k}dP(z_{j}|x,\overline{z}_{j-1},\overline{1}_{j-1})\big]dP(x)
\end{eqnarray}

For a proof of the above formulas, see \citet{Robins1986}. The covariance
($\eta_{k}$) components in the decomposition are then identified
as the ``residual'' terms such as in Equation \ref{eq:tau_k}, which
follows directly from the fact that all other components in these
equations are identified. Thus, for $k\in\{1,\dots K\}$, we can identify
$\eta_{k}$ as 

\begin{eqnarray}
\eta_{k}=\tau_{k-1}-\Delta_{k-1}-\pi_{k}\tau_{k}.
\end{eqnarray}

\section{Estimation\label{sec:Estimation}}

The identification results outlined above suggest that the proposed
decomposition can be estimated via several approaches, including outcome-based modeling,
models for the treatment and mediators via inverse probability weighting, as well as doubly robust approaches. This section outlines two complementary estimation strategies (one semiparametric approach and one parametric approach) for implementing the MPSE decomposition. After providing a short summary of the approaches, I detail a semiparametric estimation approach in the main text, and refer readers to Supplementary Material A for further detail on the parametric approach. I also provide a simulation study comparing the performance of the two estimation strategies in Supplementary Material B.

The first estimation approach is a semiparametric, debiased machine learning (DML) estimator, which uses efficient influence functions and cross-fitting to estimate each component of the monotonic path-specific effect (MPSE) decomposition. The second is a simpler regression-with-residuals (RWR) estimator, which relies on parametric linear models and sequential residualization to translate the $\theta_{k}$ components in Equation
\ref{eq:K-decomposition} directly into regression coefficients that can be read off from these linear models. The RWR approach provides a transparent, low-computational alternative that directly links the decomposition to regression coefficients, making it useful both as a practical estimator in large datasets and as a parametric benchmark against which to compare the DML results. Nevertheless, when $X$ and $\overline{Z}_{K}$
are high-dimensional, the parametric models underlying RWR may be misspecified, which can in turn introduce bias. By contrast, the DML approach is robust to flexible, high-dimensional models for the treatment, mediators, and outcome. 

The DML approach is characterized by two components: first, the use of
a Neyman orthogonal estimating equation based on the efficient influence
function (EIF) for the target parameters, which makes estimates of
the parameter ``locally robust'' to estimates of the nuisance functions;
second, the use of a $K$-fold cross-fitting algorithm \citep{Chernozhukov2017}.

Let $O=(X,A,\overline{Z}_{K},\overline{M}_{K},Y)$ denote the observed
data, and $\mathcal{P}$ a nonparametric model over $O$ wherein all
laws satisfy the positivity assumption described in Section \ref{sec:Path-Specific-Effects}.
Before proceeding, I define the following auxiliary functions, as
introduced in Section \ref{sec:Path-Specific-Effects}: $\psi_{km_{k}}\triangleq\mathbb{E}[Y(\overline{1}_{k},m_{k})]$
and $\phi_{k}\triangleq\mathbb{E}[M_{k+1}(\overline{1}_{k+1})]$,
for all $k\in[K]$. Using the identification results
given in Section \ref{sec:Identification}, $\psi_{km_{k}}$ can
be written in terms of expectations of observed data:

\begin{eqnarray}
\psi_{km_{k}}
=
\mathbb{E}_{X}
\mathbb{E}_{Z_{1}\mid X,1}
\ldots
\mathbb{E}_{Z_{k}\mid X,\overline{Z}_{k-1},1,\overline{1}_{k-1}}
\mathbb{E}[Y\mid X,\overline{Z}_{k},1,\overline{1}_{k-1},m_{k}].
\label{Iterated-expectation}
\end{eqnarray}

For each $j\in[k]$, we can thus define
$\ensuremath{\mu_{jm_{k}}^{k}\left(X,\bar{Z}_{k}\right)}$
iteratively as

\begin{align*}
\mu_{km_{k}}^{k}\left(X,\bar{Z}_{k}\right)
&\triangleq
\mathbb{E}\left[Y\mid X,\bar{Z}_{k},1,\overline{1}_{k-1},m_{k}\right],\\
\mu_{jm_{k}}^{k}\left(X,\bar{Z}_{j}\right)
&\triangleq
\mathbb{E}\left[
\mu_{j+1m_{k}}^{k}\left(X,\bar{Z}_{j+1}\right)
\mid X,\bar{Z}_{j},1,\overline{1}_{j}
\right]
\forall j\in[k-1].
\end{align*}

This recursive definition of
$\ensuremath{\mu_{jm_{k}}^{k}\left(X,\bar{Z}_{k}\right)}$
is a compact way of expressing the nested expectations in
Equation \ref{Iterated-expectation}. For example, in the case of $k=2$,
the recursion becomes
\[
\mu^{2}_{2 m_2}(X,Z_1,Z_2)
=
\mathbb{E}[Y \mid X,Z_1,Z_2,1,\overline{1}_{1},m_2],
\]
\[
\mu^{2}_{1 m_2}(X,Z_1)
=
\mathbb{E}_{Z_2 \mid X,Z_1,1,\overline{1}_{1}}
\!\left[\mu^{2}_{2 m_2}(X,Z_1,Z_2)\right],
\]
\[
\mu^{2}_{0 m_2}(X)
=
\mathbb{E}_{Z_1 \mid X,1}
\!\left[\mu^{2}_{1 m_2}(X,Z_1)\right].
\]
Thus the counterfactual mean is
\[
\psi_{2 m_2}
=
\mathbb{E}_{X}\!\left[\mu^{2}_{0 m_2}(X)\right].
\]

These expressions make explicit that, at each step, we take the expectation
of the previous conditional expectation with respect to the distribution
of the next confounder, $Z_{j+1}$ given
$(X,\bar{Z}_{j},1,\overline{1}_{j})$.

Next, let
$\pi_{km_{k}}\big(X,\overline{Z}_{k}\big)
\triangleq
\mathrm{Pr}[M_{k}=m_{k}\mid X,\overline{Z}_{k},1,\overline{1}_{k-1}]$
$\forall k\in[K]$,
and
$\pi_{01}\big(X\big)\triangleq\mathrm{Pr}[A=1\mid X]$. The efficient
influence function (EIF) of $\psi_{km_{k}}$ is closely related to
the EIF for the g-formula, and can be written as

\begin{eqnarray}
\psi_{km_{k}}(O)=\sum_{j=0}^{k+1}\varphi_{j}(O),
\end{eqnarray}

where

\[
\begin{aligned}\varphi_{0}(O) & =\mu_{0m_{k}}^{k}(X)-\psi_{km_{k}}\\
\varphi_{j}(O) & =\frac{A}{\pi_{01}\big(X\big)}\left(\prod_{l=1}^{j-1}\frac{M_{l}}{\pi_{l1}\big(X,\overline{Z}_{l}\big)}\right)\left(\mu_{jm_{k}}^{k}\left(X,\overline{Z}_{j}\right)-\mu_{j-1m_{k}}^{k}\left(X,\bar{Z}_{j-1}\right)\right),\quad j\in\{1\dots,k\}\\
\varphi_{k+1}(O) & =\frac{A}{\pi_{01}\big(X\big)}\left(\frac{\mathbb{I}(M_{k}=m_{k})}{\pi_{km_{k}}\big(X,\overline{Z}_{k}\big)}\prod_{l=1}^{k-1}\frac{M_{l}}{\pi_{l1}\big(X,\overline{Z}_{l}\big)}\right)\left(Y-\mu_{km_{k}}^{k}\left(X,\bar{Z}_{k}\right)\right).
\end{aligned}
\]

For a proof, see \citet{Rotnitzky2017}. The expression above decomposes the efficient influence function for 
$\psi_{k m_k}$ into $k\!+\!2$ components, each corresponding to one layer of 
the iterated expectation representation in Equation~\ref{Iterated-expectation}. 
The first term, $\varphi_{0}(O)$, centers the EIF (around zero) by subtracting 
the target parameter $\psi_{k m_k}$ from its plug-in estimate 
$\mu^{k}_{0 m_k}(X)$. The next $k$ terms, $\varphi_{1}(O),\ldots,\varphi_{k}(O)$, are sequential 
bias–correction terms. Each $\mu^{k}_{j m_k}(X,\bar Z_j)$ is an estimated 
conditional expectation that appears in the iterated formula for 
$\psi_{k m_k}$, and estimation error in these nuisance regressions would 
normally introduce bias. Each $\varphi_{j}(O)$ therefore takes the form of a 
weighted residual that subtracts off the discrepancy between two successive 
levels of the recursion, 
$\mu^{k}_{j m_k}(X,\bar Z_j) - \mu^{k}_{(j-1)m_k}(X,\bar Z_{j-1})$, and weights 
it by the inverse probability of the relevant portion of the treatment-mediator 
history. The last term $\varphi_{k+1}(O)$ plays the same bias-correction role 
for the outcome regression. It incorporates the residual 
$Y - \mu^{k}_{k m_k}(X,\bar Z_k)$ and weights it by the inverse probability of 
the full treatment–mediator sequence required to identify $\psi_{k m_k}$. 
Together, these $k\!+\!1$ bias–correction terms ensure that the influence 
function is orthogonal to first–order errors in all nuisance functions, 
allowing the resulting DML estimator to achieve semiparametric efficiency under 
standard regularity conditions.

Since the expression above gives the efficient influence function for 
$\psi_{k m_k}$ under the nonparametric model $\mathcal{P}$, its variance 
determines the lowest achievable asymptotic variance for any regular, 
asymptotically linear estimator of $\psi_{k m_k}$. Consequently, the semiparametric efficiency bound for any 
asymptotically linear estimator of $\psi_{k m_k}$ is  
$\mathbb{E}\big[(\varphi_{k m_k}(O))^{2}\big]$.

This EIF motivates an EIF-based estimator for $\psi_{km_{k}}$, obtained
by solving the empirical moment condition $\mathbb{P}_{n}[\varphi_{km_{k}}(O;\hat{\eta})]=0$,
where $\mathbb{P}_{n}[\cdot]$ denotes an empirical average, and where
$\varphi_{km_{k}}(O;\hat{\eta})$ denotes the estimated EIF, evaluated
using plug-in estimators for the nuisance functions. Specifically,

\begin{equation}
\begin{aligned}\hat{\psi}_{km_{k}}^{\text{eif}} & =\mathbb{P}_{n}\bigg[\frac{A}{\hat{\pi}_{01}\big(X\big)}\left(\frac{\mathbb{I}(M_{k}=m_{k})}{\hat{\pi}_{km_{k}}\big(X,\overline{Z}_{k}\big)}\prod_{l=1}^{k-1}\frac{M_{l}}{\hat{\pi}_{l1}\big(X,\overline{Z}_{l}\big)}\right)\left(Y-\hat{\mu}_{km_{k}}^{k}\left(X,\bar{Z}_{k}\right)\right)\\
 & +\sum_{j=1}^{k}\frac{A}{\hat{\pi}_{01}\big(X\big)}\left(\prod_{l=1}^{j-1}\frac{M_{l}}{\hat{\pi}_{l1}\big(X,\overline{Z}_{l}\big)}\right)\left(\hat{\mu}_{jm_{k}}^{k}\left(X,\overline{Z}_{j}\right)-\hat{\mu}_{j-1m_{k}}^{k}\left(X,\bar{Z}_{j-1}\right)\right)\\
 & +\hat{\mu}_{0m_{k}}^{k}(X)\bigg].
\end{aligned}
\label{eq:eif-estimator}
\end{equation}

A similar EIF-based estimator can be used for $\phi_{k}$ to estimate
the $\pi_{k}$ terms in Equation \ref{eq:K-decomposition}. This estimator
is based on the following nuisance functions for estimation (see
Supplementary Material K for details):

\begin{align*}
\gamma_{k}\left(X,\bar{Z}_{k}\right) & \triangleq\mathbb{E}\left[M_{k+1}\mid X,\bar{Z}_{k},\overline{1}_{k+1}\right],\\
\gamma_{j}\left(X,\bar{Z}_{j}\right) & \triangleq\mathbb{E}\left[\gamma_{j+1}\left(X,\bar{Z}_{j+1}\right)\mid X,\bar{Z}_{k},\overline{1}_{j+1}\right]\forall j\in[k-1].
\end{align*}

Next, following \citet[p. 15]{Kennedy2022}, let $\mathbb{IF}:\Psi\rightarrow L_{2}(\mathbb{P})$
denote the operator mapping the functionals $\{\Delta_{k},\pi_{k},\eta_{k}\}:\mathcal{P}\rightarrow\mathbb{R},$
$\forall k\in[K]$ to their respective influence functions under the
nonparametric model $\mathcal{P}$. Because the $(\Delta_{k},\tau_{k})$
components of the decomposition are linear in $\psi_{km_{k}}$, by
linearity of the EIF, $(\mathbb{IF}(\Delta_{k}),\mathbb{IF}(\tau_{k}))$
can be expressed as linear combinations of $\varphi_{km_{k}}(O)$.
In particular, $\mathbb{IF}(\tau_{k})=\varphi_{(k+1)1}(O)-\varphi_{k,0}(O)$
and $\mathbb{IF}(\Delta_{k})=\varphi_{(k+1)0}(O)-\varphi_{k,0}(O)$.
The EIFs of $\eta_{k}$ and $\theta_{k}$, $\forall k\in[K]$ under
$\mathcal{P}$ are derived as in Theorem \ref{prop:eif-cov-theta}:

\begin{theorem}\label{prop:eif-cov-theta}

The EIFs of $\eta_{k}$, $\theta_{k}$ $\forall k\in[1,\dots,K]$
under $P$ are given, respectively, by

\begin{align*}
\mathbb{IF}(\eta_{k}) & =\mathbb{IF}(\tau_{k-1})-\mathbb{IF}(\Delta_{k-1})-\tau_{k}\mathbb{IF}(\pi_{k})-\pi_{k}\mathbb{IF}(\tau_{k}),\\
\mathbb{IF}(\theta_{k}) & =\mathbb{IF}(\Delta_{k})\prod_{j=1}^{k}\pi_{j}+\Delta_{k}\sum_{j=1}^{k}\mathbb{IF}(\pi_{j})\prod_{\substack{l=1\\
l\neq j
}
}^{k}\pi_{l}+\mathbb{IF}(\eta_{k})\prod_{j=1}^{k-1}\pi_{j}+\eta_{k}\sum_{j=1}^{k-1}\mathbb{IF}(\pi_{j})\prod_{\substack{l=1\\
l\neq j
}
}^{k-1}\pi_{l},
\end{align*}

for $k\in\{1,\dots K\}$, with $\theta_{0}=\Delta_{0}$, and where
$\mathbb{RIF}(\phi)=\mathbb{IF}(\phi)+\phi,$ denotes the recentered
EIF of a parameter (about the truth). Their corresponding EIF-based
estimators are (see Supplementary Material K for derivations):

\begin{align*}
\hat{\eta}_{k}^{\text{eif}} & =\mathbb{\widehat{\mathbb{RIF}}}(\tau_{k-1})-\widehat{\mathbb{RIF}}(\Delta_{k-1})-\hat{\tau}_{k}\widehat{\mathbb{RIF}}(\pi_{k})-\hat{\pi}_{k}\widehat{\mathbb{RIF}}(\tau_{k})+\hat{\pi}_{k}\hat{\tau}_{k},\\
\hat{\theta}_{k}^{\text{eif}} & =\widehat{\mathbb{RIF}}(\Delta_{k})\prod_{j=1}^{k}\hat{\pi}_{j}+\hat{\Delta}_{k}\sum_{j=1}^{k}\widehat{\mathbb{RIF}}(\pi_{j})\prod_{\substack{l=1\\
l\neq j
}
}^{k}\hat{\pi}_{l}+\widehat{\mathbb{RIF}}(\eta_{k})\prod_{j=1}^{k-1}\hat{\pi}_{j}+\hat{\eta}_{k}\widehat{\mathbb{RIF}}(\pi_{j})\prod_{\substack{l=1\\
l\neq j
}
}^{k-1}\hat{\pi}_{l}\\
 & \;-k\hat{\Delta}_{k}\prod_{j=1}^{k}\hat{\pi}_{j}-(k-1)\hat{\eta}_{k}\prod_{j=1}^{k-1}\hat{\pi}_{j}.
\end{align*}

where $\widehat{\mathbb{RIF}}(\phi)=\widehat{\mathbb{IF}}(\phi)+\phi,$
and $\widehat{\mathbb{IF}}(\phi)$ denotes the influence function
of a parameter evaluated at estimates of its component nuisance functions (see Supplementary Material K for derivations).
\end{theorem}

When machine learning estimators are used to compute the nuisance
functions, in order to ensure the convergence rates outlined in Theorem
\ref{prop:semiparametric-efficiency} below, one could assume Donsker-type
conditions for the nuisance function estimators, which restricts the
set of estimators available to use. Alternatively, to expand the
class of estimators that can be used for estimating the nuisance functions,
sample-splitting can be used. In particular, \citet{Chernozhukov2017}
suggest a ``cross-fitting'' procedure, which comprises the following
steps: (1) Randomly split data into $J$ folds: $\{S_{1},...S_{J}\}$;
(2) For each fold $S_{j}$, use the remaining ($j-1$) folds (training
sample) to fit a flexible machine-learning model for each of the nuisance
functions involved in the estimating equations; (3) For each observation
in $j$ (estimation sample), use estimates of the above models to
construct a set of estimated RIF functions for $\Delta_{k}\forall k\in\{0,\dots,K-1\}$, and for $(\pi_{k},\tau_{k},\eta_{k},\theta_{k})\forall k\in[K]$;
(4) Compute an estimate of the decomposition components by averaging
the estimated RIF functions across all subsamples $S_{1}$ through
$S_{J}$. When all nuisance functions are estimated via data-adaptive
methods and cross-fitting, the semiparametric efficiency of $\theta_{k}^{\mathrm{rEIF}}$
is given in the following Theorem:

\begin{theorem}[Semiparametric efficiency]\label{prop:semiparametric-efficiency}

Under Assumption \ref{assu:Positivity}, and under suitable regularity
conditions \citep[e.g.][]{Chernozhukov2018}, then $\hat{\theta}_{k}^{\text{eif}}$
is semiparametric efficient if $\sum_{j=k}^{k+1}\bigg[\sum_{l=0}^{j}R_{n}\big(\hat{\pi}_{l1})R_{n}\big(\hat{\mu}_{l0}^{j})\bigg]+\sum_{j=0}^{k-1}\bigg[R_{n}\big(\hat{\pi}_{j1})R_{n}\big(\hat{\mu}_{j0}^{k-1})+R_{n}\big(\hat{\pi}_{j1})R_{n}\big(\hat{\mu}_{j1}^{k-1})\bigg]+\sum_{j=0}^{k}\bigg[\sum_{l=0}^{j}R_{n}\big(\hat{\pi}_{l1})R_{n}\big(\hat{\gamma}_{l}^{j})\bigg]=o(n^{-1/2})$,
where $R_{n}\big(\cdot\big)$ denotes a mapping from a nuisance function
to its $L_{2}(P)$ convergence rate, and where $\hat{\mu}_{l0}^{K+1}\triangleq\hat{\mu}_{l1}^{K}$.
\end{theorem}

To gain some intuition for the result in Theorem \ref{prop:semiparametric-efficiency},
we can focus on $\theta_{1}=\pi_{\textup{1}}\Delta_{\textup{1}}+\eta_{\textup{1}}$,
i.e., the MPSE through $M_{1}$ when $K=1$. Note that estimation
of $\theta_{1}=\pi_{\textup{1}}\Delta_{\textup{1}}+\eta_{\textup{1}}$
requires estimating the following decomposition components: $(\pi_{\textup{1}},\Delta_{\textup{1}},\tau_{0,},\Delta_{0},\tau_{1})$.
To estimate these components, it suffices to estimate the following
quantities: $(\phi_{1},\psi_{01},\psi_{00},\psi_{10},\psi_{11})$.
In order for $\hat{\theta}_{1}^{\text{eif}}$ to be semiparametric
efficient, we require that the estimators employed for the set $(\phi_{1},\psi_{01},\psi_{00},\psi_{10},\psi_{11})$,
i.e., $(\hat{\phi}_{1}^{\text{eif}},\hat{\psi}_{01}^{\text{eif }},\hat{\psi}_{00}^{\text{eif}},\hat{\psi}_{10}^{\text{eif}},\hat{\psi}_{11}^{\text{eif }}),$
are themselves semiparametric efficient. Thus, a sufficient (but not
necessary) condition in order for $\hat{\theta}_{1}^{\text{eif}}$
to obtain the semiparametric efficiency bound is if, for any two nuisance
functions involved in $(\hat{\phi}_{1}^{\text{eif}},\hat{\psi}_{01}^{\text{eif }},\hat{\psi}_{00}^{\text{eif}},\hat{\psi}_{10}^{\text{eif}},\hat{\psi}_{11}^{\text{eif }})$,
the product of their convergence rates is $o(n^{-1/2})$. In this
way, $\hat{\theta}_{1}^{\text{eif}}$ will obtain the semiparametric
efficiency bound if all of its constituent nuisance functions converge
at a rate faster than $n^{-1/4}$ (although it will also obtain
the efficiency bound under a variety of alternative conditions).

Under the DML estimation procedure, inference for all components of the MPSE decomposition is conducted using the
efficient influence functions (EIFs) of the target parameters. When nuisance
functions are estimated using cross-fitting and data-adaptive methods, the resulting
DML estimators based on the EIFs derived above are asymptotically linear and converge at a $\sqrt{n}$-rate. In
particular, under standard regularity conditions \citep{Chernozhukov2017,Kennedy2022},
each estimator admits the expansion
\[
\sqrt{n}(\hat\theta - \theta)
=
\sqrt{n}(\mathbb{P}_n - \mathbb{P})\big[\phi(O)\big] + o_p(1),
\]
where $\phi(O)$ denotes the corresponding EIF. As a result, $\hat\theta$ is
asymptotically normal with variance $\mathbb{E}[\phi(O)^2]$, which is consistently
estimated by the empirical variance of the estimated EIF. Wald-type confidence
intervals reported in the main text are constructed using this plug-in variance
estimator. For example, inference on $\tau_1$ can be conducted
by estimating $\mathbb{P}_{n}[\big(\hat{\psi}_{11}^{\text{EIF}}-\hat{\psi}_{10}^{\text{EIF}}\big)^{2}]/n$.

\section{A Simulation Study\label{sec:Simulation}}

In this section, I evaluate the finite-sample performance of my two
estimation procedures via a simulation experiment. Supplementary Material \ref{app:Simulation} provides further details. Specifically, I
compare how the DML estimator proposed in Section \ref{sec:Estimation}
(as well as the parametric, RWR estimator described in Appendix \ref{App:rwr})
perform under different degrees of misspecification. Specifically, I consider the setting of two causally ordered monotonic mediators, with post-treatment confounding. I generate simulations of
observed data $O=(X_{1},X_{2},A,Z,M,Y)$ as follows:

\[
\begin{aligned}U_{1},U_{2},U_{3},U_{4} & \sim\text{MVN}\big(0_{4},I_{4}\big)\\
X_{1} & \sim\text{N}\big((U_{1},U_{2},U_{3},U_{4})\beta_{X_{1}},1\big)\\
X_{2} & \sim\text{N}\big((U_{1},U_{2},U_{3},U_{4})\beta_{X_{2}},1\big)\\
A & \sim\text{Bern}\big(g^{-1}\big[(1,X)\beta_{A}\big]\big)\\
Z|A=1 & \sim\text{N}\big((1,X)\beta_{Z},1\big)\\
M_1|A=1\sim\text{Bern}\big(g^{-1}\big[(1,X,Z)\beta_{M_1}\big]\big) & \quad M|A=0=0\\
M_2|M_1=1\sim\text{Bern}\big(g^{-1}\big[(1,X,Z)\beta_{M_2}\big]\big) & \quad M_2|M_1=0=0\\
Y & \sim\text{N}\big((1,A,X,AZ,AM_1, A M_1 M_2)\beta_{Y},1\big).
\end{aligned}
\]

The coefficients $(\beta_{X_{1}},\beta_{X_{2}},\beta_{Y})$ are drawn
from a $\text{Unif}\big(-1,1\big)$ distribution, while the coefficient $\beta_{A}$
is drawn from a $\text{Unif}\big(-0.5,0.5\big)$ distribution. $g(\cdot)$ is a link function as described in the Supplementary Material. Further, $X = (X_1,X_2)$. In
order to test how the DML and RWR methods perform when the relevant
models are misspecified, I also construct transformations of the observed
covariates ($X^{*}$) as follows, employing a similar
setup to \citet{Kang2007}:

\[
\begin{aligned}X_{1}^{*} & =(\text{exp}(X_{1}/2)-1)^{2}\\
X_{2}^{*} & =X_{2}/(1+\exp(X_{2}))+10\\
%Z^{*}|A=1 & =(X_{1}\cdot Z/25+0.6)^{3}
\end{aligned}
\]

For each simulated dataset, I construct two estimates of the path-specific
effects ($\theta_{0},\theta_{1},\theta_{2}$) via the RWR and
DML procedures described in Section \ref{sec:Estimation}. Standard
errors for the coverage rates are computed via the estimated variance
of the estimated EIFs for the DML approach, and via the nonparametric
bootstrap with 250 replications for the RWR procedure. For the DML
estimator, for each component involved in the decomposition, I construct
a Neyman-orthogonal ``signal'' using its EIF. The recentered EIFs
for each component are shown in Supplementary Material \ref{app:Simulation}.

I run 1000 replications of this DGP and compute the average bias and coverage of nominal 95\%
confidence intervals for sample sizes of 1000, 1500 and 2000 and using
either the ``correctly specified'' covariates $\left(X_{1},X_{2},Z\right)$
and the ``incorrectly specified'', transformed versions $\left(X_{1}^{*},X_{2}^{*},Z\right)$.
I calculate the true value of $\theta_{1}$ by recovering the true
values of the parameter set ($\theta_{0}$, $\theta_1,\theta_2$)
in each Monte Carlo simulation. 

Figure \ref{fig:Simulation} in Supplementary Material B presents the results from this simulation
exercise. Under correctly specified models, the DML and RWR estimators
perform similarly. In particular, both estimators exhibit negligible finite-sample bias
for $\theta_0$, $\theta_1$, and $\theta_2$, with absolute biases on the
order of $10^{-3}$ or smaller, and coverage rates close to the nominal
$95\%$ level. Under incorrectly specified models, however, the performance of the two
estimators diverges sharply.
The DML estimator remains stable: even when supplied with a
misspecified feature space, absolute biases remain small for all
MPSEs, and
coverage rates remain close to nominal, typically between $92\%$ and
$95\%$. By contrast, the RWR
estimator performs much more poorly, displaying a large amount of
bias that in fact grows with the sample size, a large RMSE, and coverage
rates that are not close to nominal. Overall, when the relevant models are correctly specified, both the
parametric and semiparametric approaches perform well.
Under misspecification, however, the advantages of the semiparametric estimation strategy become clear.

\section{Empirical Analysis\label{sec:Empirical-Analysis}}

To illustrate my approach empirically, I draw on data from the National
Longitudinal Survey of Youth 1997 (NLSY97). I parse out the direct
effect of high school graduation on adult earnings and its indirect
or continuation effects via (i) college attendance, (ii) college graduation,
and (iii) graduate school attendance. My analytic sample comprises
$N=7,305$ respondents.

I construct four types of variables: educational transitions, adult
earnings, a set of confounders for the effect of high school graduation
on subsequent transitions and earnings, and a single set of intermediate
confounders for the effect of college completion on subsequent transitions
and earnings. My educational transition variables contain a binary
treatment denoting whether a respondent had graduated high school
by age 22, and three binary mediators denoting whether the respondent
had attended a 4-year college by age 22, whether the respondent had
received a BA degree by age 29, and whether the respondent had enrolled
in a graduate level program by age 29, respectively. I assume that
all individuals who make a given educational transition have made
all previous educational transitions. Thus, by construction, my coding
strategy disallows for cases which violate the monotonicity assumption.\footnote{Assuming away cases in which an individual makes a particular educational
transition without having made \textit{all} previous transitions serves
as a reasonable approximation to reality. Among the set of individuals
who have non-missing earnings information in the NLSY97 (i.e., those
who comprise my analytic sample), 94\% of individuals observed to
attend graduate school by age 29 also completed a BA by age 29; 93\%
of respondents who completed a BA by age 29 had attended a 4-year
college by age 22 (6\% of those who completed a BA by age 29 first
attended a 4-year college between ages 23 and 26 inclusive), and 99\%
of respondents who attended a 4-year college by age 22 had also completed
high school.}

My outcome of interest is logged average annual earnings at ages 32–36. 
For each respondent and each survey year in this age range, I construct total 
annual labor-market income by summing wage, salary, and business income, and 
then compute the  (logged) average of these annual totals across this age range. This multi-year average yields a more stable measure of early-adult earnings that smooths over short-term income fluctuations.
\footnote{Age 36 is the latest age at which earnings are consistently observed 
in the NLSY97. Because earnings gains associated with graduate education may 
materialize later in the life course, I also assess robustness to using a later 
earnings window. In Supplementary Material H, I re-estimate the full MPSE 
decomposition using the NLSY79 cohort and measure logged earnings over ages 
35--44. The qualitative patterns are highly similar, and the contribution of 
the graduate-education pathway remains modest, for reasons discussed further in 
the supplement.}
Earnings are
adjusted for inflation to 2023 dollars using the personal consumption
expenditures (PCE) index. After dropping respondents with missing
earnings information, I accommodate those with zero earnings by adding
a small constant of $\$1,000$ to observed earnings (though in
Supplementary Material F, I replicate my main analyses under alternative
definitions of earnings).

In an effort to satisfy the sequential ignorability assumption (Assumption
\ref{assu:SI}), I include a large array of covariates in my models. This set of covariates is more expansive
than those used in previous observational studies of returns to education (see in particular
\citealp{Scott-Clayton}). In particular, in addition to including information
on respondent demographics (gender, race, ethnicity, age in 1997),
and observed pre-college performance such as overall high school GPA
and test score on the Armed Services Vocational Aptitude Battery (ASVAB),
I include detailed information on socioeconomic background. Since
my proposed decomposition also facilitates the inclusion of a distinct
set of observed intermediate confounders for each transition, I include two postsecondary characteristics
($Z$) to adjust for confounders of the effect of BA completion and
graduate school attendance on earnings: field of study and
college GPA. To assess the robustness of my main conclusion
to forms of unobserved confounding, in Supplementary Material C, I produce a set of ``bias-corrected'' estimates of the decomposition components under certain assumptions about the nature of the confounding.

A large proportion (just under $50\%$) of respondents are missing
information on covariates $X$ and $Z$. For my main analyses, I impute
missing values on these covariates via multiple imputation to increase
efficiency, but in Supplementary Material E, I replicate these
analyses restricted to the sample of respondents with complete information.
This exercise produces substantively similar results (for covariate
means for each of these analytic samples, see Supplementary Material D).
After constructing the analytical sample, I apply both the DML estimator
described in Section \ref{sec:Estimation} as well as a parametric,
regression-with-residuals (RWR) algorithm (described in Supplementary Material A)
to implement the proposed decomposition. For the DML approach, I estimate
all nuisance functions, using a super learner composed of the Lasso
and random forest and, following \citet{Chernozhukov2017}, use five-fold
cross-fitting. Supplementary Material J
gives further details about the particular models required given my
assumed data generation process.

I conduct two sets of analyses. First, I assume that positivity holds at every transition of the
education sequence, i.e., the estimated propensity of each transition lies strictly within
$(0,1)$. Second, I relax the positivity assumption by allowing units to have transition
propensities of exactly zero or one. In this scenario, the MPSE decomposition is unidentified, and therefore requires 
a redefinition of the target quantity as the decomposition among the common-support subpopulation across transitions. I present results under the first approach (assuming
positivity) in the main text, and compare them with results from the common-support, trimmed
sample under the second approach (relaxing positivity) in Supplementary Material~G. In practice the
two approaches yield substantively similar conclusions: only the
college-completion pathway attenuates modestly once units with extreme propensity scores are
excluded.

Figure \ref{fig:Decomp} shows my estimates of the average total effect
(ATE) on log earnings and its direct and continuation components under
both the DML and RWR procedures. Standard errors for RWR estimates are obtained via the non-parametric bootstrap, while standard errors for DML estimates are obtained via the variance of the estimated EIF for each MPSE. Both procedures return similar estimates,
though deviate in the estimated magnitude of MPSE $\theta_{1}$, and
DML estimates come expectedly with a significantly greater amount
of precision. The first column shows that the estimated ATE of graduating
high school on log earnings under DML (RWR) is $0.68$ ($0.63$),
which implies an earnings premium of approximately $96\%$. The next
two columns indicate that the vast majority ($69\%$ under DML and
$75\%$ under RWR) of the ATE operates directly, i.e. net of college
attendance, BA completion and graduate school attendance (MPSE $\theta_{0}$,
$A\rightarrow Y$). Specifically, high school graduates who do not
proceed to college can be expected to earn on average $0.47$
log earnings more than high school non-completers under DML,
an earnings premium of $59\%$ (RWR returns an identical estimate).

While the majority of the ATE is explained by the direct effect, a non-trivial portion occurs through mediation effects through later transitions. Under DML, the continuation effects of high school graduation via
college attendance without BA completion (MPSE $\theta_{1}$, $A\rightarrow M_{1}\rightarrow Y$)
and via BA completion without graduate school participation (MPSE
$\theta_{2}$, $A\rightarrow M_{1}\to M_{2}\rightarrow Y$) both mediate
roughly $15\%$ of the ATE, and correspond to an earnings premium
of approximately $10\%$. The RWR estimate of $\theta_{1}$ is notably
lower at $0.03$ and is also imprecisely estimated. Under both estimation
procedures, the continuation effect via graduate school attendance ($A\rightarrow M_{1}\rightarrow M_{2}\rightarrow M_{3}\rightarrow Y$)
is very small and fails to reach conventional levels of significance.
In sum, the total effect of high school graduation on earnings is
determined overwhelmingly by its direct effect
on earnings.

Table \ref{tab:Decomp-2} shows DML and RWR estimates of the various
components (the direct effects ($\Delta_{k}$), probabilities ($\pi_{k}$)
and covariance terms ($\eta_{k}$)) that constitute the continuation
effects $\theta_{k}$. Several points are of note. First, the components
in the table offer insights into the economic and educational returns
to different educational stages. The direct effects of each educational
transition ($\Delta_{k}$) are highly variable: they are largest for
high school graduation and for college completion (both at $0.47$
under DML and RWR), and lowest for college attendance and graduate school
participation (at $0.2$ and $0.14$, respectively, under DML). Note that the payoff
to graduate school attendance could be depressed by the fact that
I observe individuals at a maximum age of only 36, if graduate school
earnings premia materialize only much later in the life course. The
counterfactual continuation probabilities ($\pi_{k}$) also provide
insight into barriers in educational participation. In particular,
even if an individual were to complete high school (possibly contrary to fact), that individual
would have under a $50\%$ chance of continuing to a 4-year college without further intervention to increase individuals' college application,
admissions and enrollment rates. Further, even if individuals were to counterfactually both complete high school and attend a 4-year college, only a very
small proportion ($\pi_{1}\cdot\pi_{2}=0.24$) would be expected to
complete their BA degree without further intervention at the college-level.

Second, the fine-grained nature of the MPSE decomposition enables
us to trace the continuation effects to their constituent components.
In particular, while the direct effect of high school completion is
comparable to the direct effect of BA graduation on earnings, the overall continuation effect via BA completion that it informs
(MPSE $\theta_{2}$, $A\rightarrow M_{1}\to M_{2}\rightarrow Y$)
is small because $\theta_{2}$
is approximately (plus the small value of $\eta_{2}$) equal to $\Delta_{2}$
scaled by the product $\pi_{1}\cdot\pi_{2}=0.24$. In words, despite
the relatively large direct effect of BA completion on earnings, given
individuals' low counterfactual probability of BA completion, this
transition is not an important mediating pathway of the total effect
of high school completion on earnings. The result is that college
attendance without completion mediates high school graduation's earnings
effects as much as BA completion, despite the fact that college attendance
without completion yields a much smaller earnings return for high
school graduates than BA completion without graduate school attendance
does for college enrollees.

One instructive point of comparison for these results is instrumental variable (IV) estimates of returns
to years of schooling, typically estimated in the range of $6\%$ to $12\%$ \citep{Angrist1991,Angrist1992,Kane1993,Card1994,Ashenfelter1997,Angrist2011}.
While my estimate of the overall return to high school graduation
($\tau_{0})$ could appear large in this light, several factors could
reconcile this difference. First, $\tau_{0}$ captures the direct \textit{and}
continuation effects of high school completion (whereas IV estimates
of schooling returns capture schooling's direct effects). Further,
$\tau_{0}$ captures the effect of multiple additional years of schooling
(as the high school graduates and high school non-completers that
form the comparison group differ by multiple years of schooling),
as opposed to a single year's additional return. In fact, we can more
directly compare my DML estimate of the direct return to high school
graduation ($\Delta_{0}$) of $0.46$ (corresponding to an earnings
premium of $58\%$) using the fact that, in the NLSY97, high school
non-completers attained on average $3.7$ fewer years of schooling
than high school completers. An IV estimate of $12\%$, for example,
would therefore imply an earnings return to $3.7$ additional years
of approximately $52\%$ - broadly in line with my result. Still,
to assess the robustness of the above findings to potential violations
of Assumption \ref{assu:SI} (Sequential Ignorability), I implement
a sensitivity analysis in Supplementary
Material C. Under the stated assumptions about the pattern of unobserved
confounding, my primary finding that the ATE of high school graduation
is overwhelmingly mediated via its direct effect remains highly robust
to unobserved confounding.

\section{Conclusion\label{sec:Conclusion}}

In this article, I have developed a causal mediation framework for
analyzing education effects on earnings. First, I have demonstrated
that the total effect of any level of education can be decomposed
into a direct effect and $K$ mutually exclusive ``continuation'' effects. All of these effects are identifiable under the assumption of sequential ignorability.
Importantly, this property allows for the effect of each educational transition
to be confounded by a distinct set of observed covariates---a property
which allows for weaker identification conditions compared with conventional
mediation-based decompositions of the ATE \citep{Miles2017,Zhou2022b}.

Several directions for future research follow naturally from the proposed framework.
First, in this paper I have considered a decomposition
of the average treatment effect for the case of binary monotonic mediators, but many
educational processes are more finely graded. Extensions to settings with categorical
or multivalued transitions—such as different types of postsecondary institutions,
fields of study, or graduate degrees—would further broaden the applicability of the
framework. Supplementary Material I outlines one such extension, but generalizing the framework to categorical and continuous mediators remains an open area for future work.

Second, although the MPSE decomposition relaxes cross-world assumptions and permits
observed intermediate confounding, it still relies on sequential ignorability. In
practice, this assumption may be difficult to satisfy fully in observational
settings, particularly when selection into successive educational transitions is
driven by unobserved traits such as motivation or ability. Developing alternative
identification results that, for example, exploit instrumental variables—long used
in the education literature to address selection into schooling—would be a
particularly promising extension of the MPSE framework.

Finally, while this paper emphasizes educational attainment, the monotonic structure
exploited here arises in many other domains characterized by state-dependent
transitions, such as family formation, health progression, or criminal justice
contact. This characteristic is particularly
salient in demographic phenomena. Certain demographic events are rigid in their
monotonicity by definition. For example, researchers
may be interested in discerning the degree to which positive effects
of marriage on outcomes such as earnings and life satisfaction are
undermined by the negative effects of divorce and separation (and,
in turn, their mitigation via re-marriage) \citep{Kenney2004,Sweeney2004} -- monotonic transitions. Similarly, the effect of parenthood on earnings can be seen
as operating directly, through the effect of having a first child
net of subsequent children, as well as operating indirectly through
the effects of having multiple children. A similar perspective may be taken in a criminal
justice context: the total effect of early-stage police contact (such
as being searched for contraband) on educational and socio-psychological
outcomes can be decomposed into path-specific effects via subsequent
arrest and incarceration \citep{Weaver2010,Kirk2013,Sugie2017}. Applying the MPSE framework to these contexts, and comparing the resulting
decompositions across domains, may yield new insights into how life transitions
shape later outcomes through both direct and sequential mechanisms.

\vspace{2em}
\newpage
\section{Tables and figures}\hspace{1em}

\begin{table}[h!]
\caption{Direct Effects ($\Delta_{k}$), Probabilities ($\pi_{k}$) and Covariance
Terms ($\eta_{k}$) Involved in Decomposition via Debiased Machine-Learning
(DML) and Regression-With-Residuals (RWR). \label{tab:Decomp-2}}
\centering
\begingroup\fontsize{9.2pt}{9.2pt}\selectfont\setlength{\tabcolsep}{3pt}
\renewcommand{\arraystretch}{1.5} 
\begin{tabular}{lllllllllll}
   \hline & $\Delta_0$ & $\Delta_1$ & $\Delta_2$ & $\Delta_3$ & $\pi_1$ & $\pi_2$ & $\pi_3$ & $\eta_1$ & $\eta_2$ & $\eta_3$  \\ \hline 
 % DML & 0.462 & 0.200 & 0.463 & 0.122 & 0.427 & 0.554 & 0.315 & 0.006 & 0.005 & -0.016 \\ 
  % & (0.059) & (0.034) & (0.046) & (0.029) & (0.009) & (0.015) & (0.022) & (0.007) & (0.007) & (0.009) \\ 
  % RWR & 0.469 & 0.117 & 0.491 & 0.160 & 0.374 & 0.515 & 0.219 & -0.016 & 0.071 & 0.017 \\ 
  % & (0.115) & (0.082) & (0.097) & (0.113) & (0.015) & (0.066) & (0.018) & (0.007) & (0.015) & (0.046) \\ 
     DML & 0.466 & 0.200 & 0.444 & 0.138 & 0.427 & 0.555 & 0.314 & 0.007 & 0.004 & -0.002 \\ 
   & (0.057) & (0.031) & (0.034) & (0.040) & (0.008) & (0.016) & (0.011) & (0.007) & (0.008) & (0.008) \\ 
   RWR & 0.469 & 0.116 & 0.466 & 0.163 & 0.374 & 0.515 & 0.159 & -0.016 & 0.083 & 0.028 \\ 
   & (0.115) & (0.093) & (0.097) & (0.114) & (0.015) & (0.077) & (0.022) & (0.007) & (0.015) & (0.051) \\ 
   \hline
\end{tabular}

\vspace{.75em} 

\fontsize{10pt}{12pt}\selectfont 
\noindent
Note: The $\Delta_{k}$ parameters capture the average effect of completing
the $k$th mediator \textit{but no subsequent mediator} on earnings,
relative to completing the $k-1$th mediator. For instance, $\Delta_{0}$
denotes the effect of completing high school ($M_{1}$) but not attending
college nor, under Assumption \ref{assu:Monotonicity}, completing
any subsequent mediators, relative to attending high school but not
completing it ($M_{0}\equiv A$). The $\pi_{k}$ terms capture the
average of individuals' counterfactual completion status of the $k$th
mediator under completion of all prior mediators $M_{0},\dots M_{k-1}$.
For example, $\pi_{1}$ denotes individuals' average counterfactual
college attendance, after---possibly contrary to fact---their completion
of high school. Finally, the $\eta_{k}$ terms refer to the covariance
between individuals' own counterfactual completion status of the $k$th
mediator, and their own ``gross'' effect of completing the $k$th
mediator on earnings. To recall, the ``gross'' effect of the $k$th
mediator captures the effect of completing that mediator, relative
to completing only the $k-1$th mediator, irrespective of whether
that effect operates directly (net of subsequent mediators) or via
subsequent transitions. For example, $\eta_{1}$ denotes the covariance
between each individual's counterfactual college attendance status
and their gross effect of college attendance on earnings.
\endgroup
\end{table}

\begin{figure}[p]
\centering
\includegraphics[width=\linewidth,keepaspectratio]{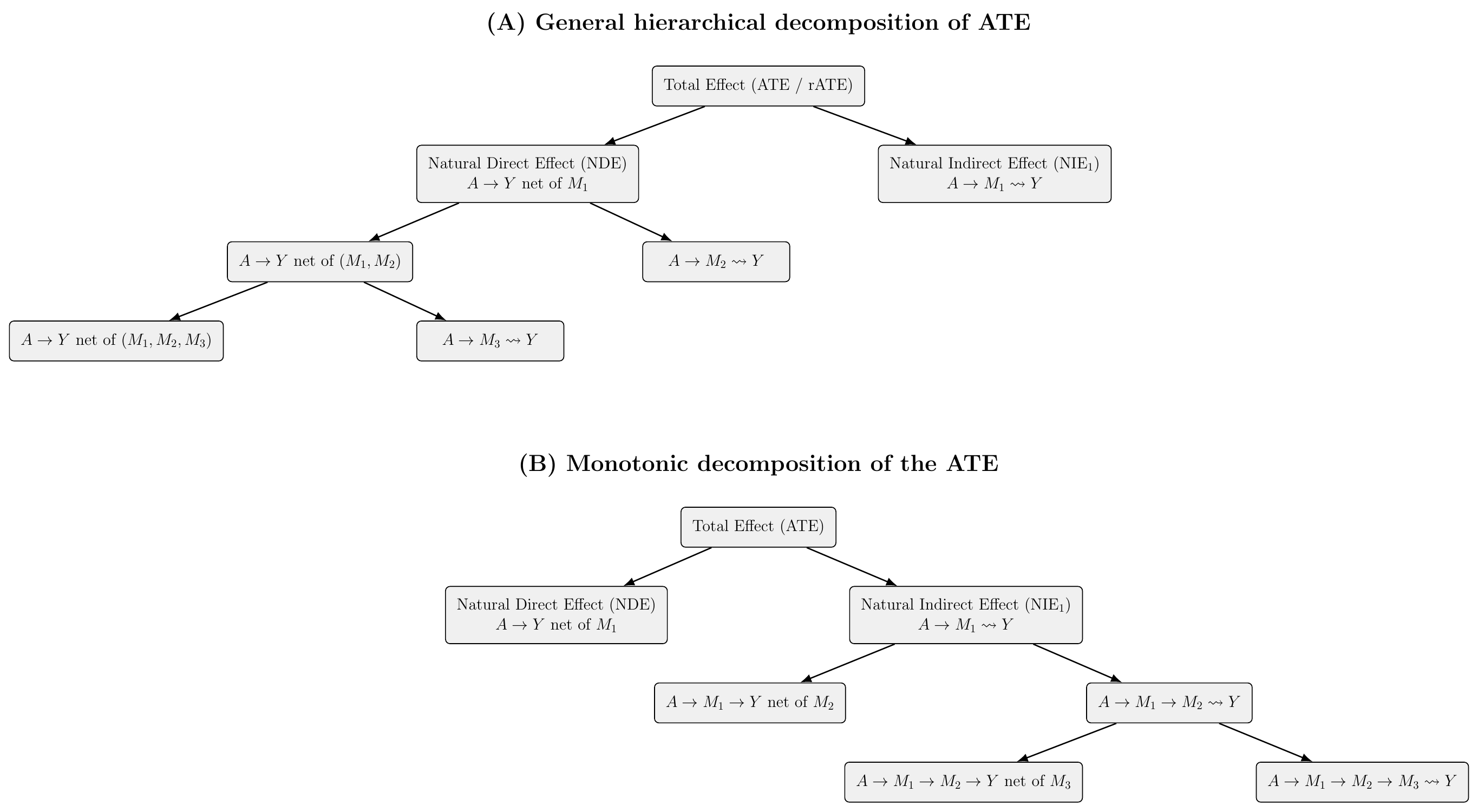}
\vspace{0.5em}
\caption{Nested decomposition of the Average Treatment Effect (ATE) into direct and sequential indirect pathways: general hierarchical decomposition (Panel A) and proposed monotonic decomposition (Panel B).}
\label{fig:decomp-tree}
\end{figure}

\newpage
\begin{figure}[h!]
\begin{centering}
\includegraphics[scale=0.6]{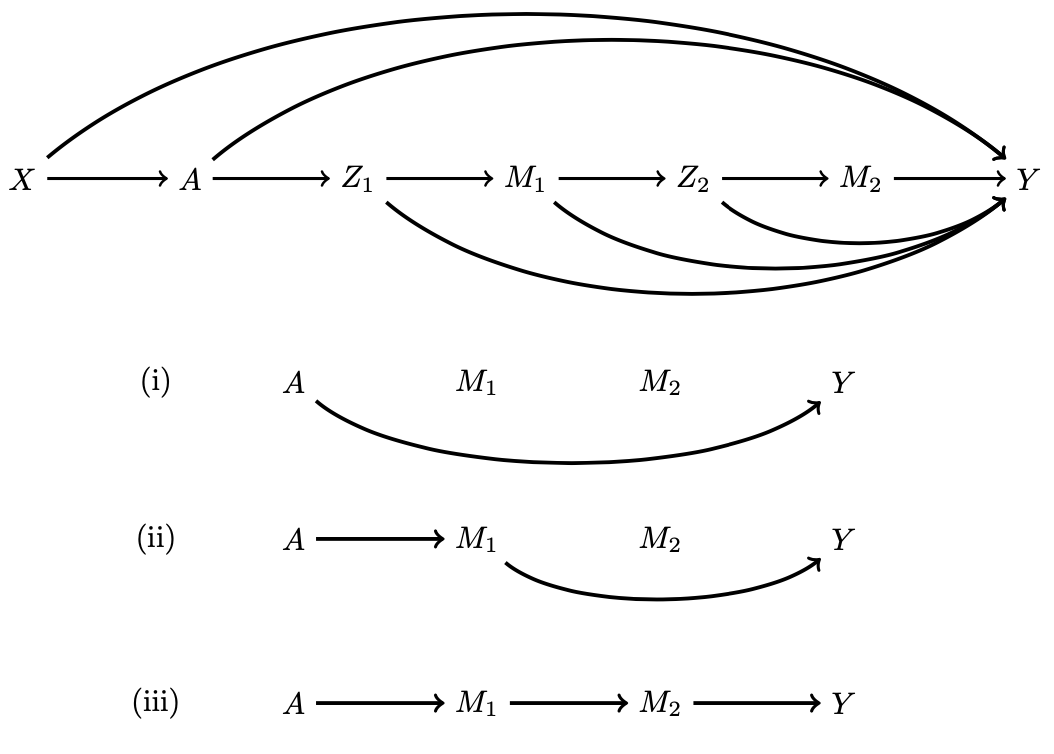}
\par\end{centering}
\caption{Causal Relationships Between Two Monotonic Mediators Shown in a Directed
Acyclic Graph (DAG) and the 3 Monotonic Path Specific Effects (MPSEs).
$A$ denotes an initial transition of interest, $Y$, an outcome,
and $M_{1}$ and $M_{2}$ are two causally ordered, monotonic mediators.
The set $(X,Z_{1},Z_{2})$ captures pre-treatment and intermediate
confounders. \label{fig:MPSEs}}
\end{figure}

\newpage
\begin{figure}[h!]
\begin{centering}
\includegraphics[scale=0.6]{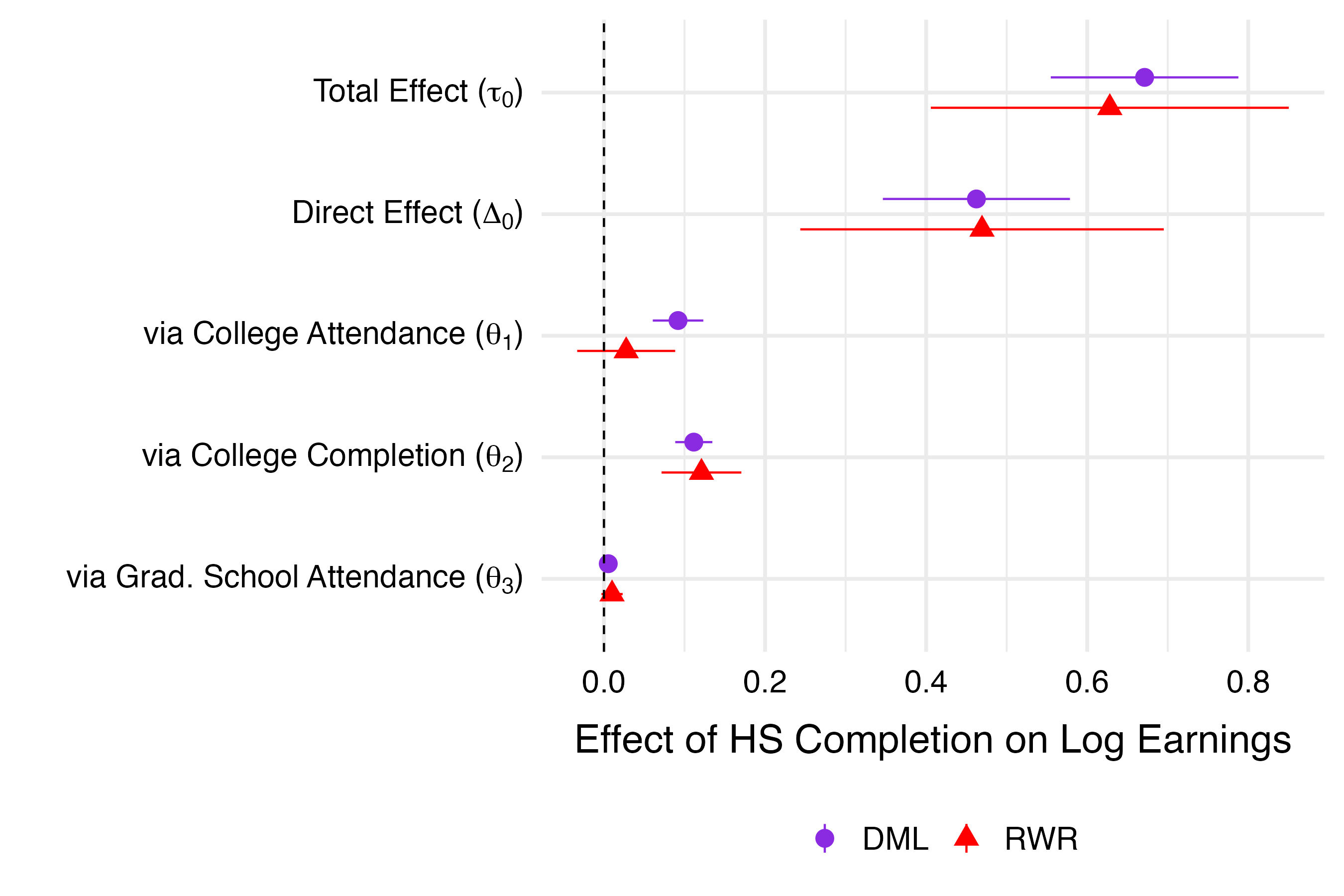}
\par\end{centering}
\caption{Decomposition of the Average Total Effect (ATE) of High School Graduation
on Logged Earnings via Debiased Machine-Learning (DML) and Regression-With-Residuals
(RWR).\label{fig:Decomp}}
\end{figure}

% \section*{APPENDICES MUST BE MOVED TO A SUPPLEMENT FILE}
%%%%%%%%%%%%%%%%%%%%%%%%%%%%%%%%%%%%%%%%%%%%%%
%% Support information, if any,             %%
%% should be provided in the                %%
%% Acknowledgements section.                %%
%%%%%%%%%%%%%%%%%%%%%%%%%%%%%%%%%%%%%%%%%%%%%%
\FloatBarrier
\newpage

%%%%%%%%%%%%%%%%%%%%%%%%%%%%%%%%%%%%%%%%%%%%%%
%% Supplementary Material, including data   %%
%% sets and code, should be provided in     %%
%% {supplement} environment with title      %%
%% and short description. It cannot be      %%
%% available exclusively as external link.  %%
%% All Supplementary Material must be       %%
%% available to the reader on Project       %%
%% Euclid with the published article.       %%
%%%%%%%%%%%%%%%%%%%%%%%%%%%%%%%%%%%%%%%%%%%%%%

%%%%%%%%%%%%%%%%%%%%%%%%%%%%%%%%%%%%%%%%%%%%%%%%%%%%%%%%%%%%%
%%                  The Bibliography                       %%
%%                                                         %%
%%  imsart-nameyear.bst  will be used to                   %%
%%  create a .BBL file for submission.                     %%
%%                                                         %%
%%  Note that the displayed Bibliography will not          %%
%%  necessarily be rendered by Latex exactly as specified  %%
%%  in the online Instructions for Authors.                %%
%%                                                         %%
%%  MR numbers will be added by VTeX.                      %%
%%                                                         %%
%%  Use \cite{...} to cite references in text.             %%
%%                                                         %%
%%%%%%%%%%%%%%%%%%%%%%%%%%%%%%%%%%%%%%%%%%%%%%%%%%%%%%%%%%%%%

%% if your bibliography is in bibtex format, uncomment commands:
\bibliographystyle{plainnat} % Style BST file
\bibliography{ed_decomp6}       % Bibliography file (usually '*.bib')

\newpage
\setcounter{section}{0}
\renewcommand{\thesection}{\Alph{section}}
\setcounter{table}{0}
\setcounter{figure}{0}
\renewcommand{\thetable}{S\arabic{table}}
\renewcommand{\thefigure}{S\arabic{figure}}
\section*{Supplemental Materials (to appear online)}

\section{Parametric, regression-with-residuals (RWR) estimation\protect\label{App:rwr}}

In this section, I propose a linear regression-with-residuals (RWR)
approach for the MPSE decomposition. The approach relies on two steps.
The first involves residualizing pre-treatment confounders with respect
to their marginal means, and intermediate confounders on all causally
prior confounders, i.e., $X^{\perp}\triangleq X-\mathbb{E}[X]$, and
$Z_{k}^{\perp}\triangleq M_{k-1}\big[Z_{k}-\mathbb{E}[Z_{k}\mid X,\overline{Z}_{k-1},M_{k-1}=1]\big]$
for all $k\in[K]$, $M_{0}\triangleq A$. For now, we are agnostic
about the functional form used for $\mathbb{E}[Z_{k}\mid X,\overline{Z}_{k-1},M_{k-1}=1]$.
The second step involves fitting three sets of models. The first is
simply a model for the outcome given pre-treatment covariates and
the treatment, namely,

\begin{equation}
\begin{aligned}\mathbb{E}[Y\mid X,A] & =\lambda_{0}+\lambda_{1}A+\alpha_{1}^{T}X^{\perp}+\alpha_{2}^{T}AX^{\perp};\end{aligned}
\label{eq:rwr-ate}
\end{equation}

The second is a set of models for the outcome given covariates, the
treatment and $M_{k}$ for all $k\in[K]$, i.e.,

\begin{align}
\mathbb{E}[Y|X,\overline{Z}_{k},A,\overline{M}_{k}]= & \beta_{k,0}+c_{k,0}A+\sum_{j=1}^{k}\beta_{k,j}M_{j}+\eta_{k,1}^{\top}X^{\perp}+c_{k,1}AX^{\perp}+\sum_{j=1}^{k-1}\eta_{k,j}^{T}M_{j}X^{\perp}\label{eq:rwr-outcome-1}\\
 & +\sum_{j=1}^{k}\gamma_{k,j}^{T}Z_{j}^{\perp}+\sum_{j=1}^{k-1}M_{j}\sum_{l=1}^{j}\xi_{k,k,l}^{\top}Z_{l}^{\perp},\nonumber 
\end{align}

 while the third is a set of models for each mediator given covariates,
the treatment, conditional on the treatment and all prior mediators,
i.e., for all $k\in[K-1]$,

\begin{align}
\mathbb{E}[M_{k+1}\mid X,\overline{Z}_{k},\overline{1}_{k+1}]= & \theta_{k,0}+\delta_{k,1}^{T}X^{\perp}+\sum_{j=1}^{k}\delta_{k,j+1}^{T}Z_{j}^{\perp}.\label{eq:rwr-mediator}
\end{align}

These models differ from conventional linear regression in that (i)
pre-treatment variables are centered around their marginal means,
and (ii) post-treatment confounders $Z_{k}\forall k\in\{1,\dots K\}$
are centered around their conditional means given all antecedent variables.
Under Assumptions \ref{assu:Consistency}-\ref{assu:Positivity} in
the main text, and assuming that the outcome and mediators are linear
in pre- and post-treatment confounders, the treatment, and prior mediators,
and that all necessary interaction terms have been accounted for,
then the ATE $\tau_{0}$ can be obtained from the linear model $\mathbb{E}[Y\mid X,A]$
as $\lambda_{1}$, and coefficients from the models $\mathbb{E}[Y\mid X,A,\overline{Z}_{k},\overline{M}_{k}]$
and $\mathbb{E}[M_{k+1}\mid X,A,\overline{Z}_{k},\overline{M}_{k}]$
yield estimates of the components of the decomposition as follows:

\begin{align*}
\tau_{k} & =\mathbb{E}[Y(\overline{1}_{k+1})-Y(\overline{1}_{k},0)]=\beta_{k,k},\forall k\in\{1,\dots,K\},\\
\Delta_{k} & =\mathbb{E}[Y(\overline{1}_{k+1},\underline{0}_{k+2})-Y(\overline{1}_{k},\underline{0}_{k+1})]=\beta_{k+1,k-1},\forall k\in\{0,\dots,K-1\},\\
\pi_{k+1} & =\mathbb{E}[M_{k+1}(\overline{1}_{k+1})]=\theta_{k,0},\forall k\in\{0,\dots,K-1\}.
\end{align*}

I state the RWR estimation procedure formally in the following algorithm:
\begin{algorithm}[H]
\caption{RWR}
\end{algorithm}
\begin{enumerate}
\item For each of the baseline confounders, compute $\hat{X}^{\perp}=X-\mathbb{P}_{n}[X]$,
where $\mathbb{P}_{n}[\cdot]$ denotes empirical average.
\item Fit $\mathbb{\hat{E}}[Y\mid X,A]$ using the linear specification
shown above; an estimate of $\tau_{0}$ is given by $\hat{\lambda_{1}}$.
\item For each set of post-treatment confounders $Z_{k}$, $k\in\{1,\dots,K\}$,
compute $Z_{k}^{\perp}=M_{k-1}\big[Z_{k}-\mathbb{E}[Z_{k}\mid X,\overline{Z}_{k-1},M_{k-1}=1]\big]$
where an overbar denotes a vector of variables such that $\overline{Z}_{k}=(Z_{1},\dots,Z_{k})$,
by fitting a regression of $Z_{k}$ on $X$ and $\overline{Z}_{k-1}$
among units with $M_{k-1}=1$ and then calculating the residuals.
\item For each $k\in\{1,\dots K\}$:
\begin{enumerate}
\item compute least squares estimates of equations \ref{eq:rwr-outcome-1}
and \ref{eq:rwr-mediator}, using estimates of $X^{\perp}$ and $Z_{k}^{\perp}$.
\item compute $\text{\ensuremath{\hat{\tau}}}_{k}^{\text{RWR}}=\hat{\beta}_{k,k}$,
$\hat{\Delta}_{k-1}^{\text{RWR}}=\hat{\beta}_{k,k-1}$, and $\hat{\pi}_{k}^{\text{RWR}}=\hat{\theta}_{k-1,0}$.
\end{enumerate}
\item Compute the decomposition using $\hat{\tau}_{k}$, $\hat{\Delta}_{k}$
and $\hat{\pi}_{k+1}$, and estimating the covariance terms as $\hat{\eta}_{k}^{\text{RWR}}=\hat{\beta}_{k-1,k}-\hat{\beta}_{k,k-1}-\hat{\beta}_{k,1}\hat{\theta}_{k-1,k}$,
and the continuation effects as $\hat{\theta}_{k}^{\text{RWR}}=(\Pi_{j=1}^{k}\hat{\pi}_{j}^{\text{RWR}})\hat{\Delta}_{k}^{\text{RWR}}+(\Pi_{j=1}^{k-1}\hat{\pi}_{j}^{\text{RWR}})\hat{\eta}_{k}^{\text{RWR}}$.
\end{enumerate}
Standard errors and confidence intervals can then be obtained via
the non-parametric bootstrap, or by using their asymptotic analytic
variance. Specifically, let $\hat{\theta_{k}^{*}}\triangleq(\hat{\beta}_{k,0},\hat{\beta}_{k,1},\theta_{k,0})$
denote a set of parameters. Under the above models, we have that $\hat{\theta}^{*}=\{\hat{\lambda}_{1},\theta_{1}^{*},\dots\theta_{K}^{*}\}$
solves $\mathbb{P}_{n}[g(O;\hat{\theta}^{*})]=0$, where $g(O;\theta^{*})$
is the set of stacked moment conditions with solution $\hat{\theta}^{*}$.
Under standard regularity conditions \citep{Newey1994}, under correct
specification of the models wherein all residualized quantities are
estimated via linear models, the set $\hat{\theta}^{*}$ is consistent
and asymptotically normal, such that $\sqrt{n}\big(\hat{\theta}^{*}-\theta^{*})$
converges to a mean-zero normal distribution with finite variance
$V=G^{-1}\Omega(G^{-1})^{\top}$, where $\Omega=\mathbb{E}[g(O;\theta^{*})g(O;\theta^{*})^{\top}]$,
and where $G=\mathbb{E}[\frac{\partial g(O;\theta^{*})}{\partial\theta^{\top}}]$.
It follows by a simple application of the Delta Method that the set
$\hat{\gamma_{k}^{*}}\triangleq(\hat{\tau}_{k}^{\text{RWR}},\hat{\Delta}_{k-1}^{\text{RWR}},\hat{\pi}_{k}^{\text{RWR}},\hat{\eta}_{k}^{\text{RWR}},\hat{\theta}_{k}^{\text{RWR}})\forall k\in[K]$
is also consistent and asymptotically normal.

\clearpage

\section{Further details on simulation study\protect\label{app:Simulation}}

In this section, I provide further details on the simulation exercise described in the main text. To recall, the assumed data-generating process is

\[
\begin{aligned}U_{1},U_{2},U_{3},U_{4} & \sim\text{MVN}\big(0_{4},I_{4}\big)\\
X_{1} & \sim\text{N}\big((U_{1},U_{2},U_{3},U_{4})\beta_{X_{1}},1\big)\\
X_{2} & \sim\text{N}\big((U_{1},U_{2},U_{3},U_{4})\beta_{X_{2}},1\big)\\
A & \sim\text{Bern}\big(g^{-1}\big[(1,X)\beta_{A}\big]\big)\\
Z|A=1 & \sim\text{N}\big((1,X)\beta_{Z},1\big)\\
M_1|A=1\sim\text{Bern}\big(g^{-1}\big[(1,X,Z)\beta_{M_1}\big]\big) & \quad M|A=0=0\\
M_2|M_1=1\sim\text{Bern}\big(g^{-1}\big[(1,X,Z)\beta_{M_2}\big]\big) & \quad M_2|M_1=0=0\\
Y & \sim\text{N}\big((1,A,X,AZ,AM_1, A M_1 M_2)\beta_{Y},1\big).
\end{aligned}
\]

The coefficients $(\beta_{X_{1}},\beta_{X_{2}},\beta_{Y})$ are drawn
from a $\text{Unif}\big(-1,1\big)$ distribution, while the coefficient $\beta_{A}$
is drawn from a $\text{Unif}\big(-0.5,0.5\big)$ distribution. When evaluating the DML estimation procedure, I set $g$ to be the logistic link: $g^{-1}(x) \;=\; \frac{\exp(x)}{1+\exp(x)}$. While the DML estimator is agnostic to the functional form of $g$---all nuisance functions, including propensity scores and mediator models, are
estimated using flexible machine-learning methods---the RWR estimator relies on parametric linear regressions for
both mediator and outcome models. When evaluating the RWR estimator, I therefore set $g$ to be the identity link.\footnote{
More subtly, even if the outcome model is linear in
$(M_1,M_2)$ by construction, the regression of $Y$ on
$(X,A,M_1,Z)$ used by RWR is generally misspecified when mediator models
are nonlinear. To see this, note that
\[
\mathbb{E}[Y\mid X,A=1,M_1=m,Z]
=
\mathbb{E}\!\left[
\mathbb{E}[Y\mid X,A=1,M_1=m,M_2,Z]
\;\middle|\;
X,A=1,M_1=m,Z
\right],
\]
which involves integrating a linear function of $M_2$ with respect to a
nonlinear conditional distribution of
$M_2\mid X,Z,A=1,M_1=1$.
Unless $\mathbb{E}[M_2\mid X,Z,A=1,M_1=1]$ is linear in $X$,
this marginal conditional expectation is itself nonlinear in $X$.
Thus, even when the structural outcome model is linear, the reduced-form
outcome regression used by RWR will be misspecified unless $g$ is the identity link.}

For the DML
estimator, for each component involved in the decomposition, I construct
a Neyman-orthogonal ``signal'' using its EIF. The recentered EIFs
for each component are as follows:

\begin{align*}
M_1^{*}(1)
&=
\gamma_{1}(X)
+\frac{\mathbb{I}(A=1)}{\pi_{0}(X,1)}\bigl(M_1-\gamma_{1}(X)\bigr),\\[6pt]
M_2^{*}(1,1)
&=
\gamma_{2}(X)
+\frac{\mathbb{I}(A=1)\mathbb{I}(M_1=1)}{\pi_{0}(X,1)\,\pi_{1}(X,Z,1)}
\bigl(M_2-\gamma_{2}(X)\bigr),\\[10pt]
Y^{*}(a)
&=
\mu_{0}(X,a)
+\frac{\mathbb{I}(A=a)}{\pi_{0}(X,a)}\bigl(Y-\mu_{0}(X,a)\bigr),
\qquad a\in\{0,1\},\\[10pt]
Y^{*}(1,m_1)
&=
\nu_{1}(X,m_1)
+\frac{\mathbb{I}(A=1)\mathbb{I}(M_1=m_1)}{\pi_{0}(X,1)\,\pi_{1}(X,Z,m_1)}
\bigl(Y-\mu_{1}(X,Z,m_1)\bigr)\\
&\qquad
+\frac{\mathbb{I}(A=1)}{\pi_{0}(X,1)}\bigl(\mu_{1}(X,Z,m_1)-\nu_{1}(X,m_1)\bigr),
\qquad m_1\in\{0,1\},\\[10pt]
Y^{*}(1,1,m_2)
&=
\nu_{2}(X,m_2)
+\frac{\mathbb{I}(A=1)\mathbb{I}(M_1=1)\mathbb{I}(M_2=m_2)}
{\pi_{0}(X,1)\,\pi_{1}(X,Z,1)\,\pi_{2}(X,Z,m_2)}
\bigl(Y-\mu_{2}(X,Z,m_2)\bigr)\\
&\qquad
+\frac{\mathbb{I}(A=1)\mathbb{I}(M_1=1)}{\pi_{0}(X,1)\,\pi_{1}(X,Z,1)}
\bigl(\mu_{2}(X,Z,m_2)-\nu_{2}(X,m_2)\bigr),
\qquad m_2\in\{0,1\}.
\end{align*}

where
\[
\begin{aligned}
\pi_{0}(X,a) &\triangleq \Pr(A=a\mid X),\\
\pi_{1}(X,Z,m_1) &\triangleq \Pr(M_1=m_1\mid X,Z,A=1),\\
\pi_{2}(X,Z,m_2) &\triangleq \Pr(M_2=m_2\mid X,Z,A=1,M_1=1),\\
\gamma_{1}(X) &\triangleq \mathbb{E}[M_1\mid X,A=1],\\
\gamma_{2}(X) &\triangleq \mathbb{E}[M_2\mid X,A=1,M_1=1],\\
\mu_{0}(X,a) &\triangleq \mathbb{E}[Y\mid X,A=a],\\
\mu_{1}(X,Z,m_1) &\triangleq \mathbb{E}[Y\mid X,A=1,Z,M_1=m_1],\\
\nu_{1}(X,m_1) &\triangleq \mathbb{E}\!\left[\mu_{1}(X,Z,m_1)\mid X,A=1\right],\\
\mu_{2}(X,Z,m_2) &\triangleq \mathbb{E}[Y\mid X,A=1,Z,M_1=1,M_2=m_2],\\
\nu_{2}(X,m_2) &\triangleq \mathbb{E}\!\left[\mu_{2}(X,Z,m_2)\mid X,A=1,M_1=1\right].
\end{aligned}
\]

When both the outcome model and the mediator models are linear probability
models, these quantities can be derived analytically.
For example, under linearity, the continuation effect associated with the
second mediator coincides with the coefficient on $M_2$ in the outcome
model, and is used to calculate the MPSE via $M_2$.  However, under a nonlinear DGP, such closed-form
expressions for the $\theta_k$ terms no longer exist because the mediator counterfactuals are
nonlinear functions of $(X,Z)$. In this case, I recover the true values of
$\left(\theta_0,\theta_1,\theta_2\right)$ by Monte Carlo integration under
the known data-generating process.
Specifically, for each replication, I simulate a large population from
the structural equations under the relevant interventions in order to
recover the counterfactual mediator probabilities.
I then plug these counterfactual probabilities into the corresponding
linear expressions implied by the outcome model to obtain the true values
of the path-specific effects.
These Monte Carlo quantities are treated as the ground truth against
which finite-sample bias and coverage are evaluated.

%By contrast, the RWR estimator performs poorly under model
%misspecification. These failures become more pronounced as the
%sample size grows.
%For the direct effect $\theta_0$, bias increases dramatically with $n$,
%rising from approximately $0.32$ at $n=1000$ to over $1.1$ at $n=2000$.
%These biases are accompanied by severe inferential distortions:
%while coverage for $\theta_0$ remains artificially high due to the
%estimator centering far from the truth, coverage for $\theta_2$ declines
%substantially, reaching approximately $75\%$ at $n=2000$.

It is worthwhile briefly clarifying how the proposed monotonic path-specific
effect (MPSE) decomposition relates to—and departs from—conventional mediation
analyses of the ATE with multiple causally ordered
mediators. Without mediator monotonicity, the ATE admits an
algebraic decomposition into a collection of path-specific effects (PSEs)
corresponding to the causal paths $A \to Y$ and $A \to M_k \rightsquigarrow Y$
for $k = 1, \dots, K$ \citep{Avin2005,Daniel2015,Zhou2022b}. However, when
mediators are causally ordered, only composite effects of the form
$A \to M_k \rightsquigarrow Y$—which aggregate all downstream pathways from $M_k$
to $Y$—are generally identifiable.

Under mediator monotonicity, this structure collapses in an important way.
Because later mediators are deterministically zero whenever earlier mediators are
not realized, path-specific effects of the form $A \to M_k \to Y$ for $k \ge 2$
are identically zero (see Section 2.3 in the main text). Consequently, the general PSE decomposition simplifies to a
single-mediator natural effect decomposition in which the total indirect effect
operates exclusively through the full causal chain
$A \to M_1 \to \cdots \to M_k \to Y$. This collapse has two implications for the simulation results. First, it explains
why conventional multi-mediator decompositions do not provide a meaningful
benchmark in this setting: once monotonicity is imposed, the distinction between
multiple indirect paths disappears, and the estimand reduces to a single composite
mediation effect. Second, and more importantly, even this reduced decomposition is
not identifiable under standard mediation assumptions when intermediate
confounders are present. In the
simulation design considered here, these assumptions are deliberately violated by
allowing for observed post-treatment confounders $Z$ that affect both later
mediators and the outcome.

\newpage{}
\begin{figure}[h]
\begin{centering}
\includegraphics[width=\linewidth,keepaspectratio]{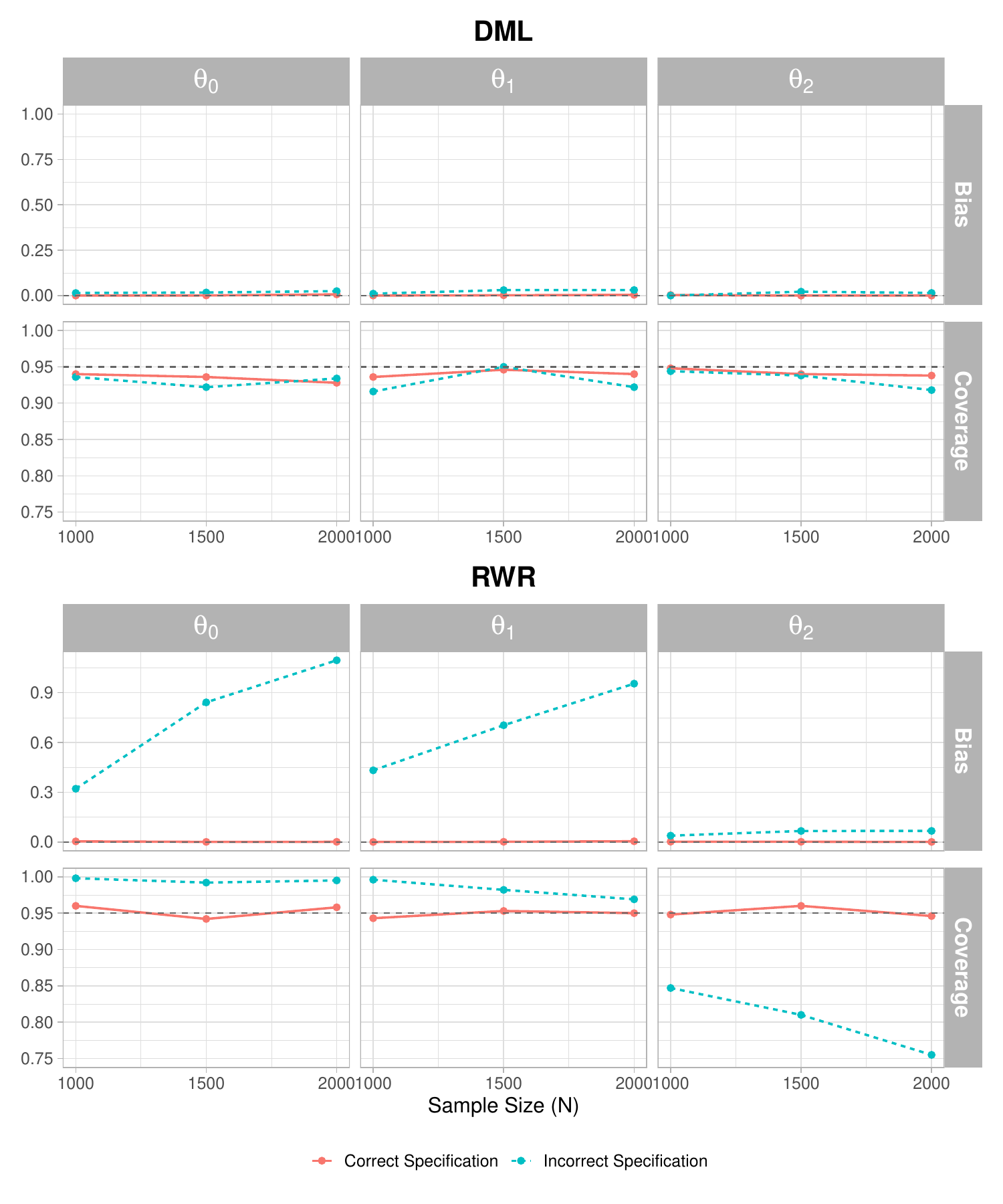}
\par\end{centering}
\caption{Bias, RMSE, and coverage of DML and RWR estimators for $n=1000,1500,2000$.
The red dots show the performance of the DML and RWR estimators
when the correct feature matrix is supplied to the estimators; the
blue dots show the performance of the two estimators when an incorrect
feature matrix is supplied to the estimators. \protect\label{fig:Simulation}}
\end{figure}

\FloatBarrier
\newpage{}

\section{Sensitivity analysis\protect\label{App:sens}}

How do my estimates of returns to different educational stages, as
presented in the main text, tally with previous findings on the labor
market returns to education? While previous work does not estimate
quantities analogous to the direct and indirect effects of interest
(i.e., the $\theta_{k}$ terms), some prior educational returns estimates
are closely related to the net effect ($\tau_{k}$) terms that
inform the total, direct and indirect components of the decomposition.
My estimate of the net effect of 4-year college enrollment ($\tau_{1}$)
is large, at $54\%$, but not implausibly so. While \citet{Zimmerman2014}
and \citet{Smith2020} recover college earnings returns at around
$20\%$ by age 30 (exploiting admissions discontinuities in the Florida and
Georgia state university systems), both of these studies estimate
the earnings premium from attending a less selective 4-year college
rather than a community college, for the marginally qualified university
attendee. By contrast, $\tau_{1}$ captures the effect of 4-year college
enrollment compared with community college \textit{and} no college
enrollment, pooling across the less selective colleges examined in
\citet{Zimmerman2014} and \citet{Smith2020}, as well as over more
selective colleges which could have greater earnings effects. Moreover,
since $\tau_{1}$ represents an effect averaged over all individuals,
it reflects a return among a broader population than the marginal
college-goers examined in previous studies.\footnote{My estimate of $\Delta_{1}$ (the direct effect of 4-year college
attendance on earnings) further tallies with a similar quantity estimated
by \citet{Scott-Clayton}. On the intensive margin of employment (i.e.
dropping respondents with zero observed earnings), the authors estimate
a return to college attendance without degree completion of 0.21.
While, theoretically, one might expect my estimate---which corresponds
to the extensive employment margin (including respondents with zero
observed earnings)---to be larger, the fact that it is slightly smaller
could reflect several factors, including the richer array of pre-college
controls I use in my models, model mis-specification resulting from
linearities imposed in prior work and, perhaps most importantly, collider-stratification
biases induced by conditioning on BA completion in \citet{Scott-Clayton}'s
models (biases that are likely reduced by the inclusion of time-varying
controls).} As I discuss in the main text, an additional, especially instructive point of
comparison are instrumental variable (IV) estimates of returns to
years of schooling. These results are in fact quite consistent with
those I report in the main text. 

Of course, an alternative explanation is that my estimates are upwardly
biased by a large degree of unobserved confounding. While the sequential
ignorability assumption facilitates identification of educational
effect pathways under a weaker set of conditions than might be typically
invoked in mediation settings, it is still strong and fundamentally
unverifiable. To assess potential bias of the estimated MPSEs due
to unobserved confounders not picked up in my covariate set $(X,\overline{Z}_{K})$,
I propose a sensitivity analysis for each of the MPSEs.

Assume first that we have a binary unobserved confounder, $U$, for
the treatment-outcome relationship. Assuming that $\alpha_{0}=\mathbb{E}[Y|x,a,U=1]-\mathbb{E}[Y|x,a,U=0]$
does not depend on $x$ or $a$, and further that $\beta_{0}=\text{Pr}[U=1|x,A=1]-\text{Pr}[U=1|x,A=0]$
does not depend on $x$, for $\tau_{0}=\mathbb{E}[Y(1)-Y(0)]\triangleq\text{ATE}$,
then $\text{bias}(\tau_{0})=\alpha\beta$ \citep{VanderWeele2011}.

Next, consider an unobserved binary confounder, $U_{k}$ that affects
both $M_{k}$ and $Y$ for any $k\in\{1,\dots,K\}$. Then, under a
weaker instantiation of Assumption \ref{assu:SI} (Sequential Ignorability),
i.e.,

\begin{equation}
Y(\overline{1}_{k},m_{k})\perp\!\!\!\perp(A,\overline{M}_{k})|X,A,U_{k},\overline{Z}_{k},\overline{M}_{k-1}\forall k\in[K],
\end{equation}

which states that potential outcomes under an arbitrary transition
sequence are independent of observed treatment and mediator values
conditional on observed confounders $(X,\overline{Z}_{k})$ \textit{and}
unobserved confounders $U_{k}$. Under the following set of assumptions:
(Assumption $A_{k}$) $\alpha_{k}=\mathbb{E}[Y|x,\overline{z}_{k},\overline{1}_{k},m_{k},U_{k}=1]-\mathbb{E}[Y|x,\overline{z}_{k},\overline{1}_{k},m_{k},U_{k}=0]$
does not depend on ($x,\overline{z}_{k},\overline{1}_{k},m_{k})$,
and (Assumption $B_{k}$), $\beta_{k}=\text{Pr}[U_{k}=1|x,\overline{z}_{k},\overline{1}_{k},m_{k}]-\text{Pr}[U_{k}=1|x,\overline{z}_{k},\overline{1}_{k}]$
does not depend on ($x,\overline{z}_{k}$), we can show that, for
any $k\in\{1,\dots K\}$,

\begin{align*}
\text{bias}(\tau_{k}) & =\alpha_{k}\beta_{k},
\end{align*}

and, further, that
\[
\text{bias}(\Delta_{k-1})=-\alpha_{k}\beta_{k}\pi_{k},
\]

where $\pi_{k}=\int_{x}\int_{\overline{z}_{k}}\text{Pr}[M_{k}=1|x,\overline{z}_{k},\overline{1}_{k}]\prod_{j=1}^{k}dP(z_{j}|x,\overline{z}_{j-1},\overline{1}_{j-1})dP(x)$,
and is estimable from observed data using the estimation strategies described
previously. A contour plot showing bias-adjusted estimates of $\Delta_{k^{*}}$
and $\tau_{k}$ then enables assessment of how strong the unobserved
confounder would need to be to reduce estimates of the direct and
gross effects to zero. I illustrate these techniques in my empirical
illustration below.

In order to assess the robustness of my empirical findings in the
main text to potential violations of Assumption \ref{assu:SI} (Sequential
Ignorability), I implement this sensitivity analysis discussed above.

Figure \ref{fig:sens} below displays a set of contour plots, which
capture the bias-corrected estimates of the $(\Delta_{k},\tau_{k})$
terms under varying degrees of confounding (that is, under different
values of $\alpha_{k}$ and $\beta_{k}$). For example, the level
set marked ``0'' corresponds to values of $(\alpha_{k},\beta_{k})$
required in order for the unobserved confounder to fully ``explain
away'' estimates $(\Delta_{k},\tau_{k})$ (i.e., to reduce their
true values to zero). Importantly, each row corresponds to a different
set of $(\alpha_{k},\beta_{k})$ terms for a given $U$, such that
the top row corresponds to $(\alpha_{0},\beta_{0})$, while the second
row corresponds to $(\alpha_{1},\beta_{1})$, and so on.

For simplicity, I consider $U$ to be an unmeasured binary confounder
that is (marginally) positively associated with each transition $A,M_{1},\dots M_{3}$
as well as with adult earnings $Y$. To benchmark the hypothetical
behavior of $U$, for each plot, I also display the values of $(\alpha_{k},\beta_{k})$
that would correspond to a $U$ that behaved similarly to a given
confounder that I do observe in the data: an indicator for whether
an individual's test score on the ASVAB is above the median. In each
plot, I mark this point and label it ``Ability''. For each plot,
I also mark the point on the zero contour that corresponds to $\alpha_{k}=\beta_{k}$
(i.e, the point at which the unobserved confounder's associations
with the treatment and with the outcome are equal, and reduce the
true value of the parameter to zero). 

I focus on estimates of $\tau_{0}$ and $\Delta_{0}$ to assess how
robust my primary conclusion---that the ATE of high school completion
is overwhelmingly mediated by high school's direct effect on earnings---is to unobserved confounding. Bias-adjusted estimates of the ATE
$\tau_{0}$ are presented in the top row of Figure \ref{fig:sens}.
Since $U$ is assumed to be positively associated with both $A$ and
$Y$, $\tau_{0}$ is overestimated and suffers from a bias of $\alpha_{0}\beta_{0}$.
My estimate of $\tau_{0}$ at $0.67$ is nevertheless quite robust:
if $U$ had similar effects to ability, the effect would be reduced
by $0.06$ log points, to $0.61$, still implying a high earnings
premium to high school completion overall in excess of $84\%$.

How do my estimates of the direct effect of high school completion
$\Delta_{0}$ (and, in particular, about the proportion of the total
effect that is direct) fare under unobserved confounding? The second
row of Figure \ref{fig:sens} considers bias-adjusted estimates of
$\Delta_{0}=\theta_{0}$ under different values of $(\alpha_{1},\beta_{1})$,
which correspond to the effects of an unobserved confounder $U$ (marginally)
positively associated with both $M_{1}$ and with $Y$. As described
above, in this scenario, $\Delta_{0}$ is affected by a bias of $-\alpha_{1}\beta_{1}\pi_{1}$.
Importantly, even if the unobserved confounder $U$ is \textit{marginally}
positively associated with high school graduation ($A$), the conditional
association between $U$ and $A$ may be zero or even negative since
$M_{1}$ is a collider of $A$ and $U$. In the case that the conditional
association between $U$ and $A$ is negative, $-\alpha_{1}\beta_{1}\pi_{1}$
would be positive, implying an overestimation of the direct effect
$\theta_{0}$. On the plot, I show estimates of $U$ if it behaved
similarly to the ability variable. Indeed, despite the fact that ability
is marginally positively associated with high school completion (top
row of Figure \ref{fig:sens}), its conditional association---conditional
on college attendance---is depressed to zero. Thus, it would take
an extreme form of confounding for $\theta_{0}$ to be largely different
from its estimated value of $.46$. In this way, my primary finding
that the ATE of high school graduation is overwhelmingly mediated
via its direct effect remains highly robust to patterns of unobserved
confounding, under my set of simplifying assumptions.

\newpage{}

\begin{figure}[h]
\begin{centering}
\includegraphics[scale=0.58]{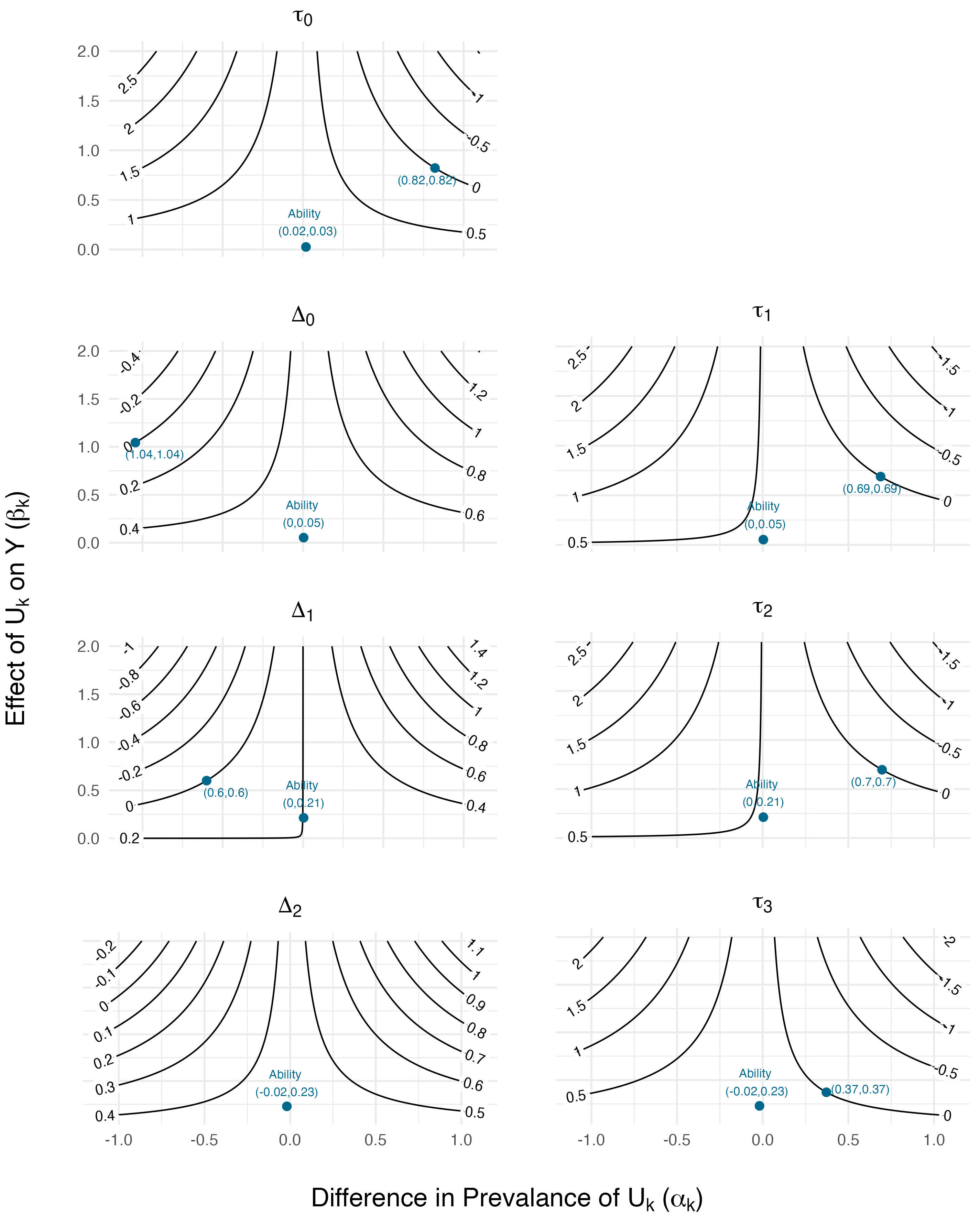}
\par\end{centering}
\caption{Sensitivity Analysis for the ``gross effect'' $(\tau_{k})$ and
``direct effect'' $(\Delta_{k})$ terms in decomposition. Each row
corresponds to a different set of ($\alpha_{k},\beta_{k})$ terms,
where $\alpha_{k}=\mathbb{E}[Y|x,\overline{z}_{k},\overline{1}_{k},m_{k},U_{k}=1]-\mathbb{E}[Y|x,\overline{z}_{k},\overline{1}_{k},m_{k},U_{k}=0]$
parameterizes the effect of $U_{k}$ on Y, and $\beta_{k}=\text{Pr}[U_{k}=1|x,\overline{z}_{k},\overline{1}_{k},m_{k}]-\text{Pr}[U_{k}=1|x,\overline{z}_{k},\overline{1}_{k}]$
parameterizes the effect of $M_{k}$ on $U_{k}$. Each row corresponds
to a different set of $(\alpha_{k},\beta_{k})$ terms. For example,
the top row corresponds to $(\alpha_{0},\beta_{0})$, while the second
row corresponds to $(\alpha_{1},\beta_{1})$, and so on.\protect\label{fig:sens}}
\end{figure}

The bias-factor sensitivity analysis presented above has the drawback of
assessing only the influence of a single unmeasured confounder at a time, for a given
transition. In practice, however, confounders may be unmeasured not only at a given
transition but also at earlier ones. Unfortunately, the bias-factor approach
scales poorly here, requiring $2K$ parameters (a sensitivity-parameter pair $(\alpha_k,\beta_k)$ for each of the $K$ transitions).

Therefore, to assess how sensitive later continuation effects are to
multiple-transition confounding, I undertake a calibrated simulation analysis.
Specifically, I simulate a data-generating process that mirrors the empirical
setting ($A=$ high school completion, $M_{1}=$ college attendance, $M_{2}=$ BA
completion, $M_{3}=$ graduate schooling, $Y=$ log earnings), augmented with one
unobserved baseline confounder ($X_{3}$) and one unobserved intermediate
confounder revealed at each subsequent transition ($Z_{1},Z_{2},Z_{3}$). The
strengths of these confounders are \emph{calibrated} to one of two observed
covariates in the NLSY97 analytic sample, and I then re-estimate $\Delta_{3}$
while cumulatively omitting them from the adjustment set.

I simulate from an expanded version of the data-generating process presented in
the Simulation Study in the main text, additionally drawing a third mediator $M_{3}$ and an intermediate confounder $Z_3$, and letting $Y$ depend
on these variables as well. I report two calibrations, calibrating all four
unobserved confounders to the partial associations of (i) AFQT and (ii)
high-school GPA with the outcome and with each transition.

For each of $B=500$ simulated samples, I estimate
$\Delta_{3}$ using the regression-with-residuals (RWR) estimator under five
nested adjustment sets: full adjustment ($S_{0}$; includes
$X_{3},Z_{1},Z_{2},Z_{3}$), then cumulatively omitting $X_{3}$ ($S_{1}$),
$Z_{1}$ ($S_{2}$), $Z_{2}$ ($S_{3}$), and $Z_{3}$ ($S_{4}$). Under $S_{0}$ the
estimator recovers the true value of $\Delta_{3}$.

Figure~\ref{fig:calib-sens} presents the results. Under either calibration,
even the full history of unobserved confounders---one at every stage, each as
strong (at every transition it affects) as AFQT or high-school GPA---inflates
the estimate by only $0.012$--$0.015$, i.e., roughly $8$--$9\%$ of the observed
value. The implied true continuation effect remains at approximately
$.15$. The estimate is therefore fairly robust to unobserved confounding
of a magnitude comparable to that of observed covariates. In any case, because
$\Delta_{3}$ remains a modest component of the total effect, confounding would not alter the central takeaway -- that the effect of high-school completion on earnings operates
primarily through its direct effect rather than through graduate schooling -- and if anything, would reinforce this conclusion.

\begin{figure}[!htbp]
\centering
\includegraphics[width=0.85\textwidth]{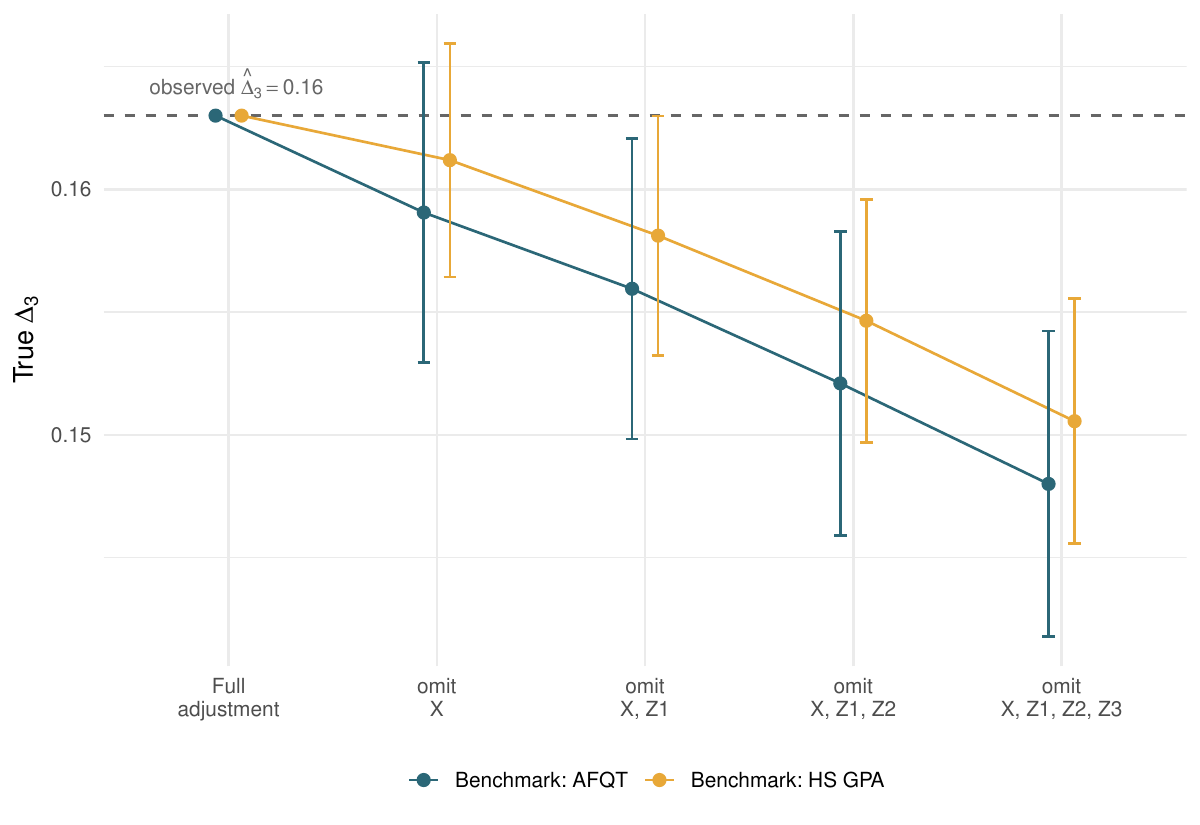}
\caption{Calibrated sensitivity analysis of 
$\Delta_{3}$ to cumulative multiple-transition confounding. The observed
estimate $\hat{\Delta}_{3}$ (dashed line) is held fixed; each point plots the
implied true $\Delta_{3}$ under a scenario that leaves the labeled confounders
unmeasured, for confounders calibrated to AFQT (dark teal) and high-school GPA
(gold). Bars are 95\% Monte Carlo intervals ($B=500$, $n=2{,}000$; RWR
estimator).}
\label{fig:calib-sens}
\end{figure}

\FloatBarrier
\newpage{}
\clearpage

\section{Further details on variable construction and education groups\protect\label{app:further-variable}}

\subsection*{Variable construction}

In an effort to satisfy the sequential ignorability assumption (Assumption
\ref{assu:SI}), I include a large array of covariates in my models
for the effects of completing educational transitions on labor market
outcomes. Figure \ref{fig:DAG-empirical-1} summarizes my assumed
data-generating process for the empirical example. In addition to
including information on respondent demographics (gender, race, ethnicity,
age at 1997), and observed pre-college performance such as overall
high school GPA and test score on the Armed Services Vocational Aptitude
Battery (ASVAB), I include detailed information on socioeconomic background
(parental education, parental income, parental assets, co-residence
with both biological parents, presence of a paternal figure, rural
residence, southern residence), an index of substance use, an index
of delinquency, and whether the respondent had any children by age 18,
and peer and school-level characteristics (measures of peers' college
expectations and behaviors). Both parental income and parental asset
variables are transformed to 2023 dollars. 

Since my proposed decomposition also facilitates the inclusion of
a distinct set of observed intermediate confounders for each transition
to adjust for selection processes that may confound the causal effects
of each transition on earnings (i.e., the $A-Y$ and $M_{k}-Y$ relationships,
for $k\in\{1,\dots K\}$), I include two postsecondary characteristics
($Z$) to adjust for confounders of the effect of BA completion and
graduate school attendance on earnings, namely, field of study, and
college GPA. Specifically, I use college self-reported major field
of study, drawing on the NLSY survey instrument asking respondents
about their choice of major in each month in which they were enrolled
in college, and using a dummy variable to denote whether a
respondent majored in a STEM or non-STEM field by age 29. Finally,
college GPA is measured using the respondent's cumulative GPA from
the Post-Secondary Transcript Study. I treat two of the $Z_{k}$ sets
as empty (namely, $Z_{1}$ and $Z_{3}$), assuming that the effects
of the first mediator (college attendance, $M_{1}$) on subsequent
transitions and adult earnings are unconfounded given background characteristics
$(X)$, and that the effects of the third mediator (graduate school
attendance, $M_{3}$) are unconfounded given background characteristics
($X$) and postsecondary characteristics ($Z$). \footnote{To be clear, assuming that $Z_{1}$ and $Z_{3}$ are empty is not
to say that $M_{1}$ and $M_{3}$ are marginally unconfounded; rather,
it means that the set of covariates that confound the effects of $M_{1}$
is assumed to be the same as those that confound the effects of $A$,
and that the set of covariates that confound the effects of $M_{3}$
are assumed to be the same as those that confound the effects of $M_{2}$.
This assumption is in part data-driven, given the few variables observed
chronologically post high school graduation and pre college attendance.} 

How convincingly do I satisfy the sequential ignorability assumption?
Despite the inclusion of a comprehensive set of background covariates
in my models, it is possible that observed variables do not perfectly
proxy for all important confounders jointly affecting education and
earnings. In particular, researchers often argue that important variables,
such as students' innate ability, ambition, and detailed forms of
socioeconomic advantage, confound observational estimates of educational
returns \citep[e.g.][]{Carneiro2011}. While some research suggests
that observational estimates of earnings returns may well capture
actual returns to education and that the degree of observational bias
may be rather small \citep{Card1999}, it is of course impossible
to quantify the true extent of the bias in the estimates I produce.
The sensitivity analysis described above provides a step towards this
goal.

I note that my assumption of ignorability of $M_{3}$ without conditioning
on intermediate variables $Z_{3}$ is perhaps the strongest assumption
I make. For example, many individuals take time off to work before
enrolling in graduate school, and labor market experience and earnings
gained in the interim period between college completion and graduate
school enrollment may confound the latter variable's effects on earnings.
Nevertheless, including a measure of labor market characteristics
for this period is difficult because some respondents enroll directly
in graduate school after BA completion, such that pre-graduate school
earnings variables would be undefined for these individuals.

Table \ref{tab:Conditional-means} shows conditional means of respondent
attributes $X$ and $Z$ for the full (imputed) and restricted (non-imputed)
samples, showing first the mean among the full population of high
school goers, and progressively restricting the sample from (i) high
school (HS) non-completers, to (ii) HS graduates, to (iii) college
attendees and, finally, to (iv) BA completers. Imputed and non-imputed
means---shown without and with brackets, respectively---are highly
similar across variables. As I progressively restrict the sample to
those who attained higher educational levels, variables capturing
components of socioeconomic advantage (such as parental income, parental
education and household net worth) increase monotonically in value.
Background covariates measuring aspects of the school environment
(such as peers' college expectations---which is an indicator for whether
over $90\%$ of a respondent's peers expected to go to college) behave similarly.
I also see that students who progress to higher educational levels
have higher levels of pre-college ability: HS non-completers have
on average an ASVAB Percentile score of 22.3, compared with only in
excess of 70 among BA completers. Similarly, college-goers' average
high school GPA is approximately $.5$ higher than high school graduates
overall (regardless of whether or not they proceed to college). Nevertheless,
the association between high school GPA and attainment declines at
higher educational levels: BA completers have only on average a $.11$
higher a high school GPA than the pooled group of college goers, irrespective
of their BA completion status. At this stage, college GPA appears
to matter more: college goers overall have on average a college GPA
of $2.77$, while BA completers' average college GPA is $3.07$.

\subsection*{Educational groups: raw mean earnings}

Table \ref{tab:Earnings-Descriptive} (column 2) presents the proportion
of individuals who have attained each level of education constructed
above. By age 22, a small, but not insignificant, proportion of individuals
who enroll in high school do not complete their studies ($13\%$), and by this same age, just over $40\%$ of individuals have attended
a 4-year college. By age 29, $29\%$ of individuals have attained a Bachelor's degree or higher. These
estimates of high school completion and BA completion align closely
both with those reported in previous studies that employ the NLSY97
(e.g. \citet{Scott-Clayton}), as well as with those reported in the
Current Population Survey (CPS). Table \ref{tab:Earnings-Descriptive}
(columns 3-4) also shows mean log earnings by educational group (column
3), alongside the estimated gap between these means and mean log earnings
among high school non-completers (column 4). High school dropouts
earn an average of $\text{9.07}$ log earnings, while groups with
higher levels of attainment earn successively more than high school
dropouts, though at a decreasing rate. High school graduates earn
on average $1.11$ log earnings more than high school non-completers,
implying an earnings premium in excess of $200\%$ ($\exp(1.11)-1$),
while college goers earn on average $1.5$ log earnings more than
high school non-completers (or 0.6 log earnings more than high school
graduates). At the highest end, graduate school goers earn on average
$10.88$ log earnings. These educational premia are extremely high,
since they reflect both the causal effect of a given educational level
as well as the effects of individual, geographic and family factors
correlated both with attainment and with adult earnings. To net out
these patterns of selection, we need to turn to estimates of the MPSE
decomposition, as well as its constituent components.

\newpage{}
\begin{figure}[H]
\begin{centering}
\includegraphics[width=\linewidth,keepaspectratio]{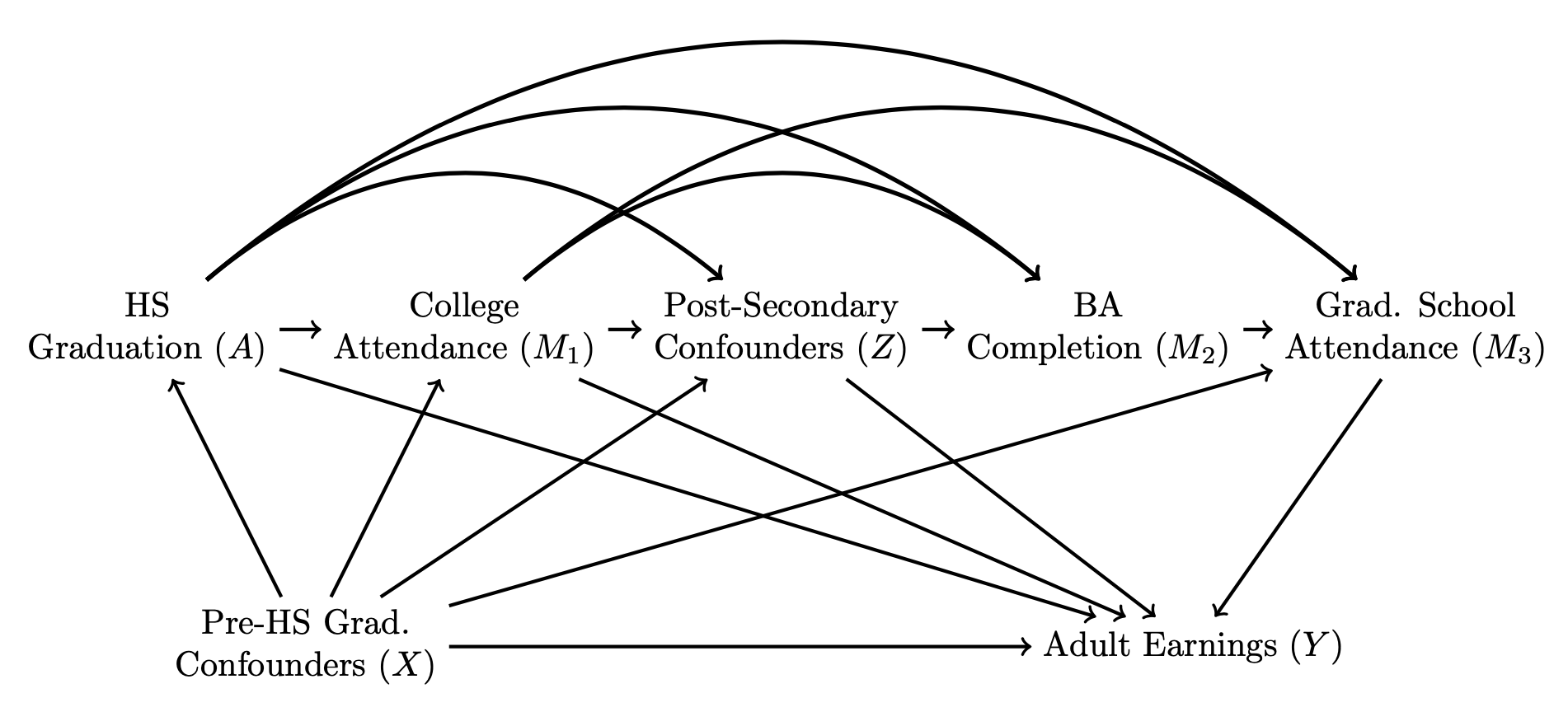}
\par\end{centering}
\caption{DAG showing the hypothesized causal relationships between high school
completion $A$ and adult earnings $Y$ via mediators $M_{1}$, $M_{2}$
and $M_{3}$. \protect\label{fig:DAG-empirical-1}}
\end{figure}

\newpage{}

\begin{table}[h!]
\rotatebox{90}{ \begin{minipage}{\textheight}

\caption{Conditional means of background and college-level attributes by sample
type (imputed dataset and non-imputed dataset). \protect\label{tab:Conditional-means}}

\renewcommand{\arraystretch}{1}
 \renewcommand\theadfont{}
\vspace{0.5em}
\noindent\begin{raggedright}
% latex table generated in R 4.3.1 by xtable 1.8-4 package
% Sun Jul  7 12:22:06 2024

\centering
\begingroup\fontsize{11pt}{13pt}\selectfont
\begin{tabular}{llllll}
   &  & & & & \\ \hline  \multicolumn{1}{c}{ }  & \multicolumn{1}{c}{Full Population}  & \multicolumn{1}{c}{HS Non-Completers} & \multicolumn{1}{c}{HS Graduates} & \multicolumn{1}{c}{College Goers} & \multicolumn{1}{c}{BA Completers} \\ \hline  \hline Female & 0.50 (0.50) & 0.44 (0.47) & 0.51 (0.51) & 0.55 (0.56) & 0.57 (0.58) \\ 
  Black & 0.17 (0.16) & 0.25 (0.24) & 0.16 (0.15) & 0.13 (0.10) & 0.10 (0.09) \\ 
  Hispanic & 0.14 (0.12) & 0.20 (0.18) & 0.13 (0.12) & 0.09 (0.08) & 0.08 (0.07) \\ 
  Parental Income & 80,235 (79,243) & 43,308 (43,863) & 85,898 (83,865) & 110,240 (109,636) & 118,026 (114,329) \\ 
  Parental Education & 12.85 (12.84) & 11.18 (11.24) & 13.11 (13.04) & 14.06 (14.13) & 14.38 (14.34) \\ 
  Household Net Worth & 175,513 (176,097) & 61,997 (62,318) & 192,920 (190,959) & 272,606 (284,029) & 306,177 (307,101) \\ 
  Lived w Biological Parents & 0.52 (0.51) & 0.30 (0.28) & 0.55 (0.55) & 0.67 (0.69) & 0.72 (0.73) \\ 
  Father Figure Present & 0.75 (0.75) & 0.61 (0.61) & 0.77 (0.76) & 0.84 (0.83) & 0.86 (0.86) \\ 
  Lived in Rural Area & 0.28 (0.30) & 0.25 (0.27) & 0.28 (0.31) & 0.28 (0.32) & 0.28 (0.31) \\ 
  Lived in South & 0.36 (0.35) & 0.44 (0.42) & 0.35 (0.34) & 0.33 (0.30) & 0.31 (0.29) \\ 
  Children by 18 & 0.07 (0.07) & 0.21 (0.24) & 0.05 (0.05) & 0.01 (0.01) & 0.01 (0.01) \\ 
  Substance Abuse Score & 1.09 (1.10) & 1.32 (1.37) & 1.06 (1.07) & 0.86 (0.82) & 0.82 (0.77) \\ 
  Delinquency Score & 1.38 (1.39) & 2.07 (2.13) & 1.27 (1.29) & 0.91 (0.88) & 0.81 (0.79) \\ 
  Peers' College Expectations & 0.56 (0.56) & 0.40 (0.38) & 0.59 (0.58) & 0.69 (0.69) & 0.72 (0.71) \\ 
  Property Stolen at School & 0.24 (0.23) & 0.28 (0.29) & 0.23 (0.22) & 0.21 (0.20) & 0.20 (0.19) \\ 
  Threatened at School & 0.22 (0.23) & 0.30 (0.33) & 0.20 (0.21) & 0.14 (0.13) & 0.13 (0.11) \\ 
  In a Fight at School & 0.16 (0.16) & 0.32 (0.34) & 0.14 (0.13) & 0.07 (0.07) & 0.06 (0.06) \\ 
  ASVAB Percentile & 48.36 (48.52) & 22.27 (21.77) & 52.37 (52.02) & 67.22 (69.66) & 70.66 (71.82) \\ 
  High School GPA & 2.84 (2.81) & 2.12 (2.14) & 2.95 (2.90) & 3.30 (3.33) & 3.41 (3.43) \\ 
   \hline Stem Major &  &  &  & 0.17 (0.19) & 0.18 (0.19) \\ 
  College GPA &  &  &  & 2.78 (2.86) & 3.07 (3.12) \\ 
   \hline Earnings (\$) & 46,505 (45,151) & 21,597 (21,566) & 46,505 (45,151) & 64,984 (64,595) & 72,374 (70,212) \\ 
  Log ( Earnings + c ) & 10.03 (10.06) & 9.07 (9.14) & 10.03 (10.06) & 10.57 (10.65) & 10.73 (10.78) \\ 
   \hline \end{tabular}
\endgroup

\par\end{raggedright}
\vspace{-0em}

Note: Numbers denote means for the imputed sample (non-imputed sample).
Means are adjusted for multiple imputation via Rubin\textquoteright s
(1987) method, and all statistics are calculated using NLSY97 sampling
weights.

\end{minipage}}
\end{table}

\newpage{}
\begin{table}[h]
\caption{Means of observed log earnings by educational participation, and earnings
gaps (versus high school non-completers).\protect\label{tab:Earnings-Descriptive}}

\renewcommand{\arraystretch}{1.2}

\vspace{0.5em}
\noindent\begin{raggedright}

\centering
\begingroup\fontsize{12pt}{15pt}\selectfont\setlength{\tabcolsep}{3pt}
\begin{tabular}{lrll}
   \hline  \multicolumn{1}{c}{Group }  &  \multicolumn{1}{c}{Population Proportion} & \multicolumn{1}{c}{Log Earnings}  & \multicolumn{1}{c}{Gap (vs HS Non-Completers)} \\ \hline  \hline HS Non-Completers & 0.13 & 9.07 (0.05) &  \\ 
  HS Graduates & 0.87 & 10.18 (0.02) & 1.11 (0.05) \\ 
  College Goers & 0.41 & 10.57 (0.03) & 1.5 (0.05) \\ 
  BA Completers & 0.29 & 10.73 (0.03) & 1.66 (0.06) \\ 
  Grad. School Goers & 0.09 & 10.88 (0.05) & 1.81 (0.07) \\ 
  \end{tabular}
\endgroup
\par\end{raggedright}
\vspace{-0em}

{\footnotesize
Note: The category \textquotedbl High School Non-Completers\textquotedbl{}
captures all individuals who attended high school but did not obtain
a high school diploma; \textquotedbl High School Graduates\textquotedbl{}
refers to those individuals who graduated high school, regardless
of their subsequent educational experiences (i.e., whether or not
they proceeded to college); \textquotedbl College Goers\textquotedbl{}
refers to individuals who attended a 4-year college, irrespective
of whether they completed their degree; \textquotedbl BA Completers\textquotedbl{}
denotes individuals who completed a Bachelor's degree, while \textquotedbl Grad.
School Goers\textquotedbl{} captures individuals who participated
in a graduate-level degree program. A small constant of $\$1,000$
is added to observed earnings before taking the log. All statistics
are computed with a monotonicity assumption imposed on the observed
data (i.e. such that all individuals who complete a given educational
level are coded as having completed all prior levels). All statistics
are calculated using NLSY97 sampling weights, and standard errors
are in parentheses.\par}
\end{table}

\FloatBarrier
\newpage{}

\section{Results without imputation of missing covariates\protect\label{App:Non-MI}}

In the main text, I report estimates of the MPSE decomposition for
the ATE of high school completion on logged annual earnings for the
full sample of NLSY97 respondents with non-missing educational information
and non-missing earnings ($N=7,305$). A very large number (approximately
$50\%$) of these respondents are missing information on one or more
of the covariates $(X,Z)$ used in the models in order to identify
the decomposition components. Table~\ref{tab:missing_X_by_ed} summarizes 
missingness patterns for the background covariates $X$ by respondents' educational
attainment---which are all self-reported by respondents. Missingness is generally modest for most pretreatment covariates, though certain 
variables (e.g., household net worth and ASVAB percentile) exhibit higher rates of 
non-response (at $25\%$ and $20\%$ of the sample overall, respectively). Non-response is more common among respondents with lower levels of schooling.

Table~\ref{tab:missing_Z_by_ed} clarifies the nature of missingness in the intermediate 
covariates  $Z$ (STEM major and college GPA). These intermediate covariates are observed only for respondents who attend college, but even within this group missingness 
is substantial. One source of missingness is transcript non-receipt: these variables are derived from the NLSY97 
Postsecondary Transcript Study, and are therefore observed only for respondents for whom a 
transcript was successfully obtained. Transcript non-response is the primary driver of 
missingness: institutions sometimes did not supply transcripts, supplied incomplete records, 
or provided degree-program information without course-level grades.\footnote{See 
\emph{NLSY97 Appendix 12: Postsecondary Transcript Study Documentation} 
(https://www.nlsinfo.org/content/cohorts/nlsy97/other-documentation/codebook-supplement/appendix-12-post-secondary-transcript-study).} Another potential source of missingness may be that students drop out of college before declaring a major. Consistent with this, among college-goers who do not complete a 
BA, approximately 10\% are missing information on degree major, compared with 1.5\% of BA completers. This pattern raises a natural concern for my strategy to impute missing values for STEM. If missingness reflects a student's failure to declare a major, then this missingness should be reflected as \textit{true} missingness in an additional level of the STEM variable, rather than being approached as a missing data problem. At the same time, it is difficult to disentangle true non-declaration of a 
major from missingness caused by institutional non-compliance with transcript requests; 
students who leave college early may also disproportionately attend 
institutions that are less responsive to the NLSY's Postsecondary Transcript Study. As a result, the 
observed missingness conflates measurement limitations with underlying educational 
progression in ways that cannot be fully separated.

To assess the sensitivity of my primary conclusions
to these issues, I undertake two exercises. First, I replicate my DML and RWR estimates on a non-imputed
analytic sample, dropping observations
with missing values ($N=3,735)$. Figure \ref{fig:Decomp-non-mi} below
shows the results of this exercise. For both estimation procedures,
results under multiple imputation and non-imputation are highly similar.
As is to be expected, imputation reduces standard errors significantly,
especially for the parametric RWR procedure. Further, the greatest
variability between imputed and non-imputed results comes from effects
pertaining to high school completion, perhaps because patterns of
missingness are correlated with educational attainment. Despite this,
because the total effect $\tau_{0}$ and direct effect $\theta_{0}$
are similarly attenuated in the imputed sample, the overall conclusion
about the importance of the direct effect in explaining the ATE remains
unaffected.

Second, to assess whether missingness in field of study reflects meaningful differences 
in educational progression versus measurement limitations, I re-estimated the 
entire MPSE decomposition excluding the STEM indicator from the intermediate 
confounder set $Z$. The resulting estimates (presented in Table \ref{tab:Decomp-no-stem}) are nearly identical to those 
reported in the main text, indicating that the main empirical findings 
are not sensitive to whether field of study is included in the intermediate 
confounder set. Nevertheless, the question of whether major declaration itself constitutes an important intermediate transition (prior to BA completion) is an important direction for future research.

\newpage{}

\begin{table}[h]
\caption{Proportion of missing background covariates by educational level}
\label{tab:missing_X_by_ed}
\centering
\begingroup\fontsize{11pt}{13pt}\selectfont
\renewcommand{\arraystretch}{1.3}

\begin{tabular}{l r 
                r r r r r r r r}
\hline
& 
& \rotatebox{90}{Parental income}
& \rotatebox{90}{Parental education}
& \rotatebox{90}{Household net worth}
& \rotatebox{90}{ASVAB percentile}
& \rotatebox{90}{High school GPA}
& \rotatebox{90}{Peers' expectations (75th)}
& \rotatebox{90}{Peers' expectations (90th)}
& \rotatebox{90}{Property stolen at school} \\
\hline
\hline
HS Non-Completers       & 1144 & 0.058 & 0.069 & 0.233 & 0.302 & 0.097 & 0.021 & 0.021 & 0.034 \\
HS Graduates       & 3443 & 0.033 & 0.036 & 0.238 & 0.203 & 0.008 & 0.015 & 0.015 & 0.011 \\
College Goers  &  885 & 0.040 & 0.030 & 0.270 & 0.159 & 0.016 & 0.006 & 0.006 & 0.005 \\
BA Completers       & 1246 & 0.019 & 0.023 & 0.255 & 0.127 & 0.009 & 0.007 & 0.007 & 0.006 \\
Grad. School Goers    &  587 & 0.029 & 0.020 & 0.307 & 0.148 & 0.019 & 0.007 & 0.007 & 0.005 \\
\hline
\end{tabular}
\endgroup
\end{table}

\begin{table}[h]
\caption{Proportion of missing intermediate covariates (transcript-based) by educational level}
\label{tab:missing_Z_by_ed}
\centering
\begingroup\fontsize{11pt}{13pt}\selectfont
\renewcommand{\arraystretch}{1.3}

\begin{tabular}{l r r r}
\hline
& $N$ 
& \rotatebox{0}{STEM major} 
& \rotatebox{0}{College GPA} \\
\hline
\hline
HS Non-Completers   & 1144 & 1.000 & 1.000 \\
HS Graduates        & 3443 & 1.000 & 1.000 \\
College Goers       &  885 & 0.103 & 0.357 \\
BA Completers       & 1246 & 0.015 & 0.272 \\
Grad. School Goers  &  587 & 0.019 & 0.274 \\
\hline
\end{tabular}

\endgroup
\end{table}

\begin{table}[h]
\caption{Direct Effects ($\Delta_{k}$), Probabilities ($\pi_{k}$) and Covariance
Terms ($\eta_{k}$) Involved in Decomposition via Debiased Machine-Learning
(DML), without STEM degree. \label{tab:Decomp-no-stem}}
\centering
\begingroup\fontsize{11pt}{13pt}\selectfont\setlength{\tabcolsep}{3pt}
\renewcommand{\arraystretch}{1.5} 
\begin{tabular}{lllllllllll}
   \hline & $\Delta_0$ & $\Delta_1$ & $\Delta_2$ & $\Delta_3$ & $\pi_1$ & $\pi_2$ & $\pi_3$ & $\eta_1$ & $\eta_2$ & $\eta_3$  \\ \hline DML & 0.462 & 0.197 & 0.442 & 0.145 & 0.427 & 0.554 & 0.313 & 0.006 & 0.007 & -0.001 \\ 
   & (0.059) & (0.034) & (0.043) & (0.039) & (0.009) & (0.014) & (0.017) & (0.007) & (0.007) & (0.009) \\ 
   \hline
\end{tabular}
\endgroup
\end{table}

\begin{figure}[h]
\begin{centering}
\includegraphics[width=\linewidth,keepaspectratio]{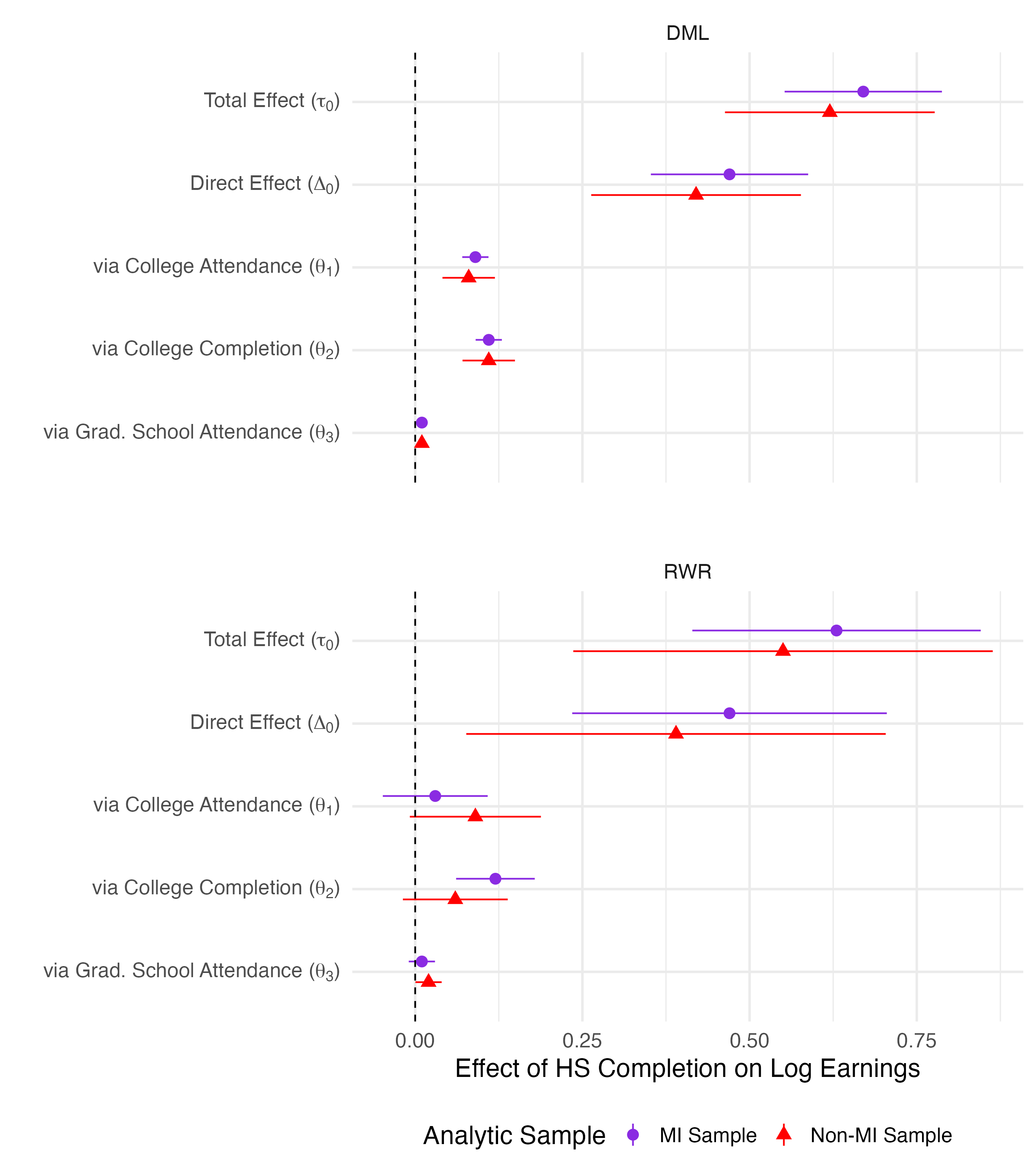}
\par\end{centering}
\caption{Decomposition of the Average Total Effect (ATE) of High School Graduation
on Logged Earnings Under Multiple Imputation (MI) and Under Dropping
Observations with Missing ($X,Z$) Values. Results with multiple imputation
(purple lines) are reproduced from the main text ($N=7,305$); results
without multiple imputation (red lines) employ a sample restricted
to respondents with observed values for all covariates used ($N=3,735$).\protect\label{fig:Decomp-non-mi}}
\end{figure}

\FloatBarrier
\newpage{}

\section{Results under alternative definitions of earnings\protect\label{App:Constant}}

In the main text, I report estimates of the MPSE decomposition components
for the ATE of high school completion on logged annual earnings. Logged
annual earnings in the main text are defined as the log of observed
annual earnings plus a small constant of $\$1,000$, in order to accommodate
respondents with zero observed annual earnings. In order to assess
the sensitivity of the reported results to the choice of this constant,
I replicate the main analyses under alternative definitions of earnings.
Figure \ref{fig:Decomp-constant} reports estimates of the direct
and indirect effects under a series of different constants $c$ added
to pre-logged annual earnings, for $c\in\{10,100,1000\}$, while Figure
\ref{fig:Decomp-raw} shows estimates of these direct and indirect
effects for observed annual earnings in dollar values. Beginning with
Figure \ref{fig:Decomp-constant}, we see that, while for the indirect
effects $\theta_{1}$, $\theta_{2},$ and $\theta_{3}$, both DML
and RWR estimates are quite consistent under these different constants,
estimates of the total effect $\tau_{0}$ as well as the direct effect
$\Delta_{0}$ are quite sensitive to the choice of constant. Specifically,
lower constant values correspond with large increases in the DML estimate
of $\tau_{0}$ from $0.67$ ($c=1000$) to $1.20$ ($c=10$), and
of $\Delta_{0}$ from $0.47$ ($c=1000$) to $0.89$ ($c=10$). This
is because individuals with less than a high school degree are more
likely than their higher-educated counterparts to have zero or low
earnings, making their logged earnings rather sensitive to the choice
of constant. Nevertheless, because the total effect $\tau_{0}$ and
direct effect $\theta_{0}$ are similarly affected by the change in
constant value, the importance of the direct effect in explaining
the ATE of high school completion is reinforced. In particular, the
proportion of the total effect that is direct is estimated to be roughly $70\%$ under each of $c=10,100,1000$. Turning next to
Figure \ref{fig:Decomp-raw}, under DML, we estimate that high school
completion increases earnings in expectation by roughly $\$16,500$,
corresponding to an earnings return of approximately $53\%$ relative to a baseline of \textbf{$\$31,300$ }without high school
graduation ($\mathbb{E}[Y(0)]$). Almost half ($\frac{7824}{16454}\cdot100=47.5\%$)
of the total effect is estimated to operate directly, with $29\%$
and $23\%$ mediated via college attendance and college completion,
respectively. 

\begin{figure}[h]
\begin{centering}
\includegraphics[width=\linewidth,keepaspectratio]{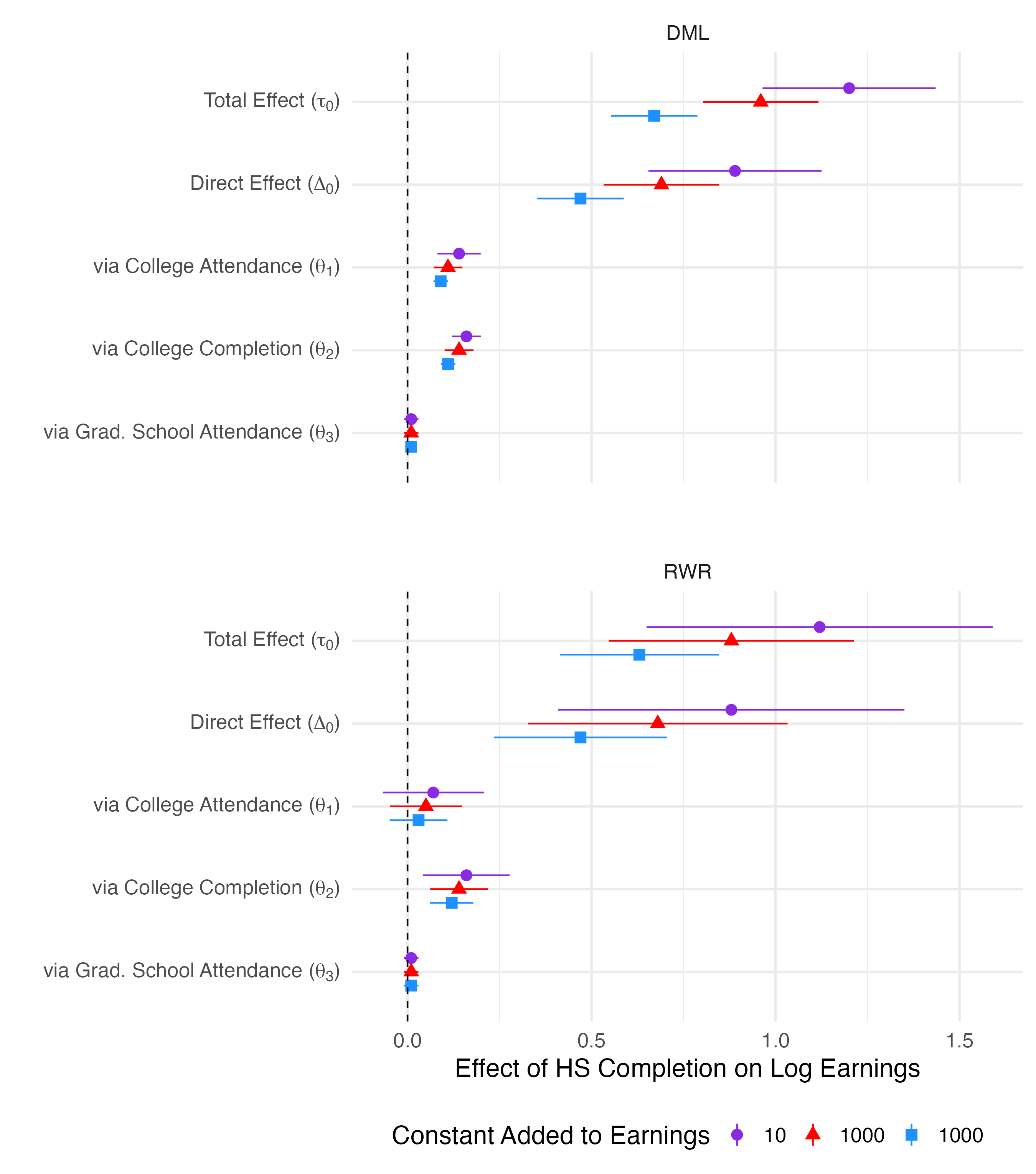}
\par\end{centering}
\caption{Decomposition of the Average Total Effect (ATE) of High School Graduation
on Logged Earnings Under Alternative Definitions of Earnings. The
figure shows estimates of the total effect ($\tau_{0}$) as well as
the continuation effects $\Delta_{0},\tau_{1},\dots,\tau_{K}$
when constants of 10, 100 and 1000, respectively, are added to raw
annual earnings (in dollar amounts) before taking the log. \protect\label{fig:Decomp-constant}}
\end{figure}

\newpage{}

\begin{figure}[h]
\begin{centering}
\includegraphics[width=\linewidth,keepaspectratio]{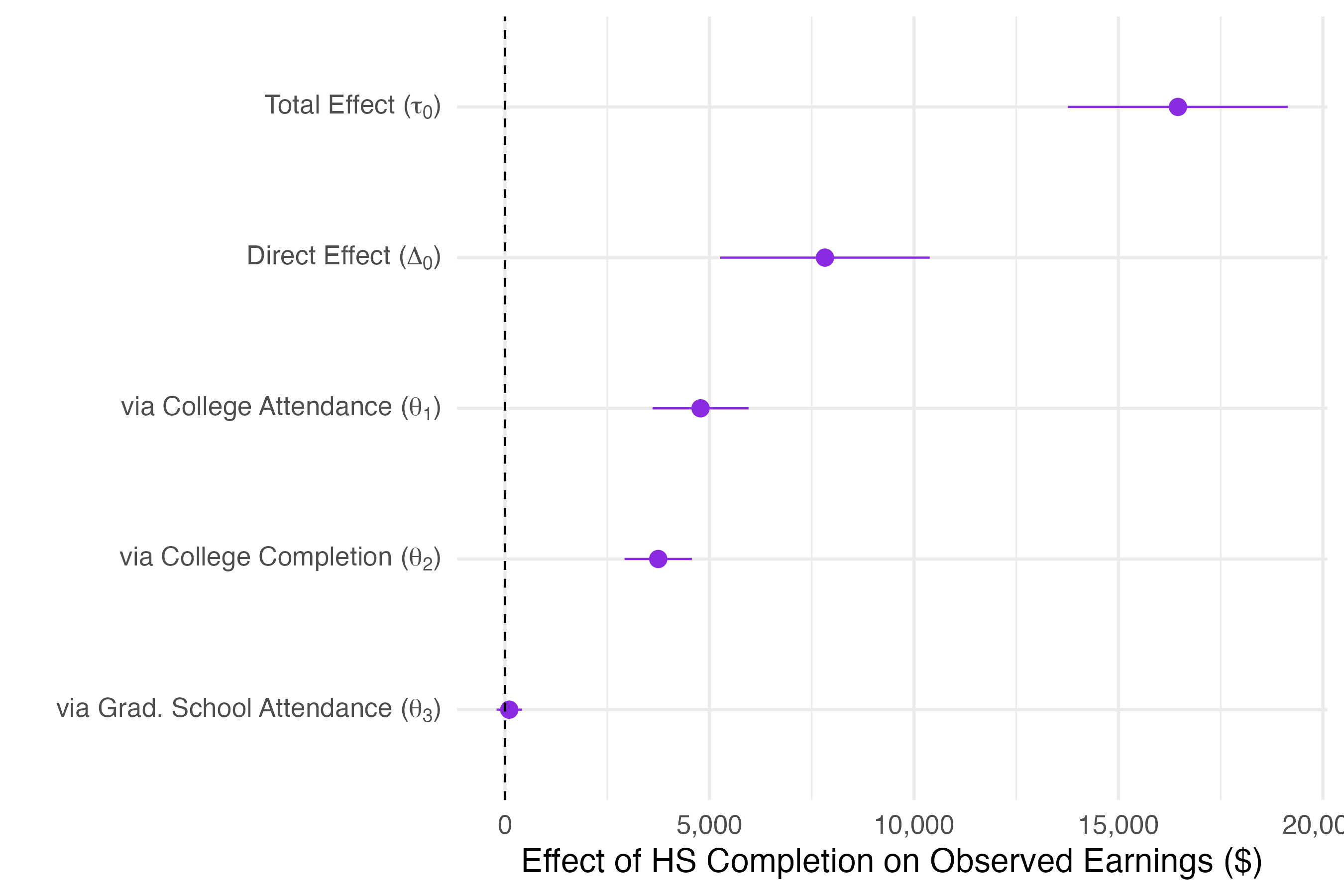}
\par\end{centering}
\caption{Decomposition of the Average Total Effect (ATE) of High School Graduation
on Logged Earnings with Different Definitions of Earnings. The figure
shows estimates of the total effect ($\tau_{0}$) as well as the continuation
effects $\Delta_{0},\tau_{1},\dots,\tau_{K}$ when constants
of 10, 100 and 1000 are added to raw annual earnings (in dollar amounts)
before taking the log. \protect\label{fig:Decomp-raw}}
\end{figure}

\FloatBarrier
\newpage{}

\section{Results under different positivity assumptions\protect\label{App:Positivity}}

% ===== SUPPLEMENTARY MATERIAL G: relaxing the positivity assumption =====

% ----- G.1 Setup and trimming -----
Under the positivity assumption maintained in the main text, the decomposition components (the
$\Delta$, $\pi$, and $\eta$ terms, and hence $\tau_0$ and $\theta_1,\dots,\theta_3$) are
nonparametrically identified, and I estimate them by both parametric RWR and DML.

In a second analysis, I relax the positivity assumption by allowing some units to have transition propensities of exactly
zero or one. These units need to be excluded from the analysis, but because the true propensities are unknown, I
operationalize this stage by stage, trimming units whose estimated propensity $\hat{\pi}(X_i)$
falls below $\min_{i:\,T_i=1}\hat{\pi}(X_i)$ or above $\max_{i:\,T_i=0}\hat{\pi}(X_i)$ for that
transition's indicator $T$, retaining a unit only if it lies within the common-support region of
every transition it reaches. Concretely, I trim below/above
$\min_{i:\,A_i=1}\hat{\pi}(X_i)$ / $\max_{i:\,A_i=0}\hat{\pi}(X_i)$ at high-school completion;
$\min_{i:\,M_{1i}=1,\,A_i=1}\hat{\pi}(X_i)$ / $\max_{i:\,M_{1i}=0,\,A_i=1}\hat{\pi}(X_i)$ at
college attendance;
$\min_{i:\,M_{2i}=1,\,M_{1i}=1,\,A_i=1}\hat{\pi}(X_i)$ /
$\max_{i:\,M_{2i}=0,\,M_{1i}=1,\,A_i=1}\hat{\pi}(X_i)$ at BA completion; and
$\min_{i:\,M_{3i}=1,\,M_{2i}=1,\,M_{1i}=1,\,A_i=1}\hat{\pi}(X_i)$ /
$\max_{i:\,M_{3i}=0,\,M_{2i}=1,\,M_{1i}=1,\,A_i=1}\hat{\pi}(X_i)$ at graduate-school attendance.
The propensity scores $\hat{\pi}(X_i)$ are estimated using a super learner composed of the Lasso
and random forests.

Averaging over the imputed datasets, these bounds equal $(0.162,\,0.995)$ at high-school
completion, $(0.022,\,0.979)$ at college attendance, $(0.092,\,0.965)$ at BA completion, and
$(0.166,\,0.583)$ at graduate-school attendance, leading me to exclude, respectively, $590$
completers and $4$ non-completers; $94$ college attendees and $16$ non-attendees; $28$ BA
completers and $8$ non-completers; and $3$ graduate-school attendees and $5$ non-attendees---$626$
of $7{,}305$ units in total, leaving a common-support sample of $6{,}679$. Note that this trimmed sample no longer yields an estimate of the ATE, but rather a conditional treatment effect (and corollary decomposition) for units that have a non-zero probability of attaining or not attaining each transition.

Comparing results from the full sample (assuming positivity) and the common-support sample
(relaxing positivity) in Table~\ref{tab:relax-positivity} and Figure~\ref{fig:relax-positivity} shows that
trimming has little effect on the substantive conclusions. Under both parametric RWR and DML the
total effect of high-school completion on log earnings is essentially unchanged (RWR: $0.628$ vs.\
$0.653$; DML: $0.675$ vs.\ $0.643$), as is the direct effect $\theta_0$, which dominates the
decomposition and accounts for roughly $70$--$75\%$ of the total (RWR: $0.469$ vs.\ $0.507$;
DML: $0.466$ vs.\ $0.464$). The continuation effects are largely stable: in particular, the
graduate-school pathway $\theta_3$ is negligible throughout. The largest change among the underlying components is in $\Delta_3$, the graduate-school gross-effect, which falls from 0.138 to 0.080 under DML (it is essentially unchanged under RWR); this estimate is imprecise, however, and its standard error roughly doubles (from 0.040 to 0.075). More generally, standard errors are larger in the trimmed sample.
 
Overall,
excluding units with extreme estimated propensity scores does not materially alter the main conclusion that the effect of high-school completion on earnings is
predominantly direct, with secondary contributions operating through college completion and graduate school.

% ----- G.3 Table (components) -----

\begin{table}[h]
\caption{Direct effects ($\Delta_k$), transition probabilities ($\pi_k$), and covariance terms
($\eta_k$) in the decomposition, estimated by DML and parametric RWR, assuming positivity (full
sample) and relaxing positivity (common-support / trimmed sample). Standard errors, adjusted for
multiple imputation via Rubin's (1987) method, in parentheses. \label{tab:relax-positivity}}
\centering
\begingroup\fontsize{9.2pt}{9.2pt}\selectfont\setlength{\tabcolsep}{3pt}
\renewcommand{\arraystretch}{1.5}
\begin{tabular}{lllllllllll}
   \hline & $\Delta_0$ & $\Delta_1$ & $\Delta_2$ & $\Delta_3$ & $\pi_1$ & $\pi_2$ & $\pi_3$ & $\eta_1$ & $\eta_2$ & $\eta_3$ \\ \hline
   DML, Assuming & 0.466 & 0.200 & 0.444 & 0.138 & 0.427 & 0.555 & 0.314 & 0.007 & 0.004 & -0.002 \\
    & (0.057) & (0.031) & (0.034) & (0.040) & (0.007) & (0.008) & (0.008) & (0.008) & (0.016) & (0.011) \\
   DML, Relaxing & 0.464 & 0.239 & 0.464 & 0.080 & 0.360 & 0.519 & 0.307 & 0.003 & -0.006 & 0.000 \\
    & (0.062) & (0.040) & (0.045) & (0.075) & (0.023) & (0.018) & (0.011) & (0.008) & (0.015) & (0.021) \\
   RWR, Assuming & 0.469 & 0.116 & 0.466 & 0.163 & 0.374 & 0.515 & 0.159 & -0.016 & 0.083 & 0.028 \\
    & (0.115) & (0.093) & (0.097) & (0.114) & (0.007) & (0.015) & (0.051) & (0.015) & (0.077) & (0.022) \\
   RWR, Relaxing & 0.507 & 0.176 & 0.462 & 0.176 & 0.320 & 0.479 & 0.211 & -0.002 & 0.040 & 0.014 \\
    & (0.100) & (0.106) & (0.124) & (0.174) & (0.019) & (0.021) & (0.066) & (0.013) & (0.085) & (0.029) \\
   \hline
\end{tabular}
\endgroup
\end{table}

\vspace{-10em}
% ----- G.4 Figure (thetas) -----
\begin{figure}[ht]
\centering
\includegraphics[width=\linewidth]{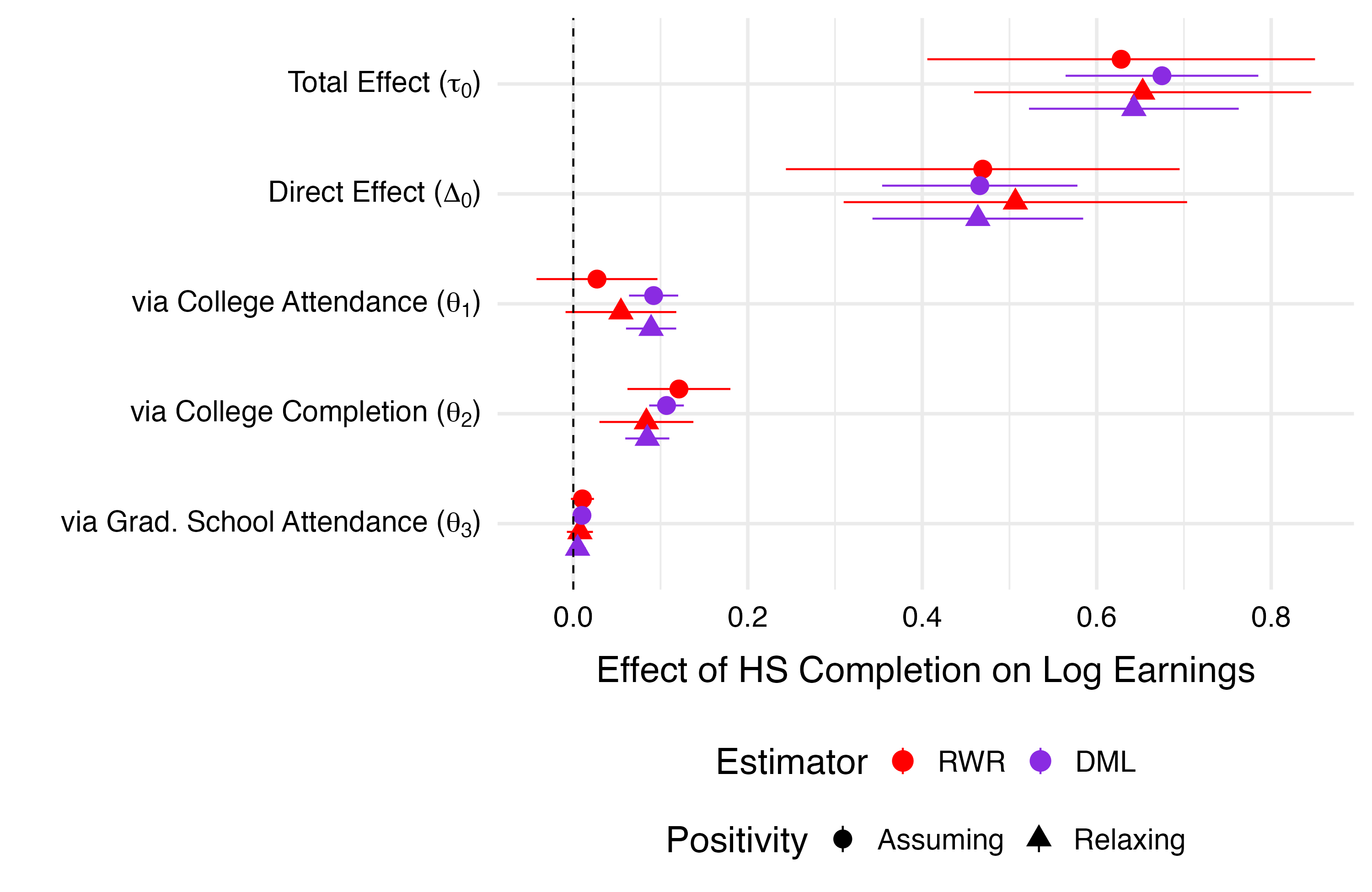}
\caption{Total effect ($\tau_0$), direct effect ($\theta_0$), and pathway-specific continuation
effects via college attendance ($\theta_1$), college completion ($\theta_2$), and graduate-school
attendance ($\theta_3$) of high-school completion on log earnings, estimated by parametric RWR and
DML, assuming positivity and relaxing positivity (common-support sample). Bars are $95\%$
confidence intervals. \label{fig:relax-positivity}}
\end{figure}

\FloatBarrier
\newpage{}
\clearpage
\section{Results for the NLSY79 cohort\protect\label{App:Extension}}

Because NLSY97 respondents are observed only through age 36, the main 
analysis uses logged average earnings over ages 32--36. To evaluate whether 
later-life earnings alter the decomposition---especially for graduate 
schooling, whose returns may rise after early adulthood—I also report results 
using the NLSY79 cohort, for whom we can observe earnings up until age 44.

All variables are defined analogously, although several differences between the 
two surveys are of note.  First, the NLSY79 contains a narrower set of baseline covariates. Whereas the 
NLSY97 includes high school GPA and a rich set of school characteristics 
(disciplinary environment, parental assets, school safety, and peer-context 
measures), the NLSY79 does not. Consequently, the covariate vector in this 
replication consists of a more limited set of demographic and family 
background characteristics: indicators of gender and race/ethnicity; parental 
socioeconomic status (income, educational attainment, and occupation); family 
structure and sibship size; a cognitive ability measure based on the Armed 
Forces Qualification Test (AFQT); and the respondent’s expectations for their 
own educational attainment. The NLSY79 also provides several proxies for 
cultural resources in the home at age 14—specifically, whether the household 
regularly received magazines or newspapers and whether any household member 
held a library card. These measures capture elements of the early educational 
environment but are considerably narrower than the school-context and academic 
performance variables available in the NLSY97.

Second, the NLSY79 lacks transcript-based measures of college major or GPA, so 
there are no intermediate covariates $Z$ between the BA and graduate-school 
transitions. Identification of the MPSE components therefore relies on 
stronger assumptions than in the NLSY97 analysis, as academic performance and 
field-of-study sorting cannot be adjusted for when isolating the contribution 
of BA and post-BA educational pathways.

Figures~\ref{fig:Decomp-constant-nlsy79} and \ref{fig:Decomp-raw-nlsy79} report 
estimates of the direct and indirect components of the MPSE decomposition for 
the NLSY79 cohort, for earnings at age 32-36 (top panel) and at age 35-44 (bottom panel). Figure~\ref{fig:Decomp-constant-nlsy79} presents results 
when a constant $c\in\{10,100,1000\}$ is added to annual earnings prior to 
logging, while Figure~\ref{fig:Decomp-raw-nlsy79} shows the decomposition for 
observed earnings in dollar values.  Table~\ref{tab:Decomp-nlsy79} also summarizes the estimated direct effects 
($\Delta_{k}$), stage-specific transition probabilities ($\pi_{k}$), and 
covariance components ($\eta_{k}$) for the NLSY79 cohort for earnings at age 32-36 and at age 35-44.
\footnote{
There are several differences in the estimated effects in the NLSY79 versus the NLSY97 cohorts. Notably, across specifications, the total effect of high school 
completion is smaller in the NLSY79 cohort ($\tau_{0}=0.4$ for log$(Y+1000)$, 
compared with $\tau_{0}=0.675$ in the NLSY97), consistent with lower marginal returns to schooling for the earlier cohort. The contribution of college attendance to the total 
effect is also estimated to be negligible ($\theta_{1}=0.03$) compared with in the NLSY97 cohort ($\theta_{1}=0.092$), which may reflect the fact that mediation via ``college attendance'' in this decomposition 
captures any postsecondary enrollment, as opposed to mediation via 4-year attendance in the NLSY97, as the NLSY79 public use dataset does not distinguish clearly between two-year and four-year 
colleges.
}

Within the NLSY79 cohort, across the earlier and later earnings windows, results are remarkably stable for the 
log-transformed specifications. As shown in 
Figure~\ref{fig:Decomp-constant-nlsy79}, estimates of $\tau_{0}$ and 
$\theta_{k}$ are nearly indistinguishable across log$(Y+10)$, log$(Y+100)$, and 
log$(Y+1000)$, with only mild divergence when the largest constant is applied. By contrast, decompositions based on raw earnings 
(Figure~\ref{fig:Decomp-raw-nlsy79}) exhibit larger differences across age 
windows. The larger total effect $\tau_{0}$ at ages 35–44 than at ages 32–36, for example, 
results from the fact that the same proportional earnings premium associated 
with high school completion translates into a larger dollar difference as 
overall earnings rise with age. Even so, the 
ordering and relative magnitudes of the direct and indirect pathways remain 
consistent across specifications.

Importantly, the graduate-school pathway contributes almost identically to the 
total effect at ages 32–36 and 35–44, even though one might reasonably expect 
its contribution to increase at later ages as the graduate-school earnings 
premium becomes larger. 

Table~\ref{tab:Decomp-nlsy79} clarifies the source of this stability. The continuation effect ($\Delta_{3}$) nearly doubles between the 
two windows, but the covariance component $\eta_{3}$ declines, indicating a 
weaker alignment between the propensity to attend graduate school and the 
incremental graduate-school earnings premium at later ages. Early in adulthood, 
individuals most likely to pursue graduate education tend to experience the 
largest immediate gains from doing so, generating a larger covariance term. By 
ages 35–44, graduate-school earnings advantages are more broadly distributed 
across degree holders rather than being concentrated among those with the 
highest propensity to attend. Because the overall probability of traversing the 
entire trajectory from high school completion to graduate school remains small, 
these offsetting forces produce a nearly unchanged $\theta_{3}$ across age 
windows.

\begin{figure}[h!]
\begin{centering}
\includegraphics[width=\linewidth,keepaspectratio]{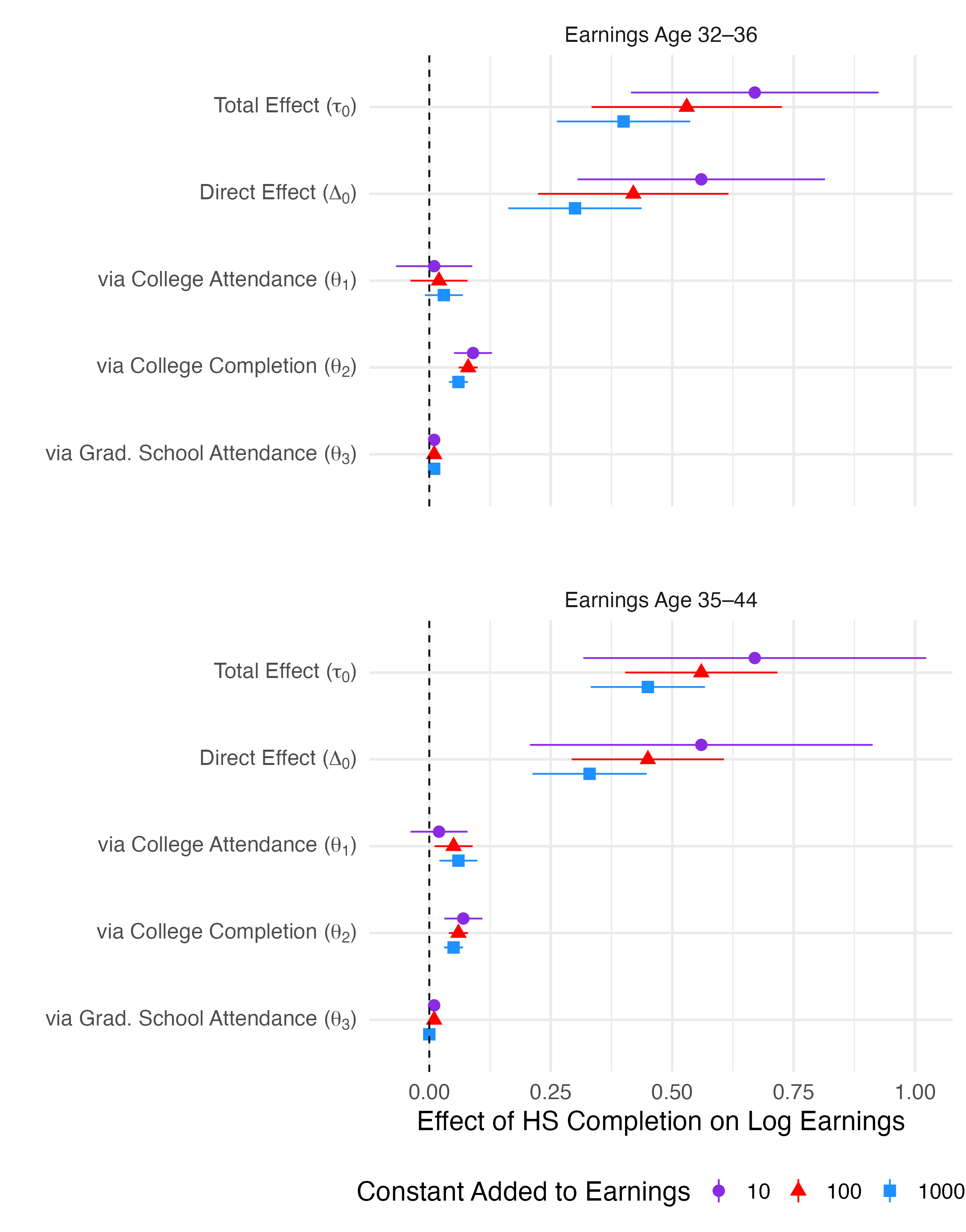}
\par\end{centering}
\caption{Decomposition of the Average Total Effect (ATE) of high school graduation
on logged earnings under alternative definitions of earnings. The figure 
shows estimates of the total effect ($\tau_{0}$) as well as the continuation
effects $\Delta_{0},\tau_{1},\dots,\tau_{K}$ when constants of 10, 100, 
and 1000 are added to annual earnings (in dollars) prior to taking the log.
\protect\label{fig:Decomp-constant-nlsy79}}
\end{figure}

\newpage{}

\begin{figure}[h!]
\begin{centering}
\includegraphics[width=\linewidth,keepaspectratio]{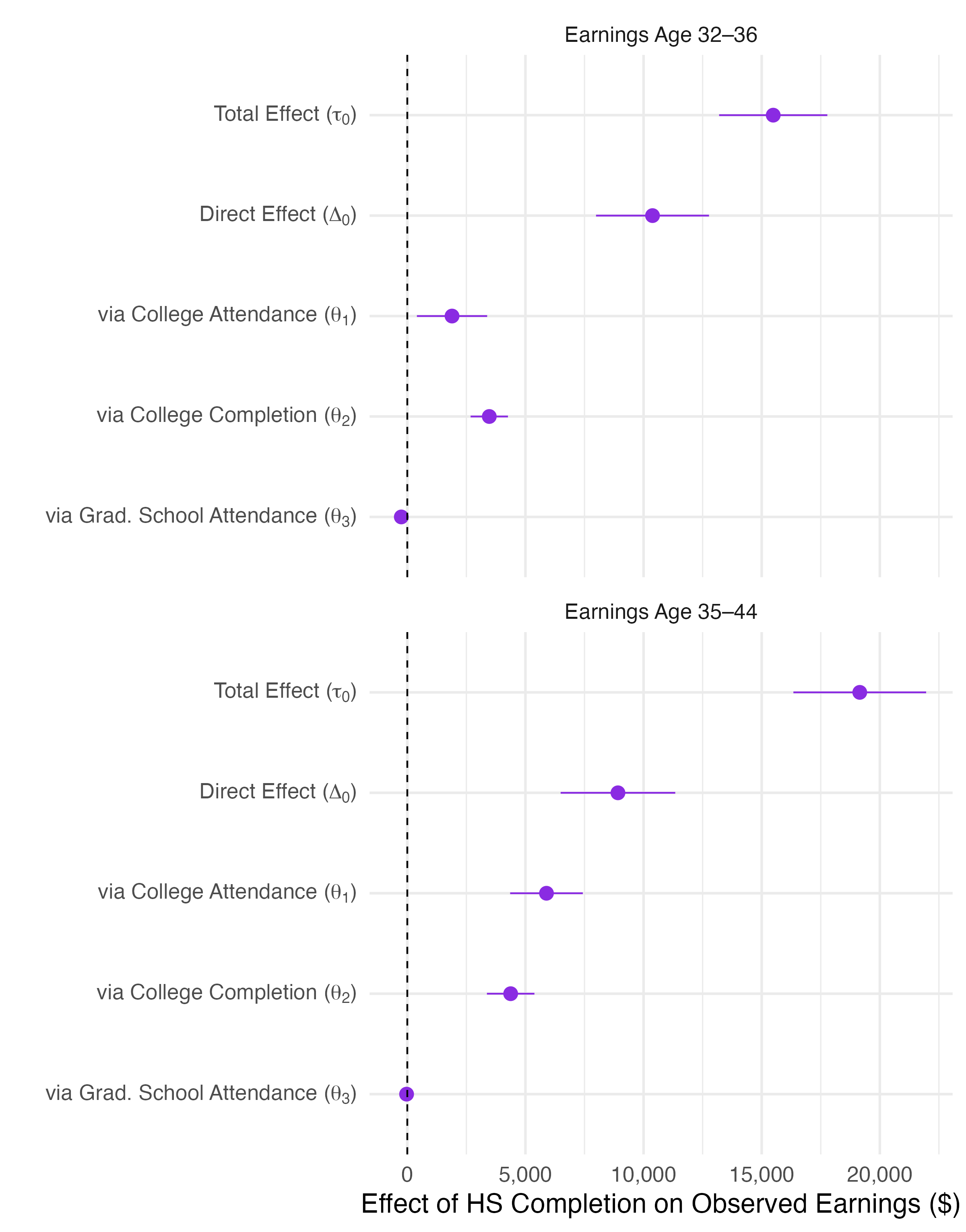}
\par\end{centering}
\caption{Decomposition of the Average Total Effect (ATE) of high school graduation
on observed annual earnings in dollars (no transformation). The figure 
shows estimates of the total effect ($\tau_{0}$) as well as the continuation
effects $\Delta_{0},\tau_{1},\dots,\tau_{K}$ using raw annual earnings.
\protect\label{fig:Decomp-raw-nlsy79}}
\end{figure}

\newpage{}
\begin{table}[h!]
\caption{Direct Effects ($\Delta_{k}$), Probabilities ($\pi_{k}$) and Covariance
Terms ($\eta_{k}$) Involved in Decomposition via Debiased Machine-Learning
(DML) for NLSY79 cohort. \label{tab:Decomp-nlsy79}}
\centering
\begingroup\fontsize{9.2pt}{9.2pt}\selectfont\setlength{\tabcolsep}{3pt}
\renewcommand{\arraystretch}{1.5} 
\begin{tabular}{lllllllllll}
   \hline & $\Delta_0$ & $\Delta_1$ & $\Delta_2$ & $\Delta_3$ & $\pi_1$ & $\pi_2$ & $\pi_3$ & $\eta_1$ & $\eta_2$ & $\eta_3$  \\ \hline Earnings Age 32–36 & 0.300 & 0.070 & 0.520 & 0.150 & 0.480 & 0.330 & 0.120 & -0.010 & -0.040 & 0.020 \\ 
  Earnings Age 35–44 & 0.330 & 0.140 & 0.440 & 0.290 & 0.480 & 0.330 & 0.120 & -0.010 & -0.030 & -0.010 \\ 
   & (0.070) & (0.040) & (0.040) & (0.050) & (0.010) & (0.010) & (0.010) & (0.010) & (0.010) & (0.010) \\ 
   & (0.060) & (0.040) & (0.040) & (0.040) & (0.010) & (0.010) & (0.010) & (0.010) & (0.010) & (0.010) \\ 
   \hline\end{tabular}
\endgroup
\end{table}

\newpage
\FloatBarrier
\section{Extension to categorical mediators} The main text considers a decomposition of the ATE in the case of
binary monotonic mediators (i.e., educational transitions), but the
framework naturally extends to settings with categorical transitions.
Such an extension is especially appealing in the context of U.S. higher
education, where individuals follow increasingly variegated pathways.
In the early 2010s, fewer than $40\%$ of high school graduates enrolled
directly in a four-year college, whereas roughly $30\%$ entered a
two-year college, and close to one third of these later transferred to a
four-year program. Nearly half of all BA recipients nationally had
attended a two-year college at some point in their educational careers.
A similar situation arises at the transition to postgraduate education:
``graduate school'' encompasses multiple substantively distinct routes
(e.g.\ Master’s degrees, professional programs, PhDs), each featuring
different patterns of selection and labor-market returns. Dichotomizing
such transitions discards precisely the heterogeneity that is often of
substantive interest.

To illustrate the extension, consider a single intermediate transition
$M_k$ that now takes values in a finite set of mutually exclusive and
exhaustive categories,
\[
\mathcal{H}_k=\{h_1,\ldots,h_H\},
\]
rather than a binary indicator.  For each category $h\in\mathcal{H}_k$,
let $M_k(h)$ denote the potential mediator value under category $h$, and
let $Y(1_k,h)$ denote the potential outcome under completion of all
prior transitions $1_k=(A=1,M_1=1,\dots,M_{k-1}=1)$ with the mediator $M_k$
set to $h$.

The net effect of transition $k$ can then be written as
\begin{equation}
\tau_k
=
\Delta_k
\;+\;
\sum_{h\in\mathcal{H}_{k+1}}
\bigl(
  \pi_{{k+1},h}\,\Delta_{{k+1},h}
  \;+\;
  \eta_{{k+1},h}
\bigr),
\label{eq:tau-multinomial}
\end{equation}

where

\begin{align*}
\Delta_{k,h}
&\equiv 
\mathbb{E}[
  Y(1_k,h)-Y(1_k,h_0)],
\qquad h_0\in\mathcal{H}_k
\ \text{(baseline)},\\[6pt]
\pi_{k,h}
&\equiv 
\mathbb{E}\!\left[\mathbf{1}\{M_k(1_k)=h\}\right],\\[6pt]
\eta_{k,h}
&\equiv 
\operatorname{cov}[
  \mathbf{1}\{M_k(1_k)=h\},
  \ Y(1_k,h)-Y(1_k,h_0)].
\end{align*}

These terms generalize the binary case:  
$\Delta_{k,h}$ is the direct effect of category $h$ relative to
$h_0$; $\pi_{k,h}$ is the counterfactual probability of attaining $h$
given completion of all prior transitions; and $\eta_{k,h}$ captures the
alignment between the payoff $Y(1_k,h)-Y(1_k,h_0)$ and the propensity to
realize category $h$. The direct effects $\Delta_{k,h}$ and path probabilities $\pi_{k,h}$
remain identified under Assumptions~\ref{assu:Consistency}–
\ref{assu:Positivity}.  However, when $|\mathcal{H}_k|>2$, the covariance
terms are no longer separately nonparametrically identified.
In the binary case,
\[
\tau_k
=
\Delta_k
+
\pi_k \Delta_k
+
\eta_k
\]
contains a single $\eta_k$ and one identifying restriction.
By contrast, \eqref{eq:tau-multinomial} contains
multiple unknowns $\{\eta_{k,h}\}$ but only one restriction, implying
that the vector of covariance components cannot be uniquely recovered
without additional structure—for example, assuming homogeneity across
categories, proportionality, or a parametric model for effect
heterogeneity.

\paragraph{Graduate school example (NLSY97)}
In my empirical application, the $k=3$ transition corresponds to the
education decision following BA completion.  
I disaggregate this transition into three categories:
\[
\mathcal{H}_3=\{\mathrm{BA\!-\!only},\;\mathrm{MA},\;\mathrm{PhD}\},
\]
with \text{BA-only} as the baseline $h_0$. Among BA completers in the NLSY97, roughly $73\%$ do not
pursue any graduate degree, $22\%$ complete a Master’s degree, and only
$5\%$ complete a professional or PhD degree. \footnote{Professional (DDS, MD and JD) and PhD
degree holders are pooled due to small cell sizes.}
Stage-specific causal contrasts are:
\[
\Delta_{3,h}
=
\mathbb{E}\!\bigl[
  Y(1_3,h)-Y(1_3,\mathrm{BA\!-\!only})
\bigr],
\qquad
h\in\{\mathrm{MA},\mathrm{PhD}\}.
\]

The continuation value that enters the full MPSE decomposition is
\begin{equation}
\theta_3
=
\left(\prod_{j=1}^{2}\pi_j\right)
\sum_{h\in\{\mathrm{MA},\mathrm{PhD}\}}
\bigl(
  \pi_{3,h}\,\Delta_{3,h} + \eta_{3,h}
\bigr).
\label{eq:theta3-multinomial}
\end{equation}

Figure~\ref{fig:Decomp-phd} displays the resulting decomposition for MA
and professional/PhD pathways, and Table~\ref{tab:grad-path} reports the estimated
components. The continuation effects for the graduate-school transition are small for both pathways, with
$\widehat{\theta}_{3,\mathrm{MA}} = 0.0119$ and 
$\widehat{\theta}_{3,\mathrm{PhD}} = 0.0077$, corresponding to earnings effects of high school graduation via these transitions of
$1.2\%$ and $0.8\%$, respectively. Although the continuation effect is slightly lower for the
PhD/professional pathway, this masks a much larger underlying causal contrast:
the net effect of completing a PhD or professional postgraduate program  relative to not pursuing postgraduate study is 
$\widehat{\Delta}_{3,\mathrm{PhD}} = 0.521$ (a $68\%$ earnings premium), compared to a more modest
$\widehat{\Delta}_{3,\mathrm{MA}} = 0.213$ (a $24\%$ premium) for Master’s degrees. The similarity in the
overall continuation effects instead reflects stark differences in the counterfactual probabilities of
each pathway: only about $2\%$ of BA completers pursue a PhD-level degree 
($\widehat{\pi}_{3,\mathrm{PhD}} \approx 0.021$), whereas Master's attainment is roughly six times more
common ($\widehat{\pi}_{3,\mathrm{MA}} \approx 0.134$).

\begin{table}[h]
\centering
\caption{Estimated components for MA and PhD pathways in the graduate-school transition.
\label{tab:grad-path}}
\begin{tabular}{lcccccc}
\hline
 & $\Delta_{3,MA}$ & $\Delta_{3,PhD}$ & $\pi_{3,MA}$ & $\pi_{3,PhD}$ & $\eta_{3,MA}$ & $\eta_{3,PhD}$ \\
\hline
DML & 0.213 & 0.521 & 0.134 & 0.021 & 0.022 & 0.022 \\
    & (0.032) & (0.032) & (0.007) & (0.002) & (0.008) & (0.008) \\
\hline
\end{tabular}
\end{table}

\begin{figure}[h!]
\begin{centering}
\includegraphics[scale=0.6]{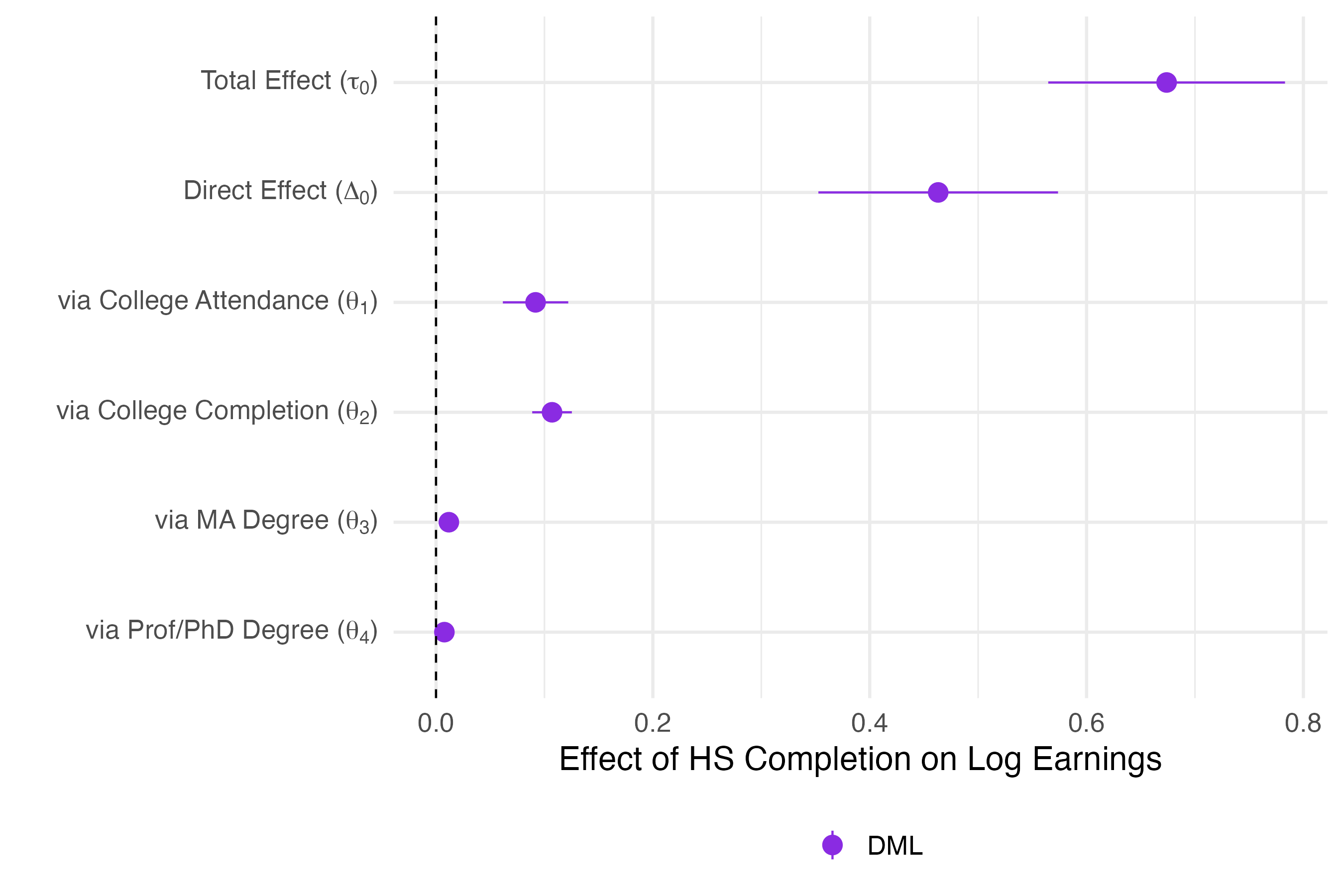}
\par\end{centering}
\caption{Continuation effects for Master's and Professional/PhD pathways in the NLSY97.
\label{fig:Decomp-phd}}
\end{figure}

\FloatBarrier
\newpage
\newpage

\clearpage
\section{Description of EIFs used in empirical illustration\protect\label{App:Illustration-Description}}

For each component involved in the MPSE, I construct a Neyman-orthogonal
``signal'' using its EIF, whose exact form depends on whether each
set of intermediate confounders is empty or not. Figure \ref{fig:DAG-empirical-1}
in the main text shows a potential data-generating process for the
direct and indirect (continuation) effects of high school graduation
on adult earnings, via three transitions: college attendance ($M_{1}$),
BA completion ($M_{2}$), and graduate school attendance ($M_{3}$).
I assume that a set of pre-college characteristics serve as confounders
for the $A-(M_{1},M_{2},M_{3},Y)$ relationships, and that a set of
post-secondary confounders $Z$ confound the $M_{2}-(M_{3},Y)$ relationships.

Under these assumptions for the various sets of confounders, my MPSE
decomposition implies that, in the case of the four transitions (one
treatment and three mediators), it suffices to estimate the following
three sets of parameters: (i) four direct effects $\Delta_{k},k\in\{0,\dots,3\}$,
where $\Delta_{3}=\tau_{3}$, (ii) four gross effects $\tau_{k},k\in\{0,\dots,3\}$,
where $\tau_{0}=\mathrm{ATE}$, (iii) three mediator terms, $\pi_{k},k\in\{1,\dots,3\}$.
All components in the three-mediator decomposition can then be estimated
as functions of these parameters. For each of these target parameters,
I construct a Neyman-orthogonal signal using its efficient influence
function. Because of my assumed data-generating process, which maintains
that there is only a single set of intermediate confounders (as opposed
to a separate set of confounders for each mediator), the EIF for each
estimand involved in the decomposition simplifies somewhat. Specifically,
the recentered EIFs for each component in the decomposition are shown
below:

\begin{align*}
\ensuremath{M_{1}^{*}(1)} & =\gamma_{1}(X)+\frac{\mathbb{I}(A=1)}{\pi_{0}(X,1)}(M_{1}-\gamma_{1}(X)),\\
\ensuremath{M_{2}^{*}(1,1)} & =\gamma_{2}(X)+\frac{\mathbb{I}(A=1)\mathbb{I}(M_{1}=1)}{\pi_{0}(X,1)\pi_{1}(X,1)}(M_{2}-\gamma_{2}(X)),\\
\ensuremath{M_{3}^{*}(1,1,1)} & =\mathbb{E}[\gamma_{3}(X,Z)|X,A=1,M_{1}=1]\\
 & +\frac{\mathbb{I}(A=1)\mathbb{I}(M_{1}=1)}{\pi_{0}(X,1)\pi_{1}(X,1)}(\gamma_{3}(X,Z)-\mathbb{E}[\gamma_{3}(X,Z)|X,A=1,M_{1}=1])\\
 & +\frac{\mathbb{I}(A=1)\mathbb{I}(M_{1}=1)\mathbb{I}(M_{2}=1)}{\pi_{0}(X,1)\pi_{1}(X,1)\pi_{2}(X,Z,1)}(M_{3}-\gamma_{3}(X,Z)),\\
\ensuremath{Y^{*}(a)} & =\mu_{0}(X,a)+\frac{\mathbb{I}(A=a)}{\pi_{0}(X,a)}(Y-\mu_{0}(X,a)),\;\text{for }a\in\{0,1\}\\
\ensuremath{Y^{*}(1,m_{1})} & =\mu_{1}(X,m_{1})+\frac{\mathbb{I}(A=1)\mathbb{I}(M_{1}=m_{1})}{\pi_{0}(X,1)\pi_{1}(X,m_{1})}(Y-\mu_{1}(X,m_{1})),\;\text{for }m_{1}\in\{0,1\}\\
\ensuremath{Y^{*}(1,1,m_{2})} & =\mathbb{E}[\mu_{2}(X,Z,m_{2})|X,A=1,M_{1}=1]\\
 & +\frac{\mathbb{I}(A=1)\mathbb{I}(M_{1}=1)}{\pi_{0}(X,1)\pi_{1}(X,1)}(\mu_{2}(X,Z,m_{2})-\mathbb{E}[\mu_{2}(X,Z,m_{2})|X,A=1,M_{1}=1])\\
 & +\frac{\mathbb{I}(A=1)\mathbb{I}(M_{1}=1)\mathbb{I}(M_{2}=m_{2})}{\pi_{0}(X,1)\pi_{1}(X,1)\pi_{2}(X,Z,m_{2})}(Y-\mu_{2}(X,Z,m_{2})),\;\text{for }m_{2}\in\{0,1\}\\
Y^{*}(1,1,1,m_{3})= & \mathbb{E}[\mu_{3}(X,Z,m_{3})|X,A=1,M_{1}=1]\\
 & +\frac{\mathbb{I}(A=1)\mathbb{I}(M_{1}=1)}{\pi_{0}(X,1)\pi_{1}(X,1)}\left(\mu_{3}(X,Z,m_{3})-\mathbb{E}[\mu_{3}(X,Z,m_{3})|X,A=1,M_{1}=1]\right)\\
 & +\frac{\mathbb{I}(A=1)\mathbb{I}(M_{1}=1)\mathbb{I}(M_{2}=1)\mathbb{I}(M_{3}=m_{3})}{\pi_{0}(X,1)\pi_{1}(X,1)\pi_{2}(X,Z,1)\pi_{3}(X,Z,m_{3})}\left(Y-\mu_{3}(X,Z,m_{3})\right)\;\text{for }m_{3}\in\{0,1\},
\end{align*}

where

\[
\begin{aligned}\pi_{0}(X,a) & \triangleq\mathrm{Pr}[A=a\mid X]\\
\pi_{1}(X,m_{1}) & \triangleq\mathrm{Pr}[M_{1}=m_{1}\mid X,A=1]\\
\pi_{2}(X,Z,m_{2}) & \triangleq\mathrm{Pr}[M_{2}=m_{2}\mid X,A=1,M_{1}=1,Z]\\
\pi_{3}(X,Z,m_{3}) & \triangleq\mathrm{Pr}[M_{3}=m_{3}\mid X,A=1,M_{1}=1,Z,M_{2}=1]\\
\gamma_{1}(X) & \triangleq\mathbb{E}[M_{1}\mid X,A=1]\\
\gamma_{2}(X) & \triangleq\mathbb{E}[M_{2}\mid X,A=1,M_{1}=1]\\
\gamma_{3}(X,Z) & \triangleq\mathbb{E}[M_{3}\mid X,A=1,M_{1}=1,Z,M_{2}=1]\\
\mu_{0}(X,a) & \triangleq\mathbb{E}[Y|X,A=a]\\
\mu_{1}(X,m_{1}) & \triangleq\mathbb{E}[Y|X,A=1,M_{1}=m_{1}]\\
\mu_{2}(X,Z,m_{2}) & \triangleq\mathbb{E}[Y|X,A=1,M_{1}=1,Z,M_{2}=m_{2}]\\
\mu_{3}(X,Z,m_{3}) & \triangleq\mathbb{E}[Y|X,A=1,M_{1}=1,Z,M_{2}=1,M_{3}=m_{3}].
\end{aligned}
\]

\newpage{}

\section{Proofs and technical details\protect\label{App:Derivation-Influence}}

\subsection{EIFs for $\eta_{k}$ and $\theta_{k}$ terms (Theorem \ref{prop:eif-cov-theta})}

Under Assumptions \ref{assu:Consistency}-\ref{assu:Positivity},
the covariance component $\eta_{k}$ is identified as $\eta_{k}=\tau_{k-1}-\Delta_{k-1}-\pi_{k}\tau_{k}$.
Following \citet[p.~15]{Kennedy2022}, I let $\mathbb{IF}:\Psi\rightarrow L_{2}(\mathbb{P})$
denote the operator mapping the functionals $\{\Delta_{k},\pi_{k},\eta_{k}\}:\mathcal{P}\rightarrow\mathbb{R},$
$\forall k\in[K]$ to their respective influence functions under the
nonparametric model $\mathcal{P}$. First, by linearity of the EIF,
$\mathbb{IF}(\eta_{k})$ is given by

\[
\mathbb{IF}(\eta_{k})=\mathbb{IF}(\tau_{k-1})-\mathbb{IF}(\Delta_{k-1})-\mathbb{IF}(\pi_{k}\tau_{k}).
\]

Since $\mathbb{IF}(\pi_{k}\tau_{k})$ can be written as follows $\mathbb{IF}(\pi_{k}\tau_{k})=\tau_{k}\mathbb{IF}(\pi_{k})+\pi_{k}\mathbb{IF}(\tau_{k})$,
$\mathbb{IF}(\eta_{k})$ can be written as

\begin{align*}
\mathbb{IF}(\eta_{k}) & =\mathbb{IF}(\tau_{k-1})-\mathbb{IF}(\Delta_{k-1})-\big(\tau_{k}\mathbb{IF}(\pi_{k})+\pi_{k}\mathbb{IF}(\tau_{k})\big)\\
 & =\mathbb{IF}(\tau_{k-1})-\mathbb{IF}(\Delta_{k-1})-\tau_{k}\mathbb{IF}(\pi_{k})-\pi_{k}\mathbb{IF}(\tau_{k}).
\end{align*}

Noticing that we can rewrite this expression as 

\begin{align*}
\mathbb{IF}(\eta_{k}) & =\mathbb{RIF}(\tau_{k-1})-\tau_{k-1}-\mathbb{RIF}(\Delta_{k-1})+\Delta_{k-1}-\tau_{k}\mathbb{RIF}(\pi_{k})+\tau_{k}\pi_{k}-\pi_{k}\mathbb{RIF}(\tau_{k})+\pi_{k}\tau_{k}\\
 & =\mathbb{RIF}(\tau_{k-1})-\mathbb{RIF}(\Delta_{k-1})-\tau_{k}\mathbb{RIF}(\pi_{k})-\pi_{k}\mathbb{RIF}(\tau_{k})+\pi_{k}\tau_{k}-\eta_{k},
\end{align*}

where $\mathbb{RIF}(\phi)=\mathbb{IF}(\phi)+\phi$, we can obtain
the corresponding EIF-based estimator for $\eta_{k}$ by solving the
empirical moment condition implied by setting the average of the above
equation equal to $0$, and plugging in the set of estimated nuisance
functions:

\[
\hat{\eta}_{k}^{\text{eif}}=\mathbb{\widehat{RIF}}(\tau_{k-1})-\mathbb{\widehat{RIF}}(\Delta_{k-1})-\tau_{k}\mathbb{\mathbb{\widehat{RIF}}}(\pi_{k})-\pi_{k}\mathbb{\mathbb{\widehat{RIF}}}(\tau_{k})+\pi_{k}\tau_{k},
\]

where $\mathbb{\hat{RIF}}(\phi)=\hat{\mathbb{IF}}(\phi)+\phi,$ and
$\hat{\mathbb{IF}}(\phi)$ denotes the influence function of a parameter
evaluated at estimates of its component nuisance functions.

Turning next to the influence functions for the continuation effects
$\theta_{k}$, $k\in\{1,\dots K\}$, following the same logic as the
above, we can write the EIF of $\theta_{k}$, $\mathbb{IF}(\theta_{k})$,
as

\[
\mathbb{IF}(\theta_{k})=\mathbb{IF}(\Delta_{k})\prod_{j=1}^{k}\pi_{j}+\Delta_{k}\sum_{j=1}^{k}\mathbb{IF}(\pi_{j})\prod_{l:l\ne j}^{k}\pi_{l}+\mathbb{IF}(\eta_{k})\prod_{j=1}^{k-1}\pi_{j}+\eta_{k}\sum_{j=1}^{k-1}\mathbb{IF}(\pi_{j})\prod_{l:l\ne j}^{k-1}\pi_{l}.
\]
Rewriting this expression as 

\begin{align*}
\mathbb{IF}(\theta_{k}) & =\mathbb{RIF}(\Delta_{k})\prod_{j=1}^{k}\pi_{j}+\Delta_{k}\sum_{j=1}^{k}\mathbb{RIF}(\pi_{j})\prod_{l:l\ne j}^{k}\pi_{l}+\mathbb{RIF}(\eta_{k})\prod_{j=1}^{k-1}\pi_{j}+\eta_{k}\sum_{j=1}^{k-1}\mathbb{RIF}(\pi_{j})\prod_{l:l\ne j}^{k-1}\pi_{l}\\
 & \;-k\Delta_{k}\prod_{j=1}^{k}\pi_{j}-(k-1)\eta_{k}\prod_{j=1}^{k-1}\pi_{j}-\theta_{k},
\end{align*}

we obtain the corresponding EIF-based estimator for $\theta_{k}$
as:

 \begin{adjustwidth}{-2cm}{-2cm}

\begin{align*}
\hat{\theta}_{k}^{\text{eif}} & =\widehat{\mathbb{RIF}}(\Delta_{k})\prod_{j=1}^{k}\hat{\pi}_{j}+\hat{\Delta}_{k}\sum_{j=1}^{k}\hat{\mathbb{RIF}}(\pi_{j})\prod_{l:l\ne j}^{k}\hat{\pi}_{l}+\widehat{\mathbb{RIF}}(\eta_{k})\prod_{j=1}^{k-1}\hat{\pi}_{j}+\hat{\eta}_{k}\widehat{\mathbb{RIF}}(\pi_{j})\prod_{l:l\ne j}^{k-1}\hat{\pi}_{l}\\
 & \;-k\hat{\Delta}_{k}\prod_{j=1}^{k}\hat{\pi}_{j}-(k-1)\hat{\eta}_{k}\prod_{j=1}^{k-1}\hat{\pi}_{j}.
\end{align*}

 \end{adjustwidth}

\newpage{}

\subsection{Semiparametric efficiency (Theorem \ref{prop:semiparametric-efficiency})}

In this section, I establish the conditions required for the semiparametric
efficiency of all terms featured in the decomposition. Before proceeding,
I establish some notational preliminaries. Let $||g||=(\int g^{\top}gdP)^{1/2}$
denote the $L_{2}(P)$ norm, and let $R_{n}\big(\cdot\big)$ denote
a mapping from a nuisance function to its $L_{2}(P)$ convergence
rate. Let $\hat{\varphi}_{km_{k}}^{\text{EIF}}=\mathbb{P}_{n}[m(O;\hat{\eta})]$,
where $m(O;\hat{\eta})$ is the quantity inside $\mathbb{P}_{n}[\cdot]$
in equation \ref{eq:eif-estimator}, and $\hat{\eta}=(\hat{\pi}_{0},\dots,\hat{\pi}_{K},\hat{\mu}_{0},\dots,\hat{\mu}_{K})$. For the purposes of the technical proofs in this appendix, I adopt a
slightly more compact notation than that used in the main text.
Specifically, I define $M_{0}\equiv A$, and let
$\overline{1}_{k}\equiv(A=1,M_{1}=1,\ldots,M_{k-1}=1)$. We have that 

\begin{align}
\hat{\varphi}_{km_{k}}^{\text{EIF}}-\varphi_{km_{k}} & =\mathbb{P}_{n}[m(O;\hat{\eta})]-P[m(O;\eta)]\nonumber \\
 & =\mathbb{P}_{n}[m(O;\eta)]+\underbrace{P[m(O;\hat{\eta})-m(O;\eta)]}_{\triangleq R_{2}(\hat{\eta})}+\left(\mathbb{P}_{n}-P\right)[m(O;\hat{\eta})-m(O;\eta)],\label{eq:von-mises}
\end{align}
where $Pg=\int gdP$ denotes the expectation of function $g$ at the
truth. The first term in equation \ref{eq:von-mises} is a sample
average, and can be analyzed with the central limit theorem. It has
an asymptotic variance of $\mathbb{E}\left[\left(\varphi_{km_{k}}(O;\eta)\right)^{2}\right].$
The last term is an empirical process term that will be $o_{p}(n^{-1/2})$
if either the nuisance functions fall in a Donsker class or if cross-fitting
is used to induce independence between $\hat{\eta}$ and $O$. Thus,$\hat{\theta}^{\text{EIF}}$
will be asymptotically normal and semiparametric efficient if $R_{2}(\hat{\eta})$
is $o(n^{-1/2})$. To analyze this term, I first note that 

\[
\begin{aligned}P[m(O;\eta)] & =P\bigg[\frac{A}{\pi_{01}\big(X\big)}\left(\frac{\mathbb{I}(M_{k}=m_{k})}{\pi_{km_{k}}\big(X,\overline{Z}_{k}\big)}\prod_{j=1}^{k-1}\frac{M_{j}}{\pi_{j1}\big(X,\overline{Z}_{j}\big)}\right)\left(Y-\mu_{km_{k}}^{k}\left(X,\bar{Z}_{k}\right)\right)\\
 & \quad+\sum_{j=1}^{k}\frac{A}{\pi_{01}\big(X\big)}\left(\prod_{l=1}^{j-1}\frac{M_{l}}{\pi_{l1}\big(X,\overline{Z}_{l}\big)}\right)\left(\mu_{jm_{k}}^{k}\left(X,\overline{Z}_{k}\right)-\mu_{j-1m_{k}}^{k}\left(X,\bar{Z}_{j-1}\right)\right)+\mu_{0}(X)\bigg]\\
 & =P\bigg[\frac{A}{\pi_{01}\big(X\big)}\left(\frac{\mathbb{I}(M_{k}=m_{k})}{\pi_{km_{k}}\big(X,\overline{Z}_{k}\big)}\prod_{j=1}^{k-1}\frac{M_{j}}{\pi_{j1}\big(X,\overline{Z}_{j}\big)}\right)\cdot\\
 & \quad\left(\underbrace{\mathbb{E}[Y-\mu_{km_{k}}^{k}\left(X,\bar{Z}_{k}\right)|X,\bar{Z}_{k},\overline{1}_{k},m_{k}]}_{=0}\right)\\
 & \quad+\sum_{j=1}^{k}\frac{A}{\pi_{01}\big(X\big)}\left(\prod_{l=1}^{j-1}\frac{M_{l}}{\pi_{l1}\big(X,\overline{Z}_{l}\big)}\right)\cdot\\
 & \quad\left(\underbrace{\mathbb{E}[\mu_{jm_{k}}^{k}\left(X,\overline{Z}_{j}\right)-\mu_{(j-1)m_{j}}\left(X,\bar{Z}_{j-1}\right)|X,\bar{Z}_{j-1},\overline{1}_{j}]}_{=0}\right)+\mu_{0}(X)\bigg]\\
 & =P\bigg[\mu_{0}(X)\bigg].
\end{aligned}
\]

Plugging this result into $R_{2}(\hat{\eta})$, we have that 

\[
\begin{aligned}R_{2}(\hat{\eta}) & =P[m(O;\hat{\eta})-m(O;\eta)]\\
 & =P\bigg[\sum_{j=0}^{k-1}\frac{A}{\hat{\pi}_{01}(X)}\left(\prod_{l=1}^{j-1}\frac{M_{j}}{\hat{\pi}_{l1}(X,\bar{Z}_{l})}\right)\cdot\\
 & \quad\big(\hat{\pi}_{j1}\big(X,\overline{Z}_{j}\big)-\pi_{j1}\big(X,\overline{Z}_{j}\big)\big)\big(\hat{\mu}_{jm_{k}}^{k}\big(X,\bar{Z}_{j}\big)-\mu_{jm_{k}}^{k}\big(X,\bar{Z}_{j}\big)\big)\\
 & \quad+\frac{A}{\hat{\pi}_{01}(X)}\left(\prod_{l=1}^{k-1}\frac{M_{j}}{\hat{\pi}_{l1}(X,\bar{Z}_{l})}\right)\cdot\\
 & \quad\big(\hat{\pi}_{km_{k}}\big(X,\overline{Z}_{j}\big)-\pi_{km_{k}}\big(X,\overline{Z}_{j}\big)\big)\big(\hat{\mu}_{km_{k}}^{k}\big(X,\bar{Z}_{k}\big)-\mu_{km_{k}}^{k}\big(X,\bar{Z}_{k}\big)\big)\bigg]\\
 & =\sum_{j=0}^{k}O_{p}\big(||\hat{\pi}_{j1}\big(X,\overline{Z}_{j}\big)-\pi_{j1}\big(X,\overline{Z}_{j}\big)||\cdot||\hat{\mu}_{jm_{k}}^{k}\big(X,\bar{Z}_{k}\big)-\mu_{jm_{k}}^{k}\big(X,\bar{Z}_{k}\big)||\big),
\end{aligned}
\]

where the last equality results from the positivity assumption that
$\hat{\pi}_{k1}(X,\bar{Z}_{k})$ is bounded away from zero, for all
$k\in[K]$, and from the Cauchy-Schwartz inequality. Then, assuming
that the empirical process term is of order $o_{p}(n^{-1/2})$, we
can write $\hat{\varphi}_{km_{k}}^{\text{EIF}}-\varphi_{km_{k}}$
as

\begin{align*}
\hat{\psi}_{km_{k}}^{\text{EIF}}-\psi_{km_{k}} & =\mathbb{P}_{n}[m(O;\eta)-\psi_{km_{k}}]\\
 & \quad+\sum_{j=0}^{k}O_{p}\big(||\hat{\pi}_{j1}\big(X,\overline{Z}_{j}\big)-\pi_{j1}\big(X,\overline{Z}_{j}\big)||\big)\cdot O_{p}\big(||\hat{\mu}_{jm_{k}}^{k}\big(X,\bar{Z}_{k}\big)-\mu_{jm_{k}}^{k}\big(X,\bar{Z}_{j}\big)||\big)\\
 & \quad+o_{p}(n^{-1/2}).
\end{align*}

Thus, letting $R_{n}(k,m_{k})\triangleq\sum_{j=0}^{k}R_{n}\big(\hat{\pi}_{j})R_{n}\big(\hat{\mu}_{km_{k}}^{k})$,
$\hat{\varphi}_{km_{k}}^{\text{rEIF}}$ is consistent if $R_{n}(k,m_{k})=o(1)$
and it is semiparametric efficient if $R_{n}(k,m_{k})=o(n^{-1/2})$.
Clearly, then, $\hat{\Delta}_{k}^{\text{rEIF }}$ is consistent if
$\sum_{j=k}^{k+1}R_{n}(j,0)=o(1)$ and it is semiparametric efficient
if $\sum_{j=k}^{k+1}R_{n}(j,0)=o(n^{-1/2})$. Similarly, $\hat{\tau}_{k}^{\text{rEIF}}$
is consistent if $\sum_{j=0}^{1}R_{n}(k,j)=o(1)$ and it is semiparametric
efficient if $\sum_{j=0}^{1}R_{n}(k,j)=o(n^{-1/2})$. 

Turning next to $\phi_{k}\triangleq\mathbb{E}[M_{k+1}(\overline{1}_{k+1})]$,
for all $k\in[K-1]$, we can similarly define $\ensuremath{\gamma_{k}\left(X,\bar{Z}_{k}\right)}$
iteratively as

\begin{align*}
\gamma_{k}\left(X,\bar{Z}_{k}\right) & \triangleq\mathbb{E}\left[M_{k+1}\mid X,\bar{Z}_{k},\overline{1}_{k+1}\right]\\
\gamma_{j}\left(X,\bar{Z}_{j}\right) & \triangleq\mathbb{E}\left[\gamma_{j+1}\left(X,\bar{Z}_{j+1}\right)\mid X,\bar{Z}_{j},\overline{1}_{j+1}\right]\forall j\in[k-1].
\end{align*}

Similarly to the previous case, the EIF of $\phi_{k}$ is equal to

\begin{align*}
\varphi_{k}(O) & =\frac{A}{\pi_{01}\big(X\big)}\left(\prod_{l=1}^{k}\frac{M_{l}}{\pi_{l1}\big(X,\overline{Z}_{l}\big)}\right)\left(Y-\gamma_{k}\left(X,\bar{Z}_{k}\right)\right).\\
 & \quad+\sum_{j=0}^{k}\frac{A}{\pi_{01}\big(X\big)}\left(\prod_{l=1}^{j-1}\frac{M_{l}}{\pi_{l1}\big(X,\overline{Z}_{l}\big)}\right)\left(\gamma_{j}\left(X,\overline{Z}_{j}\right)-\gamma_{j-1}\left(X,\bar{Z}_{j-1}\right)\right)\\
 & \quad+\gamma_{0}(X)-\phi_{k}.
\end{align*}

Following similar arguments to the above, we have that

\begin{align*}
\hat{\phi}_{k}^{\text{EIF}}-\phi_{k} & =\mathbb{P}_{n}[m_{2}(O;\eta)-\phi_{k}]\\
 & \quad+\sum_{j=0}^{k}O_{p}\big(||\hat{\pi}_{j1}\big(X,\overline{Z}_{j}\big)-\pi_{j1}\big(X,\overline{Z}_{j}\big)||\big)\cdot O_{p}\big(||\hat{\gamma}_{j}\big(X,\bar{Z}_{j}\big)-\gamma_{j}\big(X,\bar{Z}_{j}\big)||\big)\\
 & \quad+o_{p}(n^{-1/2}),
\end{align*}

where $m_{2}(O;\hat{\eta})=\varphi_{k}+\phi_{k}$. Thus, letting
$R_{n}(k,\gamma)\triangleq\sum_{j=0}^{k}R_{n}\big(\hat{\pi}_{j})R_{n}\big(\hat{\gamma}_{j})$,
$\hat{\phi}_{k}^{\text{EIF}}$ is consistent if $R_{n}(k,\gamma)=o(1)$
and it is semiparametric efficient if $R_{n}(k,\gamma)=o(n^{-1/2})$.
This result implies that if all nuisance functions are consistently
estimated and converge at faster than $n^{1/4}$ rates, then $\hat{\phi}_{k}^{\text{EIF}}$
is semiparametric efficient. I first establish the following lemma:

\begin{lem}
\label{lem:1}Let $X_{n}$ and $Y_{n}$ denote two convergent sequences,
where $X_{n}=O_{p}(n^{-1/2})$ and $Y_{n}=o_{p}(n^{-1/2})$. Then,
(a) $X_{n}Y_{n}=o_{p}(n^{-1/2})$, and (b) $X_{n}X_{n}=o_{p}(n^{-1/2})$.
\end{lem}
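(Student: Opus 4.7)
The plan is to reduce both parts to two standard facts about stochastic order: (i) for sequences $U_n = O_p(1)$ and $V_n = o_p(1)$, the product satisfies $U_n V_n = o_p(1)$; and (ii) if $a_n \to 0$ is a deterministic sequence and $W_n = O_p(1)$, then $a_n W_n = o_p(1)$. Both are elementary consequences of the union bound applied to tail events of the form $\{|U_n| > M\}$ and $\{|V_n| > \epsilon\}$, so I would cite them as standard and not re-derive them.

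For part (a), I would rescale by $n^{1/2}$ and write $n^{1/2} X_n Y_n = (n^{1/2} X_n)\cdot Y_n$. The hypothesis $X_n = O_p(n^{-1/2})$ gives $n^{1/2} X_n = O_p(1)$, and $Y_n = o_p(n^{-1/2})$ implies in particular $Y_n = o_p(1)$. Applying fact (i) yields $n^{1/2} X_n Y_n = o_p(1)$, which is exactly the claim $X_n Y_n = o_p(n^{-1/2})$.

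For part (b), I would first square the rate to obtain $X_n^2 = O_p(n^{-1})$ — concretely, $n X_n^2 = (n^{1/2} X_n)^2 = O_p(1)$ since continuous functions preserve $O_p$. Then I would write $n^{1/2} X_n^2 = n^{-1/2}(n X_n^2)$ and apply fact (ii) with $a_n = n^{-1/2}$ and $W_n = n X_n^2$ to conclude $n^{1/2} X_n^2 = o_p(1)$, i.e.\ $X_n^2 = o_p(n^{-1/2})$.

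There is no real obstacle here; the only thing to be careful about is not to invoke a ``product rule'' of the form $O_p(n^{-1/2}) \cdot O_p(n^{-1/2}) = o_p(n^{-1/2})$ without justification, since the product of two $O_p(n^{-1/2})$ terms is a priori only $O_p(n^{-1})$. The passage from $O_p(n^{-1})$ to $o_p(n^{-1/2})$ is exactly where the deterministic rate gap $n^{-1/2} \to 0$ does the work, and it is the step worth stating explicitly in the write-up.
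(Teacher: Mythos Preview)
Your proposal is correct and follows essentially the same approach as the paper: both arguments reduce to the stochastic-order calculus facts $O_p(1)\cdot o_p(1)=o_p(1)$ and $a_n\,O_p(1)=o_p(1)$ for deterministic $a_n\to 0$, and both handle part (b) by first obtaining $X_n^2=O_p(n^{-1})$ and then peeling off a deterministic $n^{-1/2}$. The only cosmetic difference is bookkeeping: in part (a) the paper absorbs the rate into $X_n$ (writing $X_n=o_p(1)$ and then $o_p(1)\cdot o_p(n^{-1/2})=o_p(n^{-1/2})$), whereas you rescale by $n^{1/2}$ upfront and treat $Y_n$ as the $o_p(1)$ factor---same content, different placement of the scaling.
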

\begin{proof}
(a) $X_{n}=O_{p}(n^{-1/2})=n^{-1/2}O_{p}(1)=o_{p}(1)$. Thus, $X_{n}Y_{n}=o_{p}(1)o_{p}(n^{-1/2})=o_{p}(n^{-1/2})$.
(b) $X_{n}X_{n}=O_{p}(n^{-1/2})O_{p}(n^{-1/2})=O_{p}(n^{-1})=n^{-1/2}O_{p}(n^{-1/2})=n^{-1/2}o_{p}(1)$
(by (a)). Thus, $X_{n}X_{n}=o_{p}(n^{-1/2})$.
\end{proof}
Using this lemma, I establish rate conditions for the semiparametric
efficiency of $\eta_{k}=\tau_{k-1}-\Delta_{k-1}-\pi_{k}\tau_{k}.$
We can analyze the asymptotic behavior of $\hat{\eta}_{k}=\hat{\tau}_{k-1}-\hat{\Delta}_{k-1}-\hat{\pi}_{k}\hat{\tau}_{k}$
via a distributional expansion of each plug-in estimator:

\begin{align*}
\hat{\eta}_{k} & =\hat{\tau}_{k-1}-\hat{\Delta}_{k-1}-\hat{\pi}_{k}\hat{\tau}_{k}\\
 & =(\tau_{k-1}+\mathbb{P}_{n}[\tau_{k-1}^{\text{EIF}}]+\tau_{k-1}^{\text{EP}}+\tau_{k-1}^{\text{R2}}]-(\Delta_{k-1}+\mathbb{P}_{n}[\Delta_{k-1}^{\text{EIF}}]+\Delta_{k-1}^{\text{EP}}+\Delta_{k-1}^{\text{R2}}]\\
 & \quad-[\pi_{k}+\mathbb{P}_{n}[\pi_{k}^{\text{EIF}}]+\pi_{k}^{\text{EP}}+\pi_{k}^{\text{R2}}][\tau_{k}+\mathbb{P}_{n}[\tau_{k}^{\text{EIF}}]+\tau_{k}^{\text{EP}}+\tau_{k}^{\text{R2}}]\\
 & =(\tau_{k-1}+\mathbb{P}_{n}[\tau_{k-1}^{\text{EIF}}]+o_{p}(n^{-1/2})+\tau_{k-1}^{\text{R2}}]-(\Delta_{k-1}+\mathbb{P}_{n}[\Delta_{k-1}^{\text{EIF}}]+o_{p}(n^{-1/2})+\Delta_{k-1}^{\text{R2}}]\\
 & \quad-[\pi_{k}+\mathbb{P}_{n}[\pi_{k}^{\text{EIF}}]+o_{p}(n^{-1/2})+\pi_{k}^{\text{R2}}][\tau_{k}+\tau_{k}^{\text{EIF}}+o_{p}(n^{-1/2})+\tau_{k}^{\text{R2}}]\\
 & =[\tau_{k-1}-\Delta_{k-1}-\pi_{k}\tau_{k}]+\mathbb{P}_{n}[(\tau_{k-1}-\Delta_{k-1}-\pi_{k}\tau_{k})^{\text{EIF}}]\\
 & \quad+\tau_{k-1}^{\text{R2}}+\Delta_{k-1}^{\text{R2}}+\pi_{k}^{\text{R2}}+\tau_{k}^{\text{R2}}+O_{p}(n^{-1/2})O_{p}(n^{-1/2})+O_{p}(n^{-1/2})o_{p}(n^{-1/2})+o_{p}(n^{-1})+o_{p}(n^{-1/2})\\
 & =[\tau_{k-1}-\Delta_{k-1}-\pi_{k}\tau_{k}]+\mathbb{P}_{n}[(\tau_{k-1}-\Delta_{k-1}-\pi_{k}\tau_{k})^{\text{EIF}}]\\
 & \quad+\tau_{k-1}^{\text{R2}}+\Delta_{k-1}^{\text{R2}}+\pi_{k}^{\text{R2}}+\tau_{k}^{\text{R2}}+o_{p}(n^{-1/2}),
\end{align*}

where the penultimate equality follows from Theorem \ref{prop:eif-cov-theta},
and the final equality follows from Lemma \ref{lem:1}.

Thus, for any $k\in\{1,\dots,K\}$, $\hat{\eta}_{k}=\hat{\tau}_{k-1}-\hat{\Delta}_{k-1}-\hat{\pi}_{k}\hat{\tau}_{k}$
is semiparametric efficient if $\tau_{k-1}^{\text{R2}}+\Delta_{k-1}^{\text{R2}}+\pi_{k}^{\text{R2}}+\tau_{k}^{\text{R2}}=o_{p}(n^{-1/2})$. 

For the continuation terms ($\theta_{k}=(\Pi_{j=1}^{k}\pi_{j})\Delta_{k}+(\Pi_{j=1}^{k-1}\pi_{j})\eta_{k}$),
I proceed by induction. Let $\hat{\Delta}_{*}=\Delta_{*}+\mathbb{P}_{n}[\Delta_{*}^{\text{EIF}}]+\Delta_{*}^{\text{EP}}+\Delta_{*}^{\text{R2}}$
and $\hat{\eta}_{*}=\eta_{*}+\mathbb{P}_{n}[\eta_{*}^{\text{EIF}}]+\eta_{*}^{\text{EP}}+\eta_{*}^{\text{R2}}$
be asymptotically linear, where $*\in\{1,\dots K\}$. For $k=1$,
we can asymptotically expand $\hat{\pi}_{1}\hat{\Delta}_{^{*}}$ as

\begin{align*}
\hat{\pi}_{1}\hat{\Delta}_{*} & =(\pi_{1}+\mathbb{P}_{n}[\pi_{1}^{\mathrm{EIF}}]+\pi_{1}^{\mathrm{EP}}+\pi_{1}^{\mathrm{R}2})(\Delta_{*}+\mathbb{P}_{n}[\Delta_{*}^{\text{EIF}}]+\Delta_{*}^{\text{EP}}+\Delta_{*}^{\text{R2}})\\
 & =(\pi_{1}+\mathbb{P}_{n}[\pi_{1}^{\mathrm{EIF}}]+o_{p}(n^{-1/2})+\pi_{1}^{\mathrm{R}2})(\Delta_{*}+\mathbb{P}_{n}[\Delta_{*}^{\mathrm{EIF}}]+o_{p}(n^{-1/2})+\Delta_{*}^{\mathrm{R}2})\\
 & =\pi_{1}\Delta_{*}+\mathbb{P}_{n}[\pi_{1}^{\mathrm{EIF}}\Delta_{*}+\Delta_{*}^{\mathrm{EIF}}\pi_{1}]+\pi_{1}^{\mathrm{R}2}+\Delta_{*}^{\mathrm{R}2}+o_{p}(n^{-1/2})\quad\text{(by Lemma \ref{lem:1}})\\
 & =\sum_{j=1}^{k}\pi_{j}\Delta_{*}+\mathbb{P}_{n}[(\sum_{j=1}^{k}\pi_{j}\Delta_{*})^{\text{EIF}}]+\Delta_{*}^{\mathrm{R}2}+\sum_{j=1}^{k}\pi_{j}^{\mathrm{R}2}+o_{p}(n^{-1/2})\quad\text{(by Lemma }\ref{lem:1}),
\end{align*}

and expand $(\Pi_{j=1}^{k-1}\pi_{j})\eta_{*}$, similarly, as

\begin{align*}
\sum_{j=1}^{k-1}\hat{\pi}_{j}\hat{\eta}_{*} & =\eta_{*}+\mathbb{P}_{n}[\eta_{*}{}^{\text{EIF}}]+\tau_{*-1}^{\text{R2}}+\Delta_{*-1}^{\text{R2}}+\pi_{*}^{\text{R2}}+\tau_{*}^{\text{R2}}+o_{p}(n^{-1/2})\\
 & =\sum_{j=1}^{k-1}\pi_{j}\eta_{*}+\mathbb{P}_{n}[(\sum_{j=1}^{k-1}\pi_{j}\eta_{*})^{\text{EIF}}]\\
 & \quad+\tau_{*-1}^{\text{R2}}+\Delta_{*-1}^{\text{R2}}+\pi_{*}^{\text{R2}}+\tau_{*}^{\text{R2}}+\sum_{j\in\{1\dots,k^{*}-1\}:j\neq*}\pi_{j}^{\text{R2}}+o_{p}(n^{-1/2}),
\end{align*}

following a similar logic to above. Now, assume that, for $k^{*}\in\{1,\dots,K\}$,
$(\Pi_{j=1}^{k^{*}}\hat{\pi}_{j})\hat{\Delta}_{*}=\Delta_{k}\prod_{j=1}^{k^{*}}\pi_{j}+\mathbb{P}_{n}[(\Delta_{*}\prod_{j=1}^{k^{*}}\pi_{j})^{\text{EIF}}]+\Delta_{*}^{\mathrm{R}2}+\sum_{j=1}^{k^{*}}\pi_{j}^{\mathrm{R}2}+o_{p}(n^{-1/2})$
and, further, that $(\Pi_{j=1}^{k^{*}-1}\hat{\pi}_{j})\hat{\eta}_{*}=\Pi_{j=1}^{k^{*}-1}\pi_{j}\eta_{*}+\mathbb{P}_{n}[(\Pi_{j=1}^{k^{*}-1}\pi_{j}\eta_{*})^{\text{EIF}}]+\tau_{*-1}^{\text{R2}}+\Delta_{*-1}^{\text{R2}}+\pi_{*}^{\text{R2}}+\tau_{*}^{\text{R2}}+\sum_{j\in\{1\dots,k^{*}-1\}:j\neq*}\pi_{j}^{\text{R2}}+o_{p}(n^{-1/2})$.
Then, by induction, we have that

\begin{align*}
(\Pi_{j=1}^{k^{*}+1}\hat{\pi})\hat{\Delta}_{*} & =\bigg[(\pi_{k^{*}+1}+\mathbb{P}_{n}[\pi_{k^{*}+1}^{\mathrm{EIF}}]+o_{p}(n^{-1/2})+\pi_{k^{*}+1}^{\mathrm{R}2})\bigg]\\
 & \quad\bigg[\Delta_{*}\prod_{j=1}^{k^{*}}\pi_{j}+\mathbb{P}_{n}[(\Delta_{*}\sum_{j=1}^{k^{*}}\pi_{j})^{\text{EIF}}]+\Delta_{*}^{\mathrm{R}2}+\sum_{j=1}^{k^{*}}\pi_{j}^{\mathrm{R}2}+o_{p}(n^{-1/2})\bigg]\\
 & =\Delta_{*}\prod_{j=1}^{k^{*}+1}\pi_{j}+\mathbb{P}_{n}[(\Delta_{*}\prod_{j=1}^{k^{*}+1}\pi_{j})^{\text{EIF}}]+\Delta_{*}^{\mathrm{R}2}\\
 & \quad+\sum_{j=1}^{k^{*}+1}\pi_{j}^{\mathrm{R}2}+O_{p}(n^{-1/2})O_{p}(n^{-1/2})\\
 & \quad+O_{p}(n^{-1/2})o_{p}(n^{-1/2})+o_{p}(n^{-1})+o_{p}(n^{-1/2})\\
 & =\Delta_{*}\sum_{j=1}^{k^{*}+1}\pi_{j}+\mathbb{P}_{n}[(\Delta_{*}\prod_{j=1}^{k^{*}+1}\pi_{j})^{\text{EIF}}]+\Delta_{*}^{\mathrm{R}2}+\sum_{j=1}^{k^{*}+1}\pi_{j}^{\mathrm{R}2}+o_{p}(n^{-1/2}),
\end{align*}

and that
\begin{align*}
(\Pi_{j=1}^{k^{*}}\hat{\pi}_{j})\hat{\Delta}_{*} & =\bigg[(\pi_{k^{*}+1}+\mathbb{P}_{n}[\pi_{k^{*}+1}^{\mathrm{EIF}}]+o_{p}(n^{-1/2})+\pi_{k^{*}+1}^{\mathrm{R}2})\bigg]\\
 & \quad\bigg[\Pi_{j=1}^{k^{*}-1}\pi_{j}\eta_{*}+\mathbb{P}_{n}[(\Pi_{j=1}^{k^{*}-1}\pi_{j}\eta_{*})^{\text{EIF}}]+\tau_{*-1}^{\text{R2}}+\Delta_{*-1}^{\text{R2}}+\pi_{*}^{\text{R2}}+\tau_{*}^{\text{R2}}+\sum_{j\in\{1\dots,k^{*}-1\}:j\neq*}\pi_{j}+o_{p}(n^{-1/2})\bigg]\\
 & =\Pi_{j=1}^{k^{*}}\pi_{j}\eta_{*}+\mathbb{P}_{n}[(\Pi_{j=1}^{k^{*}}\pi_{j}\eta_{*})^{\text{EIF}}]+\tau_{*-1}^{\text{R2}}+\Delta_{*-1}^{\text{R2}}+\pi_{*}^{\text{R2}}+\tau_{*}^{\text{R2}}+\sum_{j\in\{1\dots,k^{*}\}:j\neq*}\pi_{j}^{\text{R2}}+o_{p}(n^{-1/2})
\end{align*}

It follows that, for any $k\in\{1,\dots,K\}$,

\[
(\Pi_{j=1}^{k}\hat{\pi}_{j})\hat{\Delta}_{k}=\Delta_{k}\sum_{j=1}^{k}\pi_{j}+\Pi_{j=1}^{k-1}\pi_{j}\eta_{k}+\mathbb{P}_{n}[(\Delta_{k}\sum_{j=1}^{k}\pi_{j}+\Pi_{j=1}^{k-1}\pi_{j}\eta_{k})^{\text{EIF}}]+\sum_{j=k-1}^{k}\Delta_{j}^{\mathrm{R}2}+\sum_{j=k-1}^{k}\tau_{j}^{\mathrm{R}2}+\sum_{j=1}^{k}\pi_{j}^{\mathrm{R}2}.
\]

Thus, $\hat{\theta}_{k}=(\Pi_{j=1}^{k}\hat{\pi}_{j})\hat{\Delta}_{k}+(\Pi_{j=1}^{k-1}\hat{\pi}_{j})\hat{\eta}_{k}$
is semiparametric efficient if $\sum_{j=k-1}^{k}\Delta_{j}^{\mathrm{R}2}+\sum_{j=k-1}^{k}\tau_{j}^{\mathrm{R}2}+\sum_{j=1}^{k}\pi_{j}^{\mathrm{R}2}=o(n^{-1/2})$.
Theorem 3.2 then follows immediately by recognizing the rate conditions
required for each of the constituent functionals of $(\pi_{k},\Delta_{k},\tau_{k})$
to be semiparametric efficient.

\newpage{}

\section{Derivation of RWR procedures}

For simplicity, throughout the following I let $Z_{0}=X$ and $M_{0}=A$.
I assume the following linear specification of the outcome model:

\begin{equation}
\begin{aligned}\mathbb{E}[Y\mid\overline{Z}_{k},A,\overline{M}_{k}] & =\beta_{k,0}+c_{k,0}A+\sum_{j=1}^{k}\beta_{k,j}M_{j}+\eta_{k,1}^{\top}X^{\perp}+c_{k,1}AX^{\perp}+\sum_{j=1}^{k-1}\eta_{k,j}^{T}M_{j}X^{\perp}+\sum_{j=1}^{k}\gamma_{k,j}^{T}Z_{j}^{\perp}\\
 & \quad+\sum_{j=1}^{k-1}M_{j}\sum_{l=1}^{j}\xi_{k,k,l}^{\top}Z_{l}^{\perp},
\end{aligned}
\label{eq:rwr-outcome-proof}
\end{equation}

where $Z_{k}^{\perp}=Z_{k}-\mathbb{E}[Z_{k}|\overline{Z}_{k-1},M_{k-1}=1_{k-1}]$,
$\forall k\in[0,\dots,K]$. In the following derivations, I use the
fact that, $\forall k\in\{1,\dots,K\},$

\begin{align*}
\int z_{k}^{\perp}dP(z_{k}|\overline{z}_{k-1},m_{k-1}=1)\\
=\mathbb{E}[Z_{k}-\mathbb{E}[Z_{k}|\overline{z}_{k-1},m_{k-1}=1]|\overline{z}_{k-1},m_{k-1}=1]\\
=0.
\end{align*}

Letting $X=Z_{0}$, the above also implies that $\int z_{0}^{\perp}dP(z_{0})=\mathbb{E}[Z_{0}-\mathbb{E}[Z_{0}]]=0$.
Under sequential ignorability and assuming linearity of the outcome
with respect to all antecedent variables, we have that

\begin{align*}
\Delta_{k-1} & =\int\mathbb{E}[Y|\overline{M}_{k-1}=\overline{1}_{k-1},\overline{z}_{k},M_{k}=0]\prod_{j=0}^{k}dP(z_{j}|\overline{z}_{j-1},m_{j-1}=1)\\
 & -\int\mathbb{E}[Y|\overline{M}_{k-2}=\overline{1}_{k-2},\overline{z}_{k},M_{k-1}=0]\prod_{j=1}^{k}dP(z_{j}|\overline{z}_{j-1},m_{j-1}=1)\\
 & =\int\bigg[\beta_{k,0}+c_{k,0}+\sum_{j=1}^{k-1}\beta_{k,j}+\eta_{k,1}^{\top}X^{\perp}+c_{k,1}X^{\perp}+\sum_{j=1}^{k-2}\eta_{k,j}^{T}X^{\perp}+\sum_{j=1}^{k}\gamma_{k,j}^{T}Z_{j}^{\perp}+\sum_{j=1}^{k-2}\sum_{l=1}^{j}\xi_{k,k,l}^{\top}Z_{l}^{\perp})\\
 & -(\beta_{k,0}+c_{k,0}+\sum_{j=1}^{k-2}\beta_{k,j}M_{j}+\eta_{k,1}^{\top}X^{\perp}+c_{k,1}AX^{\perp}+\sum_{j=1}^{k-2}\eta_{k,j}^{T}X^{\perp}+\sum_{j=1}^{k}\gamma_{k,j}^{T}Z_{j}^{\perp}+\sum_{j=1}^{k-3}\sum_{l=1}^{j}\xi_{k,k,l}^{\top}Z_{l}^{\perp})\bigg]\\
 & \prod_{j=0}^{k}dP(z_{j}|\overline{z}_{j-1},m_{j-1}=1)\\
 & =\beta_{k,k-1}.
\end{align*}

Further, for $\tau_{k}\forall k\in\{1,\dots,K\}$ we have that

\begin{align*}
\tau_{k} & =\int\mathbb{E}[Y|A=1,\overline{M}_{k}=\overline{1}_{k},\overline{z}_{k}]\prod_{j=0}^{k}dP(z_{j}|\overline{z}_{j-1},m_{j-1}=1)\\
 & -\int\mathbb{E}[Y|A=1,\overline{M}_{k-1}=\overline{1}_{k-1},\overline{z}_{k},M_{k}=0]\prod_{j=0}^{k}dP(z_{j}|\overline{z}_{j-1},m_{j-1}=1)\\
 & =\int\bigg[(\beta_{k,0}+c_{k,0}+\sum_{j=1}^{k}\beta_{k,j}+\eta_{k,1}^{\top}X^{\perp}+c_{k,1}X^{\perp}+\sum_{j=1}^{k-1}\eta_{k,j}^{T}X^{\perp}+\sum_{j=1}^{k}\gamma_{k,j}^{T}Z_{j}^{\perp}+\sum_{j=1}^{k-1}\sum_{l=1}^{j}\xi_{k,j,l}^{\top}Z_{l}^{\perp})\\
 & -(\beta_{k,0}+c_{k,0}+\sum_{j=1}^{k-1}\beta_{k,j}+\eta_{k,1}^{\top}X^{\perp}+c_{k,1}X^{\perp}+\sum_{j=1}^{k-2}\eta_{k,j}^{T}X^{\perp}+\sum_{j=1}^{k}\gamma_{k,j}^{T}Z_{j}^{\perp}+\sum_{j=1}^{k-2}\sum_{l=1}^{j}\xi_{k,j,l}^{\top}Z_{l}^{\perp})\bigg]\\
 & \prod_{j=0}^{k}dP(z_{j}|\overline{z}_{j-1},m_{j-1}=1)\\
 & =\beta_{k,k}.
\end{align*}
Finally, I assume that

\[
\begin{aligned}\mathbb{E}[M_{k+1}\mid A=1,\overline{Z}_{k},\overline{M}_{k}=\overline{1}_{k}] & =\theta_{0}+\sum_{k=0}^{k}\delta_{k+1}^{T}Z_{k}^{\perp}\end{aligned}
.
\]

Then:

\begin{align*}
\mathbb{E}[M(\overline{1}_{k+1})] & =\theta_{0}+\int\bigg[\sum_{k=0}^{k}\delta_{k+1}^{T}Z_{k}^{\perp}\prod_{j=0}^{k}dP(z_{j}|\overline{z}_{j-1},m_{j-1}=1)\bigg]\\
 & =\theta_{0}.
\end{align*}

\newpage{}

\section{Derivation of bias formulae for sensitivity analysis}

In this section, I derive the bias formulae for the set ($\tau_{k},\Delta_{k}$)
for all $k\in[K]$, where $K$ denotes the number of mediators considered
in the decomposition, under a sequence of simplifying assumptions.
Assume first that we have a binary unobserved confounder, $U$, for
the treatment-outcome relationship. Assuming that $\alpha_{0}=\mathbb{E}[Y|x,a,U=1]-\mathbb{E}[Y|x,a,U=0]$
does not depend on $x$ or $a$, and further that $\beta_{0}=\text{Pr}[U=1|x,A=1]-\text{Pr}[U=1|x,A=0]$
does not depend on $x$, for $\tau_{0}=\mathbb{E}[Y(1)-Y(0)]\triangleq\text{ATE}$,
I then have that $\text{bias}(\tau_{0})=\alpha\beta$ \citep{VanderWeele2011}.

Next, consider an unobserved confounder, $U_{k}$ that affects both
$M_{k}$ and $Y$ for any $k\in\{1,\dots,K\}$. Then, under a weaker
iteration of Assumption \ref{assu:SI} (Sequential Ignorability),
i.e.,

$Y(\overline{1}_{k},m_{k})\perp\!\!\!\perp(A,\overline{M}_{k})|X,A,U_{k},\overline{Z}_{k},\overline{M}_{k-1}\forall k\in[K],$$\mathbb{E}[Y(\overline{1}_{k},m_{k})]$
is identified as

\[
\mathbb{E}[Y(\overline{1}_{k},m_{k})]=\int_{x}\int_{\overline{z}_{k}}\mathbb{E}[Y|x,\overline{z}_{k},\overline{1}_{k},m_{k},u_{k}]\big[dP(u_{k}|x,\overline{z}_{k},\overline{1}_{k-1})\prod_{j=1}^{k}dP(z_{j}|x,\overline{z}_{j-1},\overline{1}_{j-1})\big]dP(x).
\]

By contrast, under Assumption \ref{assu:SI}, my estimator of $\mathbb{E}[Y(\overline{1}_{k},m_{k})]$,
$\tilde{\mathbb{E}}[Y(\overline{1}_{k},m_{k})]$, converges to 

\[
\tilde{\mathbb{E}}[Y(\overline{1}_{k},m_{k})]=\int_{x}\int_{\overline{z}_{k}}\mathbb{E}[Y|x,\overline{z}_{k},\overline{1}_{k},m_{k},u_{k}]\big[dP(u_{k}|x,\overline{z}_{k},\overline{1}_{k})\prod_{j=1}^{k}dP(z_{j}|x,\overline{z}_{j-1},\overline{1}_{j-1})\big]dP(x).
\]
I invoke the following three assumptions: (Assumption $A_{k}$) $\alpha_{k}=\mathbb{E}[Y|x,\overline{z}_{k},\overline{1}_{k},m_{k},U_{k}=1]-\mathbb{E}[Y|x,\overline{z}_{k},\overline{1}_{k},m_{k},U_{k}=0]$
does not depend on ($x,\overline{z}_{k},\overline{1}_{k},m_{k})$;
(Assumption $B_{k}$) $\beta_{k}=\text{Pr}[U_{k}=1|x,\overline{z}_{k},\overline{1}_{k},m_{k}]-\text{Pr}[U_{k}=1|x,\overline{z}_{k},\overline{1}_{k}]$
does not depend on ($x,\overline{z}_{k}$); Assumption ($C$) $U_{k}$
is binary. Taking the difference between the quantities in the above
two equations thus gives that, for any $m_{k}\in\{0,1\}$, for any
$k\in[K]$, we have that

\begin{align}
\text{bias}(\tilde{\mathbb{E}}[Y(\overline{1}_{k},m_{k})]) & =\int\big(\mathbb{E}[Y|x,\overline{z}_{k},\overline{1}_{k},m_{k},U_{k}=1]-\mathbb{E}[Y|x,\overline{z}_{k},\overline{1}_{k},m_{k},U_{k}=0]\big)\cdot\nonumber \\
 & \quad\big(\text{Pr}[U_{k}=1|x,\overline{z}_{k},\overline{1}_{k},m_{k}]-\text{Pr}[U_{k}=1|x,\overline{z}_{k},\overline{1}_{k}]\big)\prod_{j=1}^{k}dP(z_{j}|x,\overline{z}_{j-1},\overline{1}_{j-1})dP(x).\label{eq:bias_1kmk}
\end{align}

Consider first $\text{bias}(\Delta_{k-1})=\text{bias}(\tilde{\mathbb{E}}[Y(\overline{1}_{k},0)-Y(\overline{1}_{k-1},0)])$.
Under mediator monotonicity (Assumption 1), I immediately have that
$\text{Pr}[U_{k}=1|x,\overline{z}_{k},\overline{1}_{k},m_{k}]-\text{Pr}[U_{k}=1|x,\overline{z}_{k},\overline{1}_{k}]$,
and thus that $\text{bias}(\Delta_{k-1})=\text{bias}(\tilde{\mathbb{E}}[Y(\overline{1}_{k},0)])$,
which can be written as

\begin{align*}
\text{bias}(\tilde{\mathbb{E}}[Y(\overline{1}_{k},0)]) & =\int\big(\mathbb{E}[Y|x,\overline{z}_{k},\overline{1}_{k},0,U_{k}=1]-\mathbb{E}[Y|x,\overline{z}_{k},\overline{1}_{k},0,U_{k}=0]\big)\\
 & \quad\big(\text{Pr}[U_{k}=1|x,\overline{z}_{k},\overline{1}_{k},0]-\text{Pr}[U_{k}=1|x,\overline{z}_{k},\overline{1}_{k}]\big)\prod_{j=1}^{k}dP(z_{j}|x,\overline{z}_{j-1},\overline{1}_{j-1})dP(x)\\
 & =\int\big(\mathbb{E}[Y|x,\overline{z}_{k},\overline{1}_{k},0,U_{k}=1]-\mathbb{E}[Y|x,\overline{z}_{k},\overline{1}_{k},0,U_{k}=0]\big)\cdot\\
 & \quad\big(\text{Pr}[U_{k}=1|x,\overline{z}_{k},\overline{1}_{k},0]-\big(\text{Pr}[U_{k}=1|x,\overline{z}_{k},\overline{1}_{k+1}]\text{Pr}[M_{k}=1|x,\overline{z}_{k},\overline{1}_{k}]\\
 & +\text{Pr}[U_{k}=1|x,\overline{z}_{k},\overline{1}_{k},0]-\text{Pr}[U_{k}=1|x,\overline{z}_{k},\overline{1}_{k},0]\text{Pr}[M_{k}=1|x,\overline{z}_{k},\overline{1}_{k}]\big)\big)\\
 & \quad\prod_{j=1}^{k}dP(z_{j}|x,\overline{z}_{j-1},\overline{1}_{j-1})dP(x)\\
 & =-\int\big(\mathbb{E}[Y|x,\overline{z}_{k},\overline{1}_{k},0U_{k}=1]-\mathbb{E}[Y|x,\overline{z}_{k},\overline{1}_{k},0_{k},U_{k}=0]\big)\cdot\\
 & \quad\big(\big(\text{Pr}[U_{k}=1|x,\overline{z}_{k},\overline{1}_{k+1}]-\text{Pr}[U_{k}=1|x,\overline{z}_{k},\overline{1}_{k},0]\big)\text{Pr}[M_{k}=1|x,\overline{z}_{k},\overline{1}_{k}]\big)\\
 & \quad\prod_{j=1}^{k}dP(z_{j}|x,\overline{z}_{j-1},\overline{1}_{j-1})dP(x).
\end{align*}

Next, applying assumptions $A_{k}$ and $B_{k}$, we can write

\[
\text{bias}(\tilde{\mathbb{E}}[Y(\overline{1}_{k+1},0)])=-\alpha_{k}\beta_{k}\int_{x}\int_{\overline{z}_{k}}\text{Pr}[M_{k}=1|x,\overline{z}_{k},\overline{1}_{k}]\prod_{j=1}^{k}dP(z_{j}|x,\overline{z}_{j-1},\overline{1}_{j-1})dP(x).
\]

Second, to compute $\text{bias}(\tau_{k})=\text{bias}(\mathbb{E}[Y(\overline{1}_{k+1})-Y(\overline{1}_{k},0)])$
for any $k\in\{1,\dots K\}$, beginning with Equation \ref{eq:bias_1kmk}
and applying assumptions $A_{k}$ and $B_{k}$ once again, we have
that

\begin{align*}
\text{bias}(\tau_{k}) & =\alpha_{k}\beta_{k}.
\end{align*}

\end{document}